\documentclass[11pt]{article}
 
\usepackage{xr,epstopdf,soul,array,booktabs,amsmath,amssymb,chemarr,subfigure,amsthm,authblk,mathrsfs}
\usepackage{graphicx}

\usepackage{tikz}

\usetikzlibrary{arrows}

\newcommand{\rank}{\mbox{rank}}
 \newtheorem{theorem}{Theorem}
  
  \newtheorem{remark}{Remark}
    \newtheorem{proposition}[theorem]{Proposition}
    \newtheorem{corollary}[theorem]{Corollary}
    \newtheorem{lemma}[theorem]{Lemma}
        
        \def\@seccntformat#1{\csname the#1\endcsname 33 \quad}

\topmargin -0.5in
\textheight 9.0in
\oddsidemargin 0.25in
\evensidemargin 0.25in
\textwidth 6.25in
\parskip=5pt plus 1pt minus 1pt
\parindent0pt
\definecolor{darkred}{rgb}{0.7,0,0.0}
\newcommand{\red}[1]{\textcolor{black}{#1}}
\newcommand{\emr}[1]{\textcolor{black}{#1}}
\newcommand{\rv}[1]{\textcolor{black}{#1}}


\title{\Huge Multi-modality in gene regulatory networks with slow promoter kinetics}

\author[a]{M. Ali Al-Radhawi}
\author[a]{Domitilla Del Vecchio}
\author[b,c]{Eduardo D. Sontag}

\affil[a]{Department of Mechanical Engineering, Massachusetts Institute of Technology, Cambridge, MA 02139-4307, USA.}
\affil[b]{Departments of Bioengineering and of Electrical and Computer Engineering, Northeastern University, Boston, MA 02115, USA. Email: \texttt{sontag@sontaglab.org}}
\affil[c]{Harvard Program in Therapeutic Science, Harvard Medical School, Boston, MA 02115, USA. }

\date{This manuscript was compiled on \today}

\begin{document}

\maketitle

 \begin{abstract}
{Phenotypical variability in the absence of genetic variation often reflects
complex energetic landscapes associated with underlying gene regulatory
networks (GRNs).  In this view, different phenotypes are associated with alternative
states of complex nonlinear systems: stable attractors in deterministic models
or modes of stationary distributions in stochastic descriptions. We provide
theoretical and practical characterizations of these landscapes, specifically
focusing on stochastic slow promoter kinetics, a time scale relevant when
transcription factor binding and unbinding are affected by epigenetic
processes like DNA methylation and chromatin remodeling.  In this case,
largely unexplored except for numerical simulations, adiabatic approximations
of promoter kinetics are not appropriate.  In contrast to the existing
literature, we provide rigorous analytic characterizations of multiple modes.
A general formal approach gives insight into the influence of parameters and
the prediction of how changes in GRN wiring, for example
through mutations or artificial interventions, impact the possible number,
location, and likelihood of alternative states.  We adapt tools from the
mathematical field of singular perturbation theory to represent
stationary distributions of Chemical Master Equations for GRNs as
mixtures of Poisson distributions and obtain explicit formulas for the
locations and probabilities of metastable states as a function of the
parameters describing the system.  As illustrations, the theory is used to
tease out the role of cooperative binding in stochastic models in comparison
to deterministic models, and applications are given to various model systems,
such as toggle switches in isolation or in communicating populations, \rv{a synthetic oscillator} and a trans-differentiation network.
} \\ \strut\\
\textbf{Keywords:} Gene Regulatory Networks $|$ Multi-modality $|$ Singular Perturbations $|$ Slow gene binding $|$ Slow Promoter Kinetics $|$ Markov chains $|$ Master Equation $|$ Cooperativity
\end{abstract}

A gene regulatory network (GRN) consists of a collection of genes that
transcriptionally regulate each other through their expressed proteins.
Through these interactions, including positive and negative feedback
loops, GRNs play a central role in the overall control of cellular life
\cite{alon,naturereview,ddv_book}.
The behavior of such networks is stochastic due to the random nature of
transcription, translation, and post-translational protein modification
processes, as well as the varying availability of cellular components
that are required for gene expression.
Stochasticity in GRNs is a source of phenotypic variation among genetically
identical (clonal) populations of cells or even organisms
\cite{munsky12}, and is considered to be one of the mechanisms facilitating
cell differentiation and organism development
\cite{zhou11}.
This phenotypic variation may also confer a population  {an} advantage when facing fluctuating
environments
\cite{arkin98,stamatakis09}.
Stochasticity due to randomness in cellular components and transcriptional
and translational processes have been thoroughly researched
\cite{kepler01,kaern05}.

The fast equilibration of random processes sometimes allows stochastic
behavior to be ``averaged out'' through the statistics of large numbers
at an observational time-scale, especially when genes and proteins are found in
large copy numbers.
In those cases, an entire GRN, or portions of it, might be adequately
described by a deterministic model.  Stochastic effects that occur
at a slower time scale, however, may render a deterministic analysis inappropriate
and might alter the steady-state behavior of the system.
This paper addresses a central question about GRNs: how many different
``stable steady states'' can such a system potentially settle upon, and how does
stochasticity, or lack thereof, affect the answer? 
To answer this question, it is necessary to understand the possibly different
predictions that follow from stochastic versus deterministic models of gene
expression.
Indeed, qualitative conclusions regarding the steady-state behavior of gene
expression levels in a GRN are critically dependent on whether a
deterministic or stochastic model is used (see \cite{hahl16} for a recent
review).  {It follows that the mathematical characterization of phenomena
such as non-genetic phenotype heterogeneity, switching
behavior in response to environmental conditions, and
lineage conversion in cells, will depend on the choice of the model.}


In order to make the discussion precise, we must clarify the meaning of the
term ``stable steady state'' in both the deterministic and stochastic
frameworks.
Deterministic models are employed when molecular concentrations are large, or
if stochastic effects can be averaged out.
They consist of systems of
ordinary differential equations describing averaged-out approximations of the
interactions between the various molecular species in the GRN under study.
For these systems, steady states are the zeroes of the vector field defining
the dynamics, and ``stable'' states are those that are locally asymptotically
stable.  The number of such stable states quantifies the degree of
``multi-stability''  of the system.
Stochastic models of GRNs, in contrast, are based upon continuous-time Markov
chains which describe the random evolution of discrete
molecular count numbers.
Their long-term behavior is characterized by a stationary probability
distribution that describes the gene activity configurations and the protein numbers recurrently visited.
Under weak ergodicity assumptions, this stationary distribution is unique
\cite{norris}, so
multi-stability in the sense of multiple steady states of the Markov chain
is not an interesting notion.
A biologically meaningful notion of ``multi-stability'' in this context, and
the one that we employ in our study, is ``multi-modality,'' meaning the
existence of multiple modes (local maxima) of stationary distributions.

Intuitively, given a multi-stable deterministic system, adding noise may help
to ``shake'' states, dislodging them from one basin of attraction of one
stable state, and sending them into {the} basin of attraction of another
stable state.

Therefore, in the long run, we are bound to see the various deterministic
stable steady states with higher probability, that is to say, we expect that
they will appear as modes in the stationary distribution of the Markov chain of
the associated stochastic model.
This is indeed a typical way in which modes can be interpreted as
corresponding to stable states, with stochasticity responsible for the
transitions between multiple stable states \cite{gardiner}.
However, new modes could arise in the stationary distribution of a
stochastic system besides those associated with stable states of the
deterministic model, and this can occur even if the deterministic model had
just a single stable state.
 This phenomenon of ``stochastic multi-stability'' has attracted considerable
attention lately, both in theoretical and experimental work
\cite{kepler01,kaern05,eldar10,symmons16,bressloff17}.
Stochastic multi-stability has been linked to behaviors such as
transcriptional bursting/pulsing \cite{pirone04,raj06} {and} GRN's binary response
\cite{karmakar04}{.  Furthermore, }multi-state gene transcription \cite{munsky12}
has been used to propose explanations for phenotypic heterogeneity in
isogenic populations.

A common assumption in gene {regulation} models is that transcription factor (TF) to gene binding/unbinding is
significantly faster than the rate of protein production and decay
\cite{alon}.
However, it has been proposed \cite{kaern05,shahrezaei08} that the emergence
of new modes in stochastic systems in addition to those that arise from the
deterministic model
might be due to low gene copy numbers and
\emph{slow} promoter kinetics, which means that the process of binding and unbinding of TFs to promoters is slow.
Thus, the emergence of multi-modality may be due to the slow TF-gene binding and unbinding. \rv{Already in prokaryotic cells, where DNA is more accessible to TF binding than in eukaryotic cells, some transcription factors can take several minutes to find their targets, comparable or even higher than the time required for gene expression \cite{tabaka13},\cite{potapov15}.}
{This is \rv{more} relevant in eukaryotic cells, in which transcriptional regulation is often mediated by an additional regulation layer dictated by DNA methylation and histone modifications, commonly referred to as chromatin dynamics. }  For example, the presence of
nucleosomes makes binding sites less accessible to TFs
and therefore TF-gene binding/unbinding is modulated by the process of chromatin opening \rv{\cite{briegel96}},\cite{paldi03,kaern05,miller11,voss14}.
{DNA methylation, in particular,   has \red{also} been reported to
slow down TF-gene binding/unbinding \cite{yuan16}.}
Several experiments have consolidated the role of the aforementioned complex transcription processes
in slow promoter kinetics \cite{raj06,maheshri10,mariani10,yuan16}.

In summary, new modes may appear in the stationary distribution that do not
correspond to stable states in the deterministic model.  Conversely, multiple
steady states in the deterministic model may collapse, being ``averaged out''
by noise, with a single mode representing their mean.
It is a well-established fact that, in general, multi-stability of the
deterministic description of a biochemical network and multi-modality of the
associated stochastic model do not follow from each other \cite{ebeling79}.
This is especially true in low copy number regimes with slow promoter
kinetics. \red{Figure \ref{f.intro_example} gives two examples for the emergence of new modes due to slow promoter kinetics, and it shows that equilibria derived from the corresponding deterministic model do not provide relevant information on the number and locations of the modes.  }

\begin{figure}[t]
\subfigure[]{ \raisebox{0.6\height}{\includegraphics[width=0.075\textwidth]{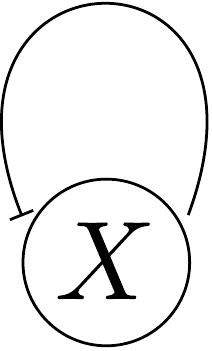}}}
\subfigure[]{\includegraphics[width=0.425\textwidth]{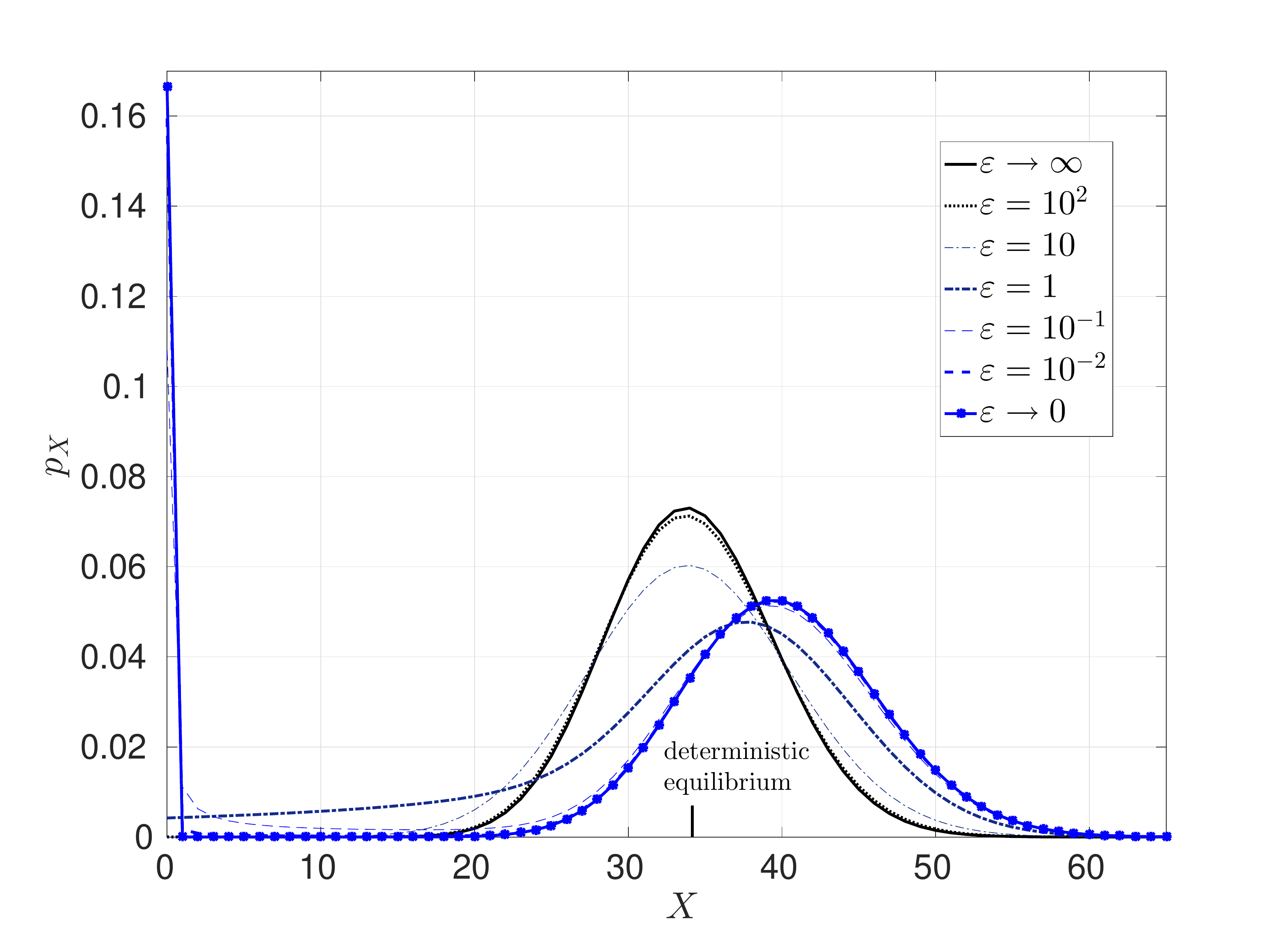}}
\subfigure[]{\raisebox{0.4\height}{\includegraphics[width=0.075\textwidth]{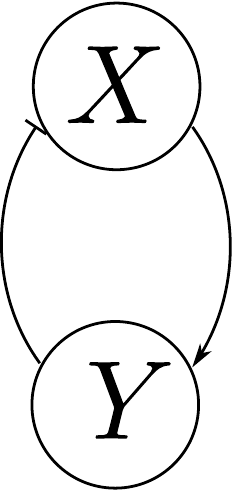}}}
\subfigure[]{\includegraphics[width=0.425\textwidth]{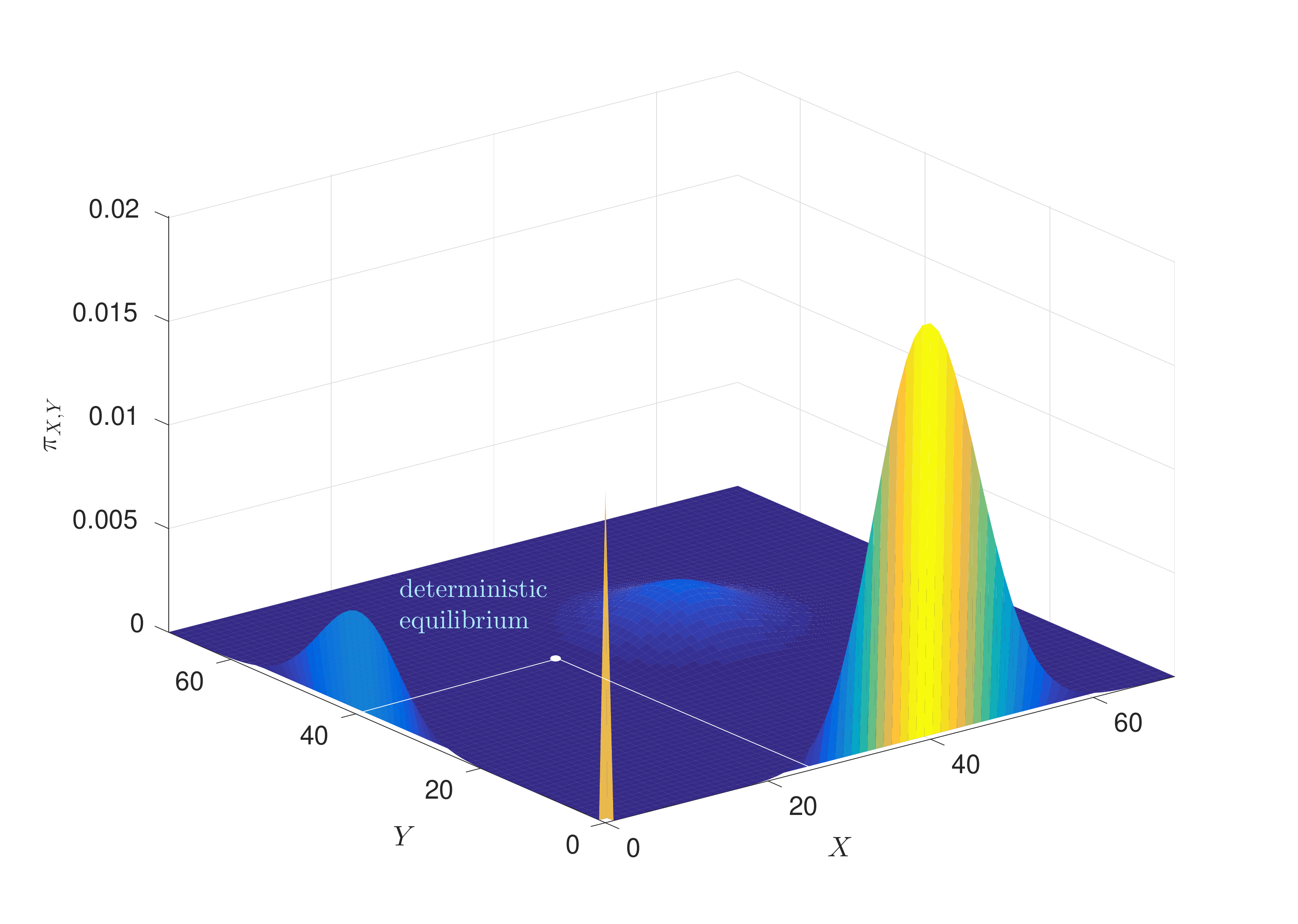}}
\caption{\textbf{Emergence of multi-modality due to slow promoter kinetics.} (a) A \red{diagram of a} self-repressing gene. (b) \red{The stationary probability distribution for different $\varepsilon$ which shows the} transition from fast promoter kinetics,  {i.e., $\varepsilon \to \infty $}, to slow promoter kinetics,  {i.e., $\varepsilon \to 0 $}, in a non-cooperative self-repressing gene.  {The stationary distribution is bimodal for small $\varepsilon$ and unimodal for large $\varepsilon$.} The deterministic equilibrium coincides with the fast kinetics mode,  {refer to SI-\S 5.1}. \red{The slow kinetic limit is calculated via \eqref{e.mixture_m}, while the fast kinetics limit is calculated by iterating the recurrence relation in Proposition SI-2. The remaining curves are computed by a finite projection solution \cite{khammash06} of the master equation.}   (c) A \red{diagram of} a repression-activation  {two-node network}. (d) Slow promoter kinetics gives rise to four modes while the deterministic model admits a unique stable equilibrium, \red{refer to SI-\S 5.2}. \red{The surface is plotted using \eqref{e.mixture_m}}. }

\label{f.intro_example}
\end{figure}

Here, we pursue a mathematical analysis of the role of slow promoter kinetics
in producing multi-modality in GRNs \red{and show analytically how the shape of the stationary distribution is dictated by key biochemical parameters}.
Previous studies of the chemical master equation (CME)
for single genes has already observed the emergence of bimodality with slow
TF-gene binding/unbinding \cite{hornos05,shahrezaei08,qian09,iyer09}. This phenomenon
was also studied by taking the limit of slow promot{e}r kinetics using the linear noise approximation \cite{grima14} or hybrid stochastic models  of gene expression \cite{potoyan15,liu15}. \red{The basic constitutive gene expression model (refer to \eqref{aut_swi})
has been validated for transcriptional bursting \cite{raj06}}.
However, and despite
its application relevance, mathematical analysis of the CME for multi-gene networks
with slow promoter kinetics has been missing, and only numerical solutions of the CME have been reported  \cite{feng12,chen16}.

{
In this work, an underlying theoretical contribution is the partitioning of the state space into weakly-coupled
ergodic classes \cite{norris} which, in the limit of slow binding/unbinding, results in the
reduction of the infinite-dimensional Markov chain into a finite-dimensional
chain whose states correspond to ``{promoter} states''.  In this limit, the
stationary distribution of the network can be expressed as a mixture
of Poisson distributions, each corresponding to conditioning
 the chain on a certain promoter configuration.
The framework proposed here enables us to analytically determine how the number of modes, their locations, and weights depend on the biophysical parameters.  Hence, the proposed framework can be applied to GRNs to predict the
different phenotypes that the network can exhibit with low gene copy numbers
and slow promoter kinetics.

The results are derived by introducing a new formalism to model GRNs with
arbitrary numbers of genes, based on continuous-time Markov chains. Then,
we analyze the stationary solution of the associated CME through a
systematic application of the method of singular perturbations \cite{campbell79}.}  {Specifically, we study
the slow promoter kinetics limit by letting the ratio of kinetic rate constants of the TF-gene binding/unbinding reactions with respect to protein reactions
approach zero}. \red{The stationary solution is computed by applying the method of
singular perturbations to the CME.}

In order to illustrate the practical significance of our results, we work out several
examples, some of which have not been studied before in the literature.  As a
first application, we discover that, with slow promoter kinetics, a self-regulating gene can
exhibit bimodality even with non-cooperative binding to the promoter site. 
We then investigate the role of cooperativity.
\emr{In contrast to} deterministic systems, we find that {cooperativity} does not change the number of modes.  Nevertheless, {cooperativity} adds extra
degrees of freedom by allowing the network to tune the relative weight of each
mode without changing its location.

As a second application, we revisit the classical toggle switch, under slow
TF-gene binding/unbinding.  It has been reported before that, with fast
TF-gene binding/unbinding, the
toggle switch with single-gene copies can be ``bistable'' without cooperative
binding \cite{lipshtat06}. We show that this can also happen with slow
prom{oter} kinetics, and{, moreover,} that a new mode having
both proteins {at} high copy numbers can
emerge. \emr{We provide a method to calculate the weight of each mode and show that the third mode is suppressed for sufficiently high kinetic rates for the dimerization reactions.}

A third application that we consider is a simplified model of
 {synchronization of communicating}
toggle switches. In bacterial populations, quorum sensing has been proposed
\cite{miller01} as a way for bacterial cells to broadcast their internal
states to other cells {in order} to facilitate synchronization. Quorum sensing
communication has been adopted also as a tool in synthetic biology
\cite{strogatz04,collins04}. Mathematical analysis of coupled toggle switches
designs usually employs deterministic models \cite{sontag16}. We study a
{simplified}
stochastic model of coupled toggle switches with slow promoter kinetics and
compare the resulting number of modes with deterministic equilibria.

Our final, and potentially most significant, application is motivated by cellular differentiation.
A well-known metaphor for cell lineage specification arose from the 1957
work of Waddington
\cite{waddington},
who imagined an ``epigenetic landscape'' with a series of branching valleys
and ridges depicting stable cellular states.
In that context, the emergence of new modes in cell fate circuits is often
interpreted as the creation of new valleys in the epigenetic landscape, and
(deterministic) multi-stability is employed to explain cellular
differentiation \cite{zhou11}.
However, an increasing number of studies have suggested stochastic
heterogeneous gene expression as a mechanism for differentiation
\cite{eldar10,balazsi11,norman15}.  Numerical analysis of the CME for the  canonical cell-fate circuit have shown the emergence of
new modes due to slow promoter kinetics in such models \cite{feng12,chu17}. \emr{This general category of cell-fate circuits includes pairs
such as PU.1:GATA1, Pax5:C/EBP$\rm\alpha$ {and} GATA3:T-bet
\cite{graf09}.
Cell fate circuits are characterized by TF cross-antagonism. However, their behavior is affected by the promoter configurations available for binding, the cooperativity index of the TFs, and the relative ratio of production rates. Hence, we study two models that differ in the aforementioned aspects and we highlight the differences between our findings and the behavior predicted by the corresponding deterministic model.  The first model employs independent cooperative binding. {We show that such a network can exhibit more than four modes. In contrast, the deterministic model predicts up to four modes only with cooperativity \cite{elife} .} The second network is a PU.1/GATA.1 network which employs non-cooperative binding and a restricted set of promoter configurations. The deterministic model is monostable, while the parameters of the stochastic model can be chosen to have additional modes including the cases of bistability and tristability.}

\rv{Although we formulate our study in terms of steady state probability distributions, one may equally well view our results as describing the typical dynamic behavior of realizations of the stochastic process.  These recapitulate the form of the steady state distributions: modes are reflected in metastable states along sample paths, states in which the system will stay for prolonged periods until switching to other states corresponding to alternative modes.  In the SI, we provide Monte-Carlo simulations showing such metastable behavior along sample paths.  We do so for the toggle switch as well as for a version of a well-studied genetic circuit \cite{elowitz00} which exhibits oscillatory behavior along sample paths even though the corresponding deterministic model cannot admit oscillations.}

\subsection*{The Reaction Network Structure}

In this paper, a GRN will be formally defined as
{a set of nodes (genes) that are connected with each other through regulatory interactions via the proteins that the genes express. The regulatory proteins are called \emph{transcription factors} (TFs). A TF regulates the expression of a gene by reversibly binding to the gene's promoter and by either enhancing expression or repressing it.}

The formalism we employ in order to describe GRNs at the elementary level is that of Chemical Reaction Networks (CRNs) \cite{erdi}.  A CRN consists of \emph{species} and \emph{reactions}, which we describe below.

\paragraph{Species:}
  The species in our context consist of promoter   configurations for the various genes
participating in the network, together with the respective TFs expressed from these genes and
some of their multimers. A  configuration of a promoter is  characterized by the possible locations and number of TFs bound to the promoter at a given time. If a promoter is expressed constitutively, then there are two   configurations {specifying} the expression activity state, active or inactive. A multimer is a compound consisting of a protein binding to itself several times. For instance, dimers and trimers are 2-mers and 3-mers, respectively. If a protein forms an $n^{\rm th}$-order multimer then we say that it has a cooperativity index of $n$. If species is denoted by $\rm X$, then its copy number is denoted by $X$.

 For simplicity we assume the following:
 \begin{enumerate}
 \item[\textbf{A1}] Each  {promoter} can have up to {two TFs binding to it.}
     \item[\textbf{A2}] Each {TF}  {is a single protein} that has a fixed cooperativity index, i.e, it cannot act as a TF with two different cooperativity indices.
  \item[\textbf{A3}] Each gene is present with only a \emph{single copy}.
  \end{enumerate}

   All the above assumptions can be relaxed. We make these assumptions only in order to simplify the notations and mathematical derivations.  \S SI-4,5 contain generalizations of the results to heterogenous TFs, and arbitrary numbers of gene copy numbers.

Consider  the $i^{\rm th}$ promoter. The expression rate of a gene is dependent on {the current configuration of its promoter}. We call the set of all possible such configurations the \emph{binding-site set} $B_i$. Each member of $B_i$ corresponds to a configuration \red{that translates into a specific species} $\mathrm D_j^i, j \in B_i$. If a promoter has just one or no regulatory binding sites, then we let $B_i=\{0,1\}$. Hence, the promoter configuration can be represented by \emph{two species}: the unbound species $\mathrm D_0^i$ and the bound species $\mathrm D_1^i$. \red{If the promoter has no binding sites then the promotor configuration species are interpreted as the inactive and active configurations, respectively.}   On the other hand, if the promoter has \emph{two binding} sites then $B_i=\{00,01,10,11\}$\footnote{We interpret the elements of the binding set as integers in binary representation.}. The first digit in a member of $B_i$ specifies whether the first binding site is occupied, and the second digit specifies the occupancy of the second binding site. Hence, the promoter configuration can be represented by four species \emph{$\mathrm D_{00}^i,\mathrm D_{10}^i,\mathrm D_{01}^i,\mathrm D_{11}^i$}. Note that in general we need to define $2^\kappa$ species for a promoter with $\kappa$ binding sites.

The species that denotes the protein produced by the $i^{\rm th}$ gene is  $\mathrm X_i$. A protein's multimer is denoted by $\mathrm X_{ic}$. If protein $\mathrm X_i$ does not form a multimer then $\mathrm X_{ic}:=\mathrm X_i$.

Therefore, the set of species in the network is $\mathscr S= \bigcup_i \left ( \{\mathrm D_j^i, j \in B_i\} \cup \{\mathrm X_i,\mathrm X_{ic}\}\right ).$

\paragraph{Reactions:} In our context, the reactions consist of TFs binding and unbinding with promoters and the respective protein expression (with transcription and translation combined in one step), decay, and $n$-merization.

For each gene, we define a \emph{gene expression block}. Each block consists of a set of \emph{gene reactions} and a set of \emph{protein reactions} as shown in Figure \ref{f.block}.

\begin{figure}
\centering
\includegraphics[width=0.5\columnwidth]{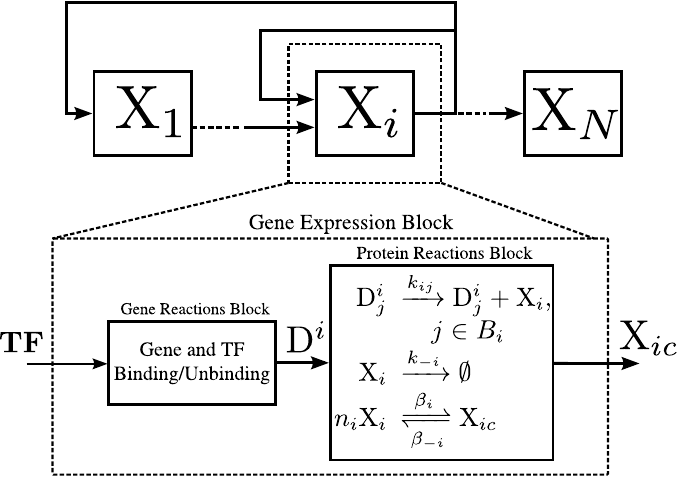}
 \caption{\textbf{A Gene Expression Block}. \red{A generic regulatory network that consists of gene expression blocks. A block consists of a gene reactions block and a protein reactions block. The gene reactions are described in the text.} \textbf{TF} is a vector of TFs which  can be monomers, dimers, or higher order multimers.  {$\mathrm D^i$ is a vector whose components consist of the $\mathrm D_j^{i}$'s}. The dimension of \textbf{TF} is equal to the number of binding sites of the gene. }
   \label{f.block}
 \end{figure}

If promoter is constitutive, i.e. it switches  between two configurations {autonomously} without an explicitly modeled TF-promoter binding, then   $B_i=\{0,1\}$ and the gene  {reactions} block consists of:
 \begin{equation}\label{e.ba} \mathrm D_0^i \xrightleftharpoons[ \alpha_{-i}]{ \alpha_{i}}  \mathrm D_1^i.\end{equation}
 We refer to $\mathrm D_0^i$ and $D_1^i$ as the \emph{inactive} and \emph{active} configurations, respectively.
If the promoter has one binding site, then also $B_i=\{0,1\}$ and the gene {reactions}  block consists of just two reactions:
 \begin{equation}\label{e.b0} {\rm TF}+\mathrm D_0^i \xrightleftharpoons[ \alpha_{-i}]{ \alpha_{i}}  \mathrm D_1^i,\end{equation}
where $\mathrm D_0^i$ and $D_1^i$ are {are the promoter configurations when \emph{unbound} and \emph{bound} to the TF}, respectively. Note that we did not designate a specific species as the active one since it depends on whether the TF is an activator or a repressor. \red{Specifically, when TF is an activator, $D_1^i$ will be the active configuration and $D_0^i$ will be the inactive configuration, and vice versa when TF is a repressor.}

 Finally, if the promoter has two \red{TFs binding to it, then they can bind \emph{independently}, \emph{competitively}, or \emph{cooperatively}}. {Cooperative binding is discussed in SI-\S 4.3.1.} \red{If they bind independently, then the promoter has two}  binding sites. Hence, $B_i=\{00,01,10,11\}$ and the gene  block {contains the following reactions}:
 \begin{align}\label{e.b00x} {\rm TF}_1+\mathrm D_{00}^i &\xrightleftharpoons[ \alpha_{-i1}]{ \alpha_{i1}}  \mathrm  D_{10}^i \\ \label{e.b01} {\rm TF}_1+\mathrm D_{01}^i& \xrightleftharpoons[ \alpha_{-i2}]{ \alpha_{i2}}  \mathrm D_{11}^i,\\ \label{e.b00y} {\rm TF}_2+  \mathrm  D_{00}^i& \xrightleftharpoons[ \alpha_{-i3}]{ \alpha_{i3}}  \mathrm  D_{01}^i,\\ \label{e.b10} {\rm TF}_2+ \mathrm D_{10}^i& \xrightleftharpoons[ \alpha_{-i4}]{ \alpha_{i4}}  \mathrm  D_{11}^i.\end{align}
The activity of each configuration species is dependent on {whether the TFs are activators or repressors}, and \red{ on how they behave jointly. This can be characterized fully by assigning a production rate for each configuration as will be explained below}.

\red{In the case of competitive binding, two different TFs compete to bind to the same location. This can be}  {modeled similarly} \red{to the previous case except that the transitions to $\mathrm D_{11}^i$, i.e. the configuration where both TFs are bound, are not allowed. Hence, the gene reactions block will have only two reactions: \eqref{e.b00x}, \eqref{e.b00y}, and the binding set reduces to $B_i=\{00,01,10\}$.}

  We assume that RNA polymerase and ribosomes are available in high copy numbers, and that we can lump transcription and translation  {into} one simplified ``production'' reaction.  The rate of production is dependent on the promoter's configuration. So for each configuration $\mathrm D_j^i, j \in B_i$ the production reaction is:
  \begin{equation}\label{production} \mathrm D_j^i \xrightarrow{k_{ij}} \mathrm D_j^i+\mathrm X_i,\end{equation}
 where the kinetic constant $k_{ij}$ is a non-negative number.  \red{The case} $k_{ij}=0$ means that when the promoter configuration is $\mathrm D_j^i$ there is no protein production, and hence $D_j^i$ is an inactive configuration. \red{The promoter configuration can be ranked from the most active to the least active by ranking the corresponding production kinetic rate}  {constants}.

\red{Consequently}, the character of a TF is manifested  {as follows}:  if the \red{maximal} protein production  {occurs at a configuration with the TF being bound we say that} the TF is \emph{activating}, and if the reverse holds it is \emph{repressing}.  {And, if the production is maximal with multiple configurations such that the TF is bound in some of them and unbound in others then the TF is neither repressing nor activating.}

   We model decay and/or dilution as a single reaction:
 \begin{equation}\label{e.decay} \mathrm X_i \xrightarrow{k_{-i}} \emptyset. \end{equation}

 The expressed proteins can act as TFs. They may combine to form dimers or higher order multimers \red{before acting as TFs}. \red{The numbers of copies of the TF needed to form a multi-mer is called the \emph{the cooperativity index} and we denote it by $n$}. Hence, we model the cooperativity reactions as given in Figure \ref{f.block} as follows:
  \begin{equation}\label{dimerization} n \mathrm X_i \xrightleftharpoons[ \beta_{-i}]{ \beta_{i}}  \mathrm X_{ic}. \end{equation}

 {If the cooperativity index of $\mathrm X_i$ is 1, then the species $\mathrm X_{ic}:=\mathrm X_i$, and the multimerization reaction becomes empty.}

\rv{Higher order multi-merization processes can be modelled as multi-step or sequential reactions \cite{ferrell09}. We discuss how our theory includes this case in \S SI-4.3.3,
by showing how an equivalent one-step model with can be formulated.}

\paragraph{Kinetics:}

   In order to keep track of
molecule counts, each species $\mathrm Z_i \in \mathscr S$ is associated with a copy
number $z_i \in \mathbb Z_{\ge 0}$.

To each reaction $\mathrm R_j$ one associates a propensity function $R_j$. We use \emph{Mass-Action Kinetics}.  If $\mathrm R_j$ has a single reactant species  $Z_i$ with stoichiometry coefficient $\alpha_{ij}$, then \cite{anderson15}:
\[ R_j(z_i) = k_j \binom{z_i}{\alpha_{ij}}=k_j \frac{z_i(z_i -1 ) ... (z_i -\alpha_{ij})}{\alpha_{ij}!},\]
where $k_j$ is a kinetic rate constant. Note if $\alpha_{ij}=1$, then $R_j(z_i)=k_j z_i$.

{The only bimolecular reaction we need is binding of a TF to a promoter, which has}  unity stoichiometry coefficients  {for} each reactant species, i.e. the left side of the reaction is of the form  $\mathrm Z_{i_1}+\mathrm Z_{i_2}$. In this case, the propensity function is:
\[ R_j(z_{i_1},z_{i_2}) = k_j z_{i_1} z_{i_2}.\]

\paragraph{A gene regulatory network:} Consider a set of $N$ genes, binding sets $\{B_i\}_{i=1}^N$, and  kinetic constants $k_j$'s. A \emph{gene expression block}\red{, as shown in Figure \ref{f.block},} is a set of gene reactions and protein reactions as defined above. Each gene block has an output {that} is either   the protein or its $n$-mer, and it is designated by $\mathrm X_{ic}$. The input to each gene expression block is a subset of the set of the outputs of all blocks.
Then, a GRN is an arbitrary interconnection of a gene expression blocks {(Figure \ref{f.block})}. SI-\S 4 defines a more general class of network that we can study.

A {directed} graph can be associated with a GRN as follows. Each vertex corresponds to a gene expression block. There is a directed edge from vertex A to vertex B if the output of A is an input to B.
In order to simplify the presentation, we assume the following:
 \begin{enumerate}
    \item[\textbf{A4}] The graph of gene expression blocks is connected.
 \end{enumerate}
 Note that if A4 is violated, our analysis can be applied to each connected component.

 \paragraph{Time-Scale Separation:}
 As mentioned in the introduction, we assume that the gene reactions \eqref{e.ba}-\eqref{e.b10} are considerably slower than the protein reactions \eqref{production}-\eqref{dimerization}. In order to model this assumption, we write the kinetic rates of gene reactions in the form $\varepsilon k_j$, where $0<\varepsilon \ll 1$ and assume that all other kinetic rates (for protein production, decay and multi-merization) are { $\varepsilon^{-1}$-times faster}.

Events in biological cells usually take place at different time-scales \cite{alon}, and hence singular perturbation techniques are widely used in deterministic settings in order to reduce models for analysis. On the other hand, model-order reduction by time-scale separation in stochastic processes has been mainly used in the literature for computational purposes, for example to accelerate the stochastic simulation algorithm \cite{rao03,sontag17}, or to compute finite-space-projection solutions to the CME \cite{khammash06}. In this work, we use a singular perturbation approach for the analytical purpose of characterizing the form of the stationary distribution in the regimes of slow gene-TF binding/unbinding.

   In the case of a finite Markov chain,  the CME is a finite-dimensional linear ODE, and reduction methods for linear systems can be used  \cite{campbell79} and applied to Markov chains \cite{kokotovic81,yin97}. For continuous-time Markov chains on a countable space, as needed when analyzing gene networks,  there are difficult and open technical issues.  Exponential stochastic stability \cite{meyn93} needs to be established for the stationary solution in order {to guarantee the existence of the asymptotic expansion in $\varepsilon$} \cite{altman04}. Although it has been shown for a class of networks \cite{gupta14}, the general problem needs further research.
   In this paper, we will not delve into  technical issues of stochastic stability {; we} assume that these expansions exist and \red{that} the solutions converge to a unique equilibrium solution.

 \subsection*{Dynamics and the Master Equation}
 The dynamics of the network refers to the manner in which the \emph{state} evolves in time, where  {the} state $Z(t) \in \mathbf Z \subset \mathbb Z_{\ge 0}^{|\mathscr S|}$ is the vector of  copy numbers of the species of the network at time $t$. The standard stochastic model for a CRN is that of a continuous Markov chain. Let $\mathbf Z$ denote the state space. Consider a time $t$ and let the state be $Z(t)=z \in \mathbf Z$. The relevant background is reviewed in SI-\S 1.1.

Let $p_z(t)=\Pr[Z(t)=z|Z(0)=z_0]$ be the stationary distribution for any given initial condition $z_0$.   Its time evolution is given by the \emph{Chemical Master Equation} (CME).

Since our species are either gene species or protein species, we split the stochastic process $Z(t)$ into two subprocesses: \emph{the gene process} $D(t)$ and \emph{the protein process} $X(t)$, as explained below.

  For each gene we define one process $D_i$ such that $D_i(t)
 \in B_i$.  $D_i(t)=j$ if and only the promoter configuration is encoded by $j \in B_i$.
Collecting these into a vector, define the gene process $D(t):=[D_1(t),...,D_N(t)]^T$ where $D(t) \in \prod_{i=1}^N B_i $. The $i^{\rm th}$ gene can be represented by $|B_i|$ states, so $L:={\prod_{i=1}^N |B_i|}$ is  the total number of promoter configurations in the GRN. {With} abuse of notation, we write also $D(t) \in \{0,..,L-1\}$ in the sense of the bijection between $\{0,..,L-1\}$ and $\prod_{i=1}^N B_i $ defined by interpreting $D_1...D_N$ as a binary representation of an integer. Hence, $d \in \{0,..,L-1\}$ corresponds to $(d_1,...,d_N) \in B_1 \times .. \times B_N$ and we write $d=(d_1,..,d_N)$.

 Since each gene expresses a corresponding protein,  we define $X_{i1}(t)\in \mathbb Z_{\ge 0}, i =1,..,N$ protein processes. If the multimerized version of the $i^{\rm th}$ protein participates in the network as an activator or repressor then we define $X_{ic}(t)$ as the corresponding multimerized protein process, and we denote $X_i(t):=[X_{i1}(t),X_{ic}(t)]^T$. If there is no multimerization reaction then we define $X_i(t):=X_{i1}(t)$. Since not all proteins are necessarily multimerized, the total number of protein processes is  $N \le M \le 2N$. Hence,
   the \emph{protein  process} is $X(t)=[X_1^T(t),..,X_N^T(t)]^T \in {\mathbb Z}_{\ge 0}^M $ and the state space can be written as $\mathbf Z= {\mathbb Z}_{\ge 0}^M \times \prod_{i=1}^N B_i$.

\section*{Results}

\subsection*{Decomposition of the Master Equation}

It is crucial to our analysis to represent the linear system of differential equations given by  the CME  as an interconnection of weakly coupled linear systems. \red{To this end,} we present the appropriate notation in this subsection.

Consider the joint probability distribution:
\begin{equation}\label{e.jointpdf}
  p_{d,x}(t)= \Pr[X(t)=x, D(t)=d],
\end{equation}
\red{which represents the probability at time $t$ that the protein process $X$ takes the value} $x \in {\mathbb Z}_+^M$ \red{and the gene process $D$ takes the value} $d\in \{0,..,L-1\}$.  {Recall} \red{that $x$ is a vector of copy numbers for the protein processes while $d$ encodes the configuration of each promoter in the network.} Then, we can define for each fixed $d$:
\begin{equation}\label{p_d} p_d (t):= [p_{dx_0}(t),p_{dx_1}(t),....]^T, \end{equation}
\red{representing the vector enumerating  the probabilities \eqref{e.jointpdf} for all values of $x$ and for a fixed $d$, }where $x_0,x_1,..$ is an indexing of ${\mathbb Z}_{\ge 0}^M$. Note that $p_d(t)$ can be thought of as an infinite vector with respect to the aforementioned indexing. Finally, let \begin{equation}\label{e.vectordecomp} p(t):=[p_0(t)^T,...,p_{L-1}^T(t)]^T \end{equation}
\red{representing a concatenation of the vectors \eqref{p_d} for $d=0,..,L-1$.} Note that $p(t)$ is a finite concatenation of infinite vectors.

\red{The joint stationary distribution $\bar\pi$ is defined as the following limit,  {which we assume to exist and is independent of the initial distribution:}
\begin{equation}\label{pi_def}
\bar\pi = \lim_{t \to \infty } p(t).
\end{equation}}
Note that $\bar\pi$ depends on $\varepsilon$.

Consider a given GRN.  The CME \red{is defined over a countable state space $\mathbf Z$. Hence, the  CME can be interpreted as an infinite system of} differential equations with an infinite infinitesimal generator matrix $\Lambda$ which contains the reaction rates (see SI-\S 1.1).

\red{Consider} partitioning the probability distribution vector as in \eqref{e.vectordecomp}. Recall that reactions have been divided into two sets: slow gene reactions \eqref{e.ba}-\eqref{e.b10} and fast protein reactions \eqref{production}-\eqref{dimerization}. \rv{This allows us to write $\Lambda$ as a sum of a slow matrix $\varepsilon\hat \Lambda$ and a fast matrix $\tilde\Lambda$, which we call a fast-slow decomposition. Furthermore,  $\tilde \Lambda$ can be written as a block diagonal matrix with $L$ diagonal blocks which correspond to conditioning the Markov chain on a specific gene state $d$}. This is stated in the following basic proposition (see SI-\S 2.1 for the proof):
\begin{proposition}\label{th.decomp}
Given a GRN. Its CME can be written as
\begin{align}\label{e.masterinfinite}
\dot p(t)&=\Lambda_\varepsilon p(t)=\left (  \tilde\Lambda + \varepsilon \hat\Lambda\right ) p(t), \end{align}
 {
where \begin{align}\label{e.masterinfinite2} \tilde \Lambda  =  \begin{bmatrix} \Lambda_0 & & \\ & \ddots & \\ & & \Lambda_{L-1} \end{bmatrix}, \ \text{and} \ p(t)= \begin{bmatrix}p_{0}(t) \\ \vdots \\ p_{L-1}(t) \end{bmatrix},
\end{align}}
where   $\tilde \Lambda$ is the  fast matrix,  $\hat \Lambda$ is the slow matrix, and $\Lambda_0, .. , \Lambda_{L-1}$ are stochastic matrices.
\end{proposition}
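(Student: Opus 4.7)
The proof decomposes the generator into the slow gene-reaction part and the fast protein-reaction part, then exploits the fact that fast reactions never change the promoter configuration $d$.

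My plan is as follows. First, I would write $\Lambda$ as the sum of the infinitesimal transition operators associated with each reaction, following the standard CME construction: each reaction contributes a gain term at the product state and a loss term at the reactant state, with coefficient equal to its mass-action propensity. Grouping according to the time-scale assumption---gene reactions \eqref{e.ba}--\eqref{e.b10} carry an explicit factor $\varepsilon$, while protein reactions \eqref{production}--\eqref{dimerization} carry unscaled rates---immediately yields $\Lambda_\varepsilon = \tilde\Lambda + \varepsilon \hat\Lambda$, with $\tilde\Lambda$ collecting all protein-reaction contributions and $\hat\Lambda$ collecting all gene-reaction contributions.

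The central observation is that every fast reaction---production \eqref{production}, decay \eqref{e.decay}, and multimerization \eqref{dimerization}---alters $x$ alone and leaves $d$ untouched. Hence the only states reachable from $(d,x)$ via a fast transition are of the form $(d,x')$ with the same $d$. Ordering the joint probability vector as in \eqref{e.vectordecomp}, this invariance of $d$ under the fast dynamics translates directly into block diagonality of $\tilde\Lambda$, with exactly one block $\Lambda_d$ per promoter configuration $d\in\{0,\dots,L-1\}$ acting on the infinite vector $p_d(t)$. I would then verify that each $\Lambda_d$ is the generator of the CTMC on $\mathbb{Z}_{\ge 0}^M$ obtained by freezing the promoter state at $d$ and running only the protein reactions; by construction this is an infinitesimal stochastic matrix, with nonnegative off-diagonal entries, nonpositive diagonal entries, and columns summing to zero (probability conservation under the fast flow).

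The one bookkeeping subtlety worth being careful about is that the gene binding reactions \eqref{e.b0}--\eqref{e.b10} do change $x$ as well as $d$, since a TF molecule is consumed or released upon binding/unbinding. One might worry that this breaks the clean split between ``what affects $d$'' and ``what affects $x$.'' However, since these reactions carry the prefactor $\varepsilon$ by assumption, they belong entirely to $\hat\Lambda$ and never appear in $\tilde\Lambda$, so the block-diagonal structure of the fast generator is preserved. The countable infinite dimensionality of the protein state space raises no obstacle here, because the proposition is a purely algebraic identity on the entries of $\Lambda_\varepsilon$; the genuinely analytic issues of semigroup existence and of asymptotic expansion in $\varepsilon$ are deferred, as the paper already notes, to the singular-perturbation analysis that follows.
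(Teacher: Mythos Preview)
Your proposal is correct and follows essentially the same route as the paper's proof: split the generator into slow (gene-reaction) and fast (protein-reaction) parts by the time-scale assumption, then observe that fast reactions never alter the promoter state $d$, which forces $\tilde\Lambda$ to be block diagonal with one block per promoter configuration. Your treatment is in fact more explicit than the paper's—particularly the remark that gene binding reactions also modify $x$ but live entirely in $\hat\Lambda$, and the verification that each $\Lambda_d$ is a bona fide stochastic generator—but the underlying argument is the same.
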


\subsection*{Conditional Markov Chains} \red{For each $d$, consider modifying the Markov chain}  {$Z(t)$} \red{defined in the previous section by replacing the stochastic process $D(t)$ by a deterministic constant process $D(t)=d$. This means that the resulting}  {chain} \red{does not describe the gene process dynamics, it only describes the protein process dynamics \emph{conditioned on} $d$.}  {Henceforth}, \red{we refer to the resulting Markov chain as the \emph{Markov chain conditioned on $d$.} The infinitesimal generator of a chain conditioned on $d$ is }  {denoted} \red{by $\Lambda_d$,}  {and is identical to the corresponding block on the diagonal of $\tilde\Lambda$ as given in \eqref{e.masterinfinite2}.}
 In other words, fixing $D(t)=d \in \{0,..,L-1\}$, the dynamics of the network can be described by a CME:
\begin{equation} \label{e.subgenerator}\red{\dot p_{X|d} = \Lambda_d p_{X|d}, } \end{equation}
where \red{$p_{X|d}$ is a vector that enumerates the conditional probabilities $p_{x|d}=\Pr[X(t)=x | D(t)=d]$ for a given $d$}. \red{ The conditional stationary distribution}  {is denoted by}: \red{ $\pi_{X|d}^{(J)}=\lim_{t \to \infty} p_{X|d}(t)$, where $(J)$ refers to the fact that it is joint in the protein and multimerized protein processes. Note that $\pi_{X|d}^{(J)}$ is independent of $\varepsilon$.}
{This notion of a conditional Markov chain is useful since}, {at the slow promoter kinetics limit}, $D(t)$ stays constant. \red{It can be noted from} {\eqref{e.masterinfinite2}} \red{that when $\varepsilon=0$ the dynamics of \red{$p_{d}$ decouples and becomes independent of $p_{\tilde d}, \tilde d=0,..,L-1, \tilde d \ne d$}.}

We show below that each conditional Markov chain has a simple  structure. Fixing \red{the promoter configuration} $D(t)=d=(d_1,..,d_N)$, the network consists of \emph{uncoupled} birth-death processes.
So for each $d_i$, {the protein reactions \eqref{production}-\eqref{dimerization} corresponding to the $i^{\rm th}$ promoter can be written as follows without multimerization:}
\begin{equation}\label{e.birthdeath}
  \emptyset \xrightleftharpoons[  k_{-i}]{  k_{id_i}} \mathrm X_i,
\end{equation}
 {where the subscript $id_i$ refers to the production kinetic constant corresponding to the configuration species $\mathrm D_{d_i}^i$,}
or, if there is a multimerization reaction, it takes the form:
\begin{equation}\label{e.birthdeathDim}
  \emptyset \xrightleftharpoons[ k_{-i}]{ k_{id_i}} \mathrm X_i, \ n_i \mathrm X_i \xrightleftharpoons[  \beta_{-i}]{ \beta_{i}} \mathrm X_{ic}.
  \end{equation}
Note that the stochastic processes $\mathrm X_i(t),i=1,..,N$ conditioned on $D(t)=d$ are independent of each other. Hence, the conditional stationary distribution $\pi_{X|d}^{(J)}$ can be written as a product of stationary distributions and the individual stationary distributions have Poisson expressions. The following proposition gives the analytic expression of the conditional stationary distributions: (see SI-\S 2.2 for proof)
\begin{proposition}\label{th.cond_dis} Fix $d \in \{0,..,L-1\}$. Consider \eqref{e.subgenerator}, then there exists a conditional stationary distribution $\pi_{X|d}^{(J)}$ and it is given by
\begin{equation}\label{e.cond_dis} \pi^{(J)}_{X|d}(x)= \prod_{i=1}^N \pi_{X|di}(x_i),\end{equation}
where
\begin{equation}\label{e.cond_dis2} \pi^{(J)}_{X|d_i}(x_i)= \left \{ \begin{array}{ll}\displaystyle \mathbf P\left ( x_{i1},x_{i2}; \frac{k_{id_i}}{k_{-i}}, \frac{k_{id_i}^{n_i}\beta_i}{n_i! k_{-i}^{n_i} \beta_{-i}} \right ) 
& \mbox{if}\  \mathrm X_i \  \mbox{is multimerized} \\
\mathbf P\left ( x_{i}; \frac{k_{id_i}}{k_{-i}} \right ) ,& \mbox{otherwise}
\end{array}\right . ,\end{equation}
where  {$(J)$ refers to the joint distribution in multimerized and non-multimerized processes, $x_{i1}$ refers to the copy number of $\mathrm X_i$, while $x_{i2}$ refers to the copy number of $\mathrm X_{ic}$,} $\mathbf P(x;a):= \frac {a^x}{x!} e^{-a},  \mathbf P(x_1,x_2;a_1,a_2):= \frac {a_1^{x_1}}{x_1!}\frac {a_2^{x_2}}{x_2!} e^{-a_1-a_2}$.
\end{proposition}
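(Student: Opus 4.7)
The plan is to establish the claim in three stages: decoupling across genes, then solving each single-gene chain, and finally handling existence and uniqueness on the (infinite) state space $\mathbb{Z}_{\geq 0}^M$.

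First, I would observe that once $D(t)=d=(d_1,\ldots,d_N)$ is frozen, the only active reactions are the production reactions \eqref{production} with rate constant $k_{id_i}$, the decay reactions \eqref{e.decay}, and (where present) the multimerization reactions \eqref{dimerization}. Each of these involves only species associated with a single gene block, so $\Lambda_d$ acts on the tensor product $\bigotimes_i \mathbb{Z}_{\geq 0}^{m_i}$ (with $m_i\in\{1,2\}$) as a sum of operators, each acting nontrivially only on the $i$-th factor. Consequently the conditional Markov chain splits as a product of $N$ independent chains, and any stationary distribution on the joint space is a product of stationary distributions on each factor. This yields \eqref{e.cond_dis}, and reduces the proof to analyzing a single gene.

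For gene $i$ without multimerization, \eqref{e.birthdeath} is the classical immigration-death (M/M/$\infty$) chain. I would verify the detailed balance recursion $\pi(x_i)\,k_{id_i} = \pi(x_i+1)\,k_{-i}(x_i+1)$, which iterates to $\pi(x_i)=\pi(0)\,(k_{id_i}/k_{-i})^{x_i}/x_i!$; normalizing gives the Poisson form $\mathbf{P}(x_i;k_{id_i}/k_{-i})$. For gene $i$ with multimerization, \eqref{e.birthdeathDim} gives a two-dimensional birth-death-like chain whose reaction vectors are $(\pm 1,0)$ and $(\mp n_i,\pm 1)$. I would write the ansatz $\pi^{(J)}_{X|d_i}(x_{i1},x_{i2})=\mathbf{P}(x_{i1},x_{i2};a_1,a_2)$ and check detailed balance along each of the two reaction pairs:
\begin{align*}
\pi(x_{i1},x_{i2})\,k_{id_i} &= \pi(x_{i1}+1,x_{i2})\,k_{-i}(x_{i1}+1),\\
\pi(x_{i1},x_{i2})\,\beta_i\binom{x_{i1}}{n_i} &= \pi(x_{i1}-n_i,x_{i2}+1)\,\beta_{-i}(x_{i2}+1).
\end{align*}
The first equation forces $a_1=k_{id_i}/k_{-i}$; substituting the product-Poisson ansatz into the second equation and cancelling factorials collapses it to $a_1^{n_i}\beta_i/n_i! = a_2\beta_{-i}$, giving exactly the formula for $a_2$ in \eqref{e.cond_dis2}. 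Since detailed balance holds reaction by reaction, the proposed measure is reversible, hence stationary.

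Finally, for existence and uniqueness I would note that each single-gene chain is irreducible on $\mathbb{Z}_{\geq 0}^{m_i}$ (one can reach any state from the origin by successive production and, when relevant, multimerization events, and return via decay/demultimerization), and that the candidate product-Poisson measure is finite and normalizable, so it is the unique invariant distribution. The network is also weakly reversible with deficiency zero, so this is consistent with the Anderson-Craciun-Kurtz theorem, which could be invoked as an alternative to the direct detailed-balance computation. The only delicate point — and the main technical obstacle — is ruling out escape of mass to infinity for the multimerized chain, since the propensity $\beta_i\binom{x_{i1}}{n_i}$ grows polynomially; here the exponential tails of the Poisson factors and the linear decay term $k_{-i}x_{i1}$ provide the needed tightness, ensuring positive recurrence and hence uniqueness.
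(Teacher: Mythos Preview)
Your proposal is correct and follows essentially the same approach as the paper: factorization into independent single-gene chains, the standard Poisson result for the non-multimerized case, and a detailed-balance verification of the product-Poisson ansatz for the multimerized case. Your treatment is in fact somewhat more thorough than the paper's, which simply writes down the master equation and asserts that the detailed-balance solution can be verified; your added remarks on irreducibility, positive recurrence, and the Anderson--Craciun--Kurtz alternative are sound but go beyond what the paper's proof supplies.
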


\begin{remark} \label{rem.marginal} The conditional distribution in \eqref{e.cond_dis} is a joint distribution in the protein and multimerized protein processes. If we want to compute a marginal stationary distribution for the protein process only, then we average over the multimerized protein processes $X_{ic}, i=1,..,N$ to get a joint Poisson in $N$ variables. Hence, the formulae \eqref{e.cond_dis} {-\eqref{e.cond_dis2}} can be replaced by:
\begin{equation}\label{e.marginal_conditional}
 \pi_{X|d}(x):=  {\sum_{i=1}^{M-N}\sum_{x_{i2}=0}^\infty \pi_{X|d}^{(J)}(x)}= \prod_{i=1}^N \mathbf P\left (x_i;\frac{k_{id_i}}{k_{-i}}\right ),
 \end{equation}
 \red{where $M-N$ is the number of $n$-merized protein processes, and $\pi_{X|d}$  is the marginal stationary distribution for the protein process}.
\end{remark}

\subsection*{Decomposition of The Stationary Distribution}
 Recall the slow-fast decomposition \red{of the CME} in \eqref{e.masterinfinite} and the joint stationary distribution \eqref{pi_def}. In order to emphasize the dependence on $\varepsilon$ we \red{denote $\bar\pi^\varepsilon:=\bar\pi(\varepsilon)$}. \red{Hence,} $\bar\pi^\varepsilon$ is the unique stationary distribution that satisfies
 $ \Lambda_\varepsilon \bar\pi^\varepsilon=0$, $\pi^\varepsilon>0$, and $\sum_z \pi_z^\varepsilon = 1 $,  {where the subscript denotes the value of the stationary distribution at $z$}.

  Our objective is to characterize the stationary distribution as $\varepsilon \to 0$.
 Writing $\bar\pi_\varepsilon$ as an asymptotic expansion  {to first order} in terms of  $\varepsilon$, we have
\begin{equation}\label{e.expansion} \bar\pi^\varepsilon = \bar\pi^{(0)} + \bar\pi^{(1)} \varepsilon + o(\varepsilon).\end{equation}

Our aim is to find $\bar\pi^{(0)}$. We use singular perturbations techniques to derive the following theorem (see SI-\S 2.3):

\begin{theorem}\label{th} Consider a \red{given} GRN with $L$ \rv{promoter states} with the CME \eqref{e.masterinfinite}. Writing \eqref{e.expansion}, then the joint stationary distribution  {$\bar\pi:= \lim_{\varepsilon \to 0^+} \bar\pi^{\varepsilon}$} {can be written as}:
\[\bar\pi(x,d) = \sum_{d=0}^{L-1} \lambda_d \bar\pi_{X|d}(x,d),\]
where $\lambda=[\lambda_0,..,\lambda_{L-1}]^T$ is the principal normalized eigenvector of:
\begin{equation}\label{e.solution} \Lambda_r :=\begin{bmatrix}\mathbf 1^T & \mathbf 0^T &...  &  \mathbf 0^T \\ \mathbf 0^T & \mathbf 1^T &...  &  \mathbf 0^T \\ & & \ddots & \\ \mathbf 0^T & \mathbf 0^T &...  &  \mathbf 1^T \end{bmatrix} \hat\Lambda \, [\bar\pi_{X|0} \ \bar\pi_{X|1} \ ... \ \bar\pi_{X|L-1} ]  ,
\end{equation}

where $\bar\pi_{X|0},...,\bar\pi_{X|L-1}$ are the extended conditional stationary distributions defined as: $\bar\pi_{X|d}(x,d)=\pi_{X|d}(x)$, and $\bar\pi_{X|d}(x,d')=0$ when $d'\ne d$.
\end{theorem}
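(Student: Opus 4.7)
The plan is to apply the standard singular-perturbation ansatz to the stationarity equation $\Lambda_\varepsilon \bar\pi^\varepsilon = 0$, expand in powers of $\varepsilon$ as in \eqref{e.expansion}, and read off the leading term $\bar\pi^{(0)}$ by combining the block structure of the null space of $\tilde\Lambda$ with a Fredholm-type solvability condition at first order. The existence of the expansion and of the limit $\bar\pi := \lim_{\varepsilon\to 0^+}\bar\pi^\varepsilon$ is the stochastic-stability assumption the paper has already adopted (cf.\ \cite{meyn93,altman04}), so I will treat it as given and focus on the algebraic content.

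\emph{Zeroth order.} Substituting \eqref{e.expansion} into $(\tilde\Lambda+\varepsilon\hat\Lambda)\bar\pi^\varepsilon=0$ and matching the $\varepsilon^0$ term yields $\tilde\Lambda\,\bar\pi^{(0)}=0$. By Proposition~\ref{th.decomp}, $\tilde\Lambda$ is block diagonal with blocks $\Lambda_d$, so the equation decouples into $\Lambda_d\,\bar\pi^{(0)}_d=0$ for each $d\in\{0,\dots,L-1\}$. Proposition~\ref{th.cond_dis} identifies the one-dimensional kernel of $\Lambda_d$ with the conditional stationary distribution $\pi_{X|d}$, so $\bar\pi^{(0)}_d=\lambda_d\pi_{X|d}$ for some scalars $\lambda_d\ge 0$. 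In the extended notation of the statement this is exactly $\bar\pi^{(0)} = \sum_{d=0}^{L-1}\lambda_d\,\bar\pi_{X|d}$, giving the claimed mixture form. It remains to determine the weights $\lambda_d$.

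\emph{First order and solvability.} The $\varepsilon^1$ equation is $\tilde\Lambda\,\bar\pi^{(1)} = -\hat\Lambda\,\bar\pi^{(0)}$. For this to admit a solution, the right-hand side must lie in the range of $\tilde\Lambda$; equivalently, it must annihilate every left null vector. Because each conditional generator $\Lambda_d$ conserves probability, $\mathbf{1}^T\Lambda_d=0$, and the left kernel of $\tilde\Lambda$ is spanned by the $L$ block-indicator functionals that sum entries over each diagonal block. Pairing the first-order equation with these functionals eliminates $\bar\pi^{(1)}$ and produces, for each $d$, the scalar constraint $\mathbf{1}^T(\hat\Lambda\,\bar\pi^{(0)})_d=0$. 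Writing $\bar\pi^{(0)} = P\lambda$ with $P:=[\bar\pi_{X|0}\ \bar\pi_{X|1}\ \cdots\ \bar\pi_{X|L-1}]$ and denoting the block-sum operator by $S$, this system collapses to the $L\times L$ linear system $S\hat\Lambda P\,\lambda = \Lambda_r\lambda = 0$ with $\Lambda_r$ exactly as in \eqref{e.solution}. Since $\Lambda_r$ inherits the structure of a finite generator on the promoter states (its columns correspond to the net slow flux out of each conditional equilibrium), it has $0$ as its principal eigenvalue, and imposing the normalization $\sum_d\lambda_d=1$ on the zeroth-order mass singles out the principal normalized eigenvector, completing the proof.

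\emph{Main obstacle.} The genuine difficulty is the Fredholm/solvability step on the countably infinite state space: one must verify that the block-wise $\mathbf{1}^T$ functionals really do span the orthogonal complement of $\mathrm{range}(\tilde\Lambda)$ in the relevant sequence space, and that $\hat\Lambda\,\bar\pi^{(0)}$ is summable block-wise so the pairing is well-defined. Since $\pi_{X|d}$ is a product of Poisson factors with finite moments of all orders (Proposition~\ref{th.cond_dis}) and $\hat\Lambda$ couples only finitely many species per reaction, the summability checks are manageable; the clean abstract statement, however, rests on the stochastic-stability hypothesis invoked above. Once those hypotheses are granted, the reduction to $\Lambda_r$ is purely algebraic and the theorem follows.
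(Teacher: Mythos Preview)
Your proof is correct and follows essentially the same route as the paper: expand in $\varepsilon$, use the block-diagonal kernel of $\tilde\Lambda$ at zeroth order to write $\bar\pi^{(0)}$ as a mixture over the $\bar\pi_{X|d}$, and then eliminate $\bar\pi^{(1)}$ at first order by pairing with the block-$\mathbf 1^T$ left null vectors of $\tilde\Lambda$ to obtain $\Lambda_r\lambda=0$. Your phrasing of the first-order step as a Fredholm solvability condition, together with the explicit summability remarks, is a bit more careful than the paper's direct pre-multiplication, but the algebraic content is identical.
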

 {
The result characterizes the stationary solution of \eqref{e.masterinfinite} which is a joint distribution in $X$ and $D$. However, we are particularly interested in the marginal stationary distribution of the protein process $X$ and the marginal stationary distribution of the non-multimerized protein process, since these distributions are typically experimentally observable.   Therefore, we can use Remark  \ref{rem.marginal} to write the stationary distribution as mixture of $L$ Poisson distributions with weights $\{\lambda_d\}_{d=0}^{L-1}$:}

\begin{corollary} \label{cor}  {
Consider a  {given} GRN with $L$ genes with the CME \eqref{e.masterinfinite}. Writing \eqref{e.expansion}, let $\pi_{X|0},...,\pi_{X|L-1}$ be the conditional stationary distributions of $\Lambda_0,...,\Lambda_{L-1}$, where explicit expressions are given in \eqref{e.cond_dis}. Then, we can write the following:
\begin{equation}\label{e.mixture}\pi^{(J)}(x):= \lim_{\varepsilon \to 0^+}\lim_{t\to\infty} \Pr[X(t)=x]= \sum_{d=0}^{L-1} \lambda_{d} \pi_{X|d}^{(J)}(x),\end{equation}
where $\lambda=[\lambda_0,..,\lambda_{L-1}]^T$ is as given Theorem \ref{th}.\\
Furthermore,  the marginal stationary distribution of the non-multimerized protein process can be written as:
\begin{equation}\label{e.mixture_m}\pi(x):= \sum_{d=0}^{L-1} \lambda_d \pi_{X|d}(x)=\sum_{d=0}^{L-1} \lambda_d \prod_{i=1}^N \mathbf P\left(x_i;\frac{k_{id_i}}{k_{-i}}\right).\end{equation}}
\end{corollary}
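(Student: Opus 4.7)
The plan is to derive the corollary as a direct marginalization of the joint stationary distribution produced by Theorem \ref{th}, combined with the explicit Poisson expressions supplied by Proposition \ref{th.cond_dis} and Remark \ref{rem.marginal}. Since Theorem \ref{th} already gives the joint distribution over $(X,D)$ in closed form, what remains is essentially bookkeeping: (i) sum out $D$ to get the marginal in $X$, and (ii) in the non-multimerized case, sum out the multimer coordinates to get a product of univariate Poissons.

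First, I would write
\begin{equation*}
\pi^{(J)}(x) \;=\; \lim_{\varepsilon \to 0^+}\lim_{t\to\infty}\Pr[X(t)=x] \;=\; \sum_{d=0}^{L-1} \bar\pi(x,d),
\end{equation*}
where the interchange of the $\varepsilon\to 0^+$ limit with the marginal sum over the finite index set $\{0,\dots,L-1\}$ is immediate. Next, I would substitute the expression from Theorem \ref{th} and use the definition of the extended conditional distribution, namely $\bar\pi_{X|d'}(x,d) = \pi_{X|d'}^{(J)}(x)$ when $d=d'$ and $0$ otherwise. The inner double sum then collapses by the Kronecker delta to give
\begin{equation*}
\pi^{(J)}(x) \;=\; \sum_{d=0}^{L-1}\sum_{d'=0}^{L-1} \lambda_{d'}\,\bar\pi_{X|d'}(x,d) \;=\; \sum_{d=0}^{L-1} \lambda_d\, \pi_{X|d}^{(J)}(x),
\end{equation*}
which is \eqref{e.mixture}. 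At this stage the weights $\lambda_d$ are precisely those identified in Theorem \ref{th} as the principal normalized eigenvector of $\Lambda_r$, so nothing more is needed for the first assertion.

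For the second assertion \eqref{e.mixture_m}, I would invoke Remark \ref{rem.marginal}: marginalizing $\pi_{X|d}^{(J)}$ over the multimer coordinates $X_{ic}$ reduces each factor in the product form \eqref{e.cond_dis}–\eqref{e.cond_dis2} to a univariate Poisson $\mathbf{P}(x_i; k_{id_i}/k_{-i})$, regardless of whether the original conditional was a product of two Poissons (multimerized case) or a single Poisson (non-multimerized case). Marginalizing term-by-term in the finite mixture \eqref{e.mixture} then gives
\begin{equation*}
\pi(x) \;=\; \sum_{d=0}^{L-1} \lambda_d\,\pi_{X|d}(x) \;=\; \sum_{d=0}^{L-1}\lambda_d \prod_{i=1}^{N}\mathbf{P}\!\left(x_i;\frac{k_{id_i}}{k_{-i}}\right),
\end{equation*}
which is the claimed mixture-of-Poissons representation.

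There is no serious obstacle here beyond careful notation: the hard analytic work is already contained in Theorem \ref{th} (which requires the singular perturbation argument) and in Proposition \ref{th.cond_dis} (which gives the Poisson form of each conditional chain). The only potentially delicate point is justifying that marginalizations commute with the $\varepsilon\to 0^+$ limit; this follows from the fact that the marginal sums range over a finite set in $d$ (so the interchange is trivial) and, for the second equation, over a nonnegative series whose finite-$\varepsilon$ counterpart sums to $1$, so Tonelli plus the paper's standing assumption on the existence of the asymptotic expansion \eqref{e.expansion} suffice.
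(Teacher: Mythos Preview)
Your proposal is correct and follows essentially the same approach as the paper: the corollary is presented there as an immediate consequence of Theorem~\ref{th} by marginalizing over $D$, together with Remark~\ref{rem.marginal} to further marginalize over the multimer coordinates and obtain the product-of-Poissons form. The paper gives no separate proof beyond this one-line justification, so your more detailed bookkeeping (the Kronecker-delta collapse and the remark on interchanging limits with finite sums) simply spells out what the paper leaves implicit.
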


\begin{remark} In the remainder of the Results section, when we refer to the ``stationary distribution'' we mean the  marginal stationary distribution of the non-multimerized protein process given in \eqref{e.mixture_m}.
\end{remark}
\begin{remark}If a mode is defined as a local maximum of a stationary distribution, then this does not necessarily imply that the stationary distribution has $L$ modes since the peak values of two Poisson distributions can be very close to each other. In the remainder of the paper we will call each Poisson distribution in the mixture as a ``mode`` in the sense that it represents a component in the mixture distribution.  The number of local maxima of a distribution can be found easily given the expression \eqref{e.mixture_m}.\end{remark}

\subsection*{The Reduced-Order Finite Markov Chain}
\rv{The computation of the weighting vector $\lambda$ in Theorem \ref{th}
  requires computing the $L \times L$ matrix  $\Lambda_r$ in
  \eqref{e.solution} which can be interpreted as the infinitesimal generator
  of an $L$-dimensional Markov chain. The expression in \eqref{e.solution}
  involves evaluating the product of infinite dimensional matrices.    Since
  the structure of the GRN and the form of the conditional distribution in \eqref{e.cond_dis} are known,   { an easier algorithm to compute $\Lambda_r$} for our GRNs is given in Proposition SI-2. The algorithm provides an intuitive way to interpret Theorem \ref{th} and can be informally described as follows.}

   \rv{Assume $D(t)=d$, the algorithm implies that each binding reaction of the form:
\[ \mathrm{TF}+\mathrm D_{d_i}^i  \xrightarrow{\alpha} \mathrm D_{d_{i'}}^i,\]
gives the rate $\alpha \mathbb E [{TF}|D=d] $,  {where $\mathbb E$ denotes mathematical expectation.} Hence it corresponds to a reaction of the following form in the reduced-order Markov chain:
\begin{equation} \label{e.reversible_gene} \mathrm D_{d_i}^i  \xrightarrow{\alpha \mathbb E [ \text{TF}|D=d]} \mathrm D_{d_{i'}}^i.\end{equation}
Using Proposition \ref{th.cond_dis}, we can write:
\begin{equation}\label{e.rates}\mathbb E [ \text{TF}|D=d]={\frac{\alpha}{n_{ i}!}  \frac{ \beta_{ i  }}{  \beta_{- i}}\left (\frac{k_{ i d_{ i }}}{k_{- i}} \right )^{n_{ i}}}.
\end{equation}
 Refer to \S SI-2.4 for a more precise statement.}

\subsection*{Basic Example: Gene Bursting Model}
\label{s.aut_swi}
The simplest network is the {autonomous} TF-gene binding/unbinding model, and it has been used for transcriptional bursting \cite{raj06} and studied using time-scale separation in \cite{qian09,grima15}.
Consider:
  \begin{align}\label{aut_swi}
\nonumber \mathrm D_0 &\xrightleftharpoons[\varepsilon \alpha_{-}]{\varepsilon \alpha} \mathrm D_1 \\
  \mathrm D_1 &\mathop\rightarrow\limits^{k} \mathrm X+\mathrm D_1, \\ \nonumber
\mathrm X &  \mathop\rightarrow\limits^{k_-} 0.
\end{align}

Referring to Figure \ref{f.block}, we identify a single gene block with two states. Using \eqref{e.cond_dis}, the conditional stationary distributions are two Poissions at $0$ and $k/k_-$.
The reduced Markov chain is a binary Bernoulli process with a rate of $\alpha/(\alpha+\alpha_-)$.
Then the stationary distribution of $X$ can be written using \eqref{e.mixture_m} as: (see SI-\S 3.1)

\[\pi(x) =\frac{\alpha_{-}}{\alpha+\alpha_{-}}  \mathbf P(x;0) +  \frac{\alpha}{\alpha+\alpha_{-}} \mathbf P(x;k/k_-),\]
which is a bimodal distribution with peaks at 0 and $k/k_-$.
 {The fast promoter kinetics model is obtained, instead, by} reversing {the} time-scale separation  {such that the protein reactions become slow and gene reactions become fast. In that case, the resulting stationary distribution is} a Poisson with mean $\frac{\alpha}{\alpha+\alpha_-} \frac{k}{k_-}$ which is the same as the deterministic equilibrium (with the  conservation law $D_1(t)+\mathrm D_0(t)=1$). Although two models share the mean, the stationary distributions differ drastically.

\subsection*{The Role of Cooperativity}
 A TF is said to be cooperative if it acts only after it  forms a dimer or a higher-order $n$-mer that binds to the gene's promoter \cite{ferrell09}. In standard deterministic modelling, a cooperative activation  {changes} the {form of the quasi-steady state activation rate from} a Michaelis-Menten {function} into a Hill {function}.   {Cooperativity is often necessary for a} network to have   multiple equilibria in some kinetic parameter ranges.  For example, a non-cooperative self- {activating} gene can only be mono-stable, while  {its} cooperative counterpart can be multi-stable  for some parameters.

   {Corollary \ref{cor} and \eqref{e.rates}} show that cooperativity {plays in the context of} slow promoter kinetics a role  {that is very} different from the deterministic setting. This is since the stationary distribution is a mixture of $L$ Poisson processes,  {independent} of whether the activations are multimerized  {or not}, and  \eqref{e.marginal_conditional} are also independent of the dimerization rates.
 Nevertheless, the multi-merization can tune the weighting coefficients  in \eqref{e.rates}. In the non-cooperative case,  a certain mode can be made more probable only by changing either the location of the mode or {the} {dissociation} ratio (the ratio of the binding to unbinding kinetic constants). On the other hand, a multimerized TF gives extra tuning parameters, namely the multimerization ratio and the cooperativity index. Hence, a certain mode can be  {made} more or less probable by modifying the multimerization ratio and/or the cooperativity index  {without} changing the location of the peaks or the dissociation ratio.

  In order to illustrate the above idea, we analyze a self-regulating gene with slow promoter kinetics with and without cooperativity.

 \subsection*{A Self-Regulating Gene}

Consider a non-cooperative self-regulating gene:
 \begin{align}
\nonumber \mathrm X+\mathrm D_0 &\xrightleftharpoons[\varepsilon \alpha_{-}]{\varepsilon \alpha} \mathrm D_1 \\ \label{noncoop}
  \mathrm D_0 &\mathop\rightarrow\limits^{k_{0}} \mathrm D_0+\mathrm X, \\
  \mathrm D_1 &\mathop\rightarrow\limits^{k_{1}} \mathrm D_1+\mathrm X,\nonumber \\
\mathrm X &  \mathop\rightarrow\limits^{k_{-}} 0 \nonumber .
\end{align}
The network is activating if $k_{1}>k_{0}$, and repressing otherwise.

 {Referring to Figure \ref{f.block}}, this is a single-gene block with two states. 
  The reduced Markov chain is  a binary Bernoulli process with the rate $\alpha k_0/(\alpha k_-+\alpha k_0)$. Using \eqref{e.mixture_m} the stationary distribution is:
\begin{equation}\label{e.nc} \pi_1(x)= \frac {\alpha \rho_1}{\alpha_- +\alpha \rho} \mathbf P(x;k_1/k_-) + \frac {\alpha_-}{\alpha_-+\alpha \rho} \mathbf P(x;k_0/k_-), \end{equation}
where \begin{equation}\label{e.nc_rho}\rho=\mathbb E[X_2|D=0]=k_0/k_-.\end{equation}

Next, consider the same reaction network, but now with cooperativity:
 \begin{align}\label{coop}
\nonumber \mathrm X_2+\mathrm D_0 &\xrightleftharpoons[\varepsilon \alpha_{-}]{\varepsilon \alpha} \mathrm D_1 \\
  \mathrm D_0 &\mathop\rightarrow\limits^{k_{0}} D_0+\mathrm X, \\
  \mathrm D_1 &\mathop\rightarrow\limits^{k_{1}} \mathrm D_1+\mathrm X,\nonumber \\
\mathrm X &  \mathop\rightarrow\limits^{k_{-}} 0 \nonumber \\
\nonumber 2\mathrm X &\xrightleftharpoons[\beta_-]{\beta} \mathrm X_2.
\end{align}

In this case, the stationary distribution is still given by \eqref{e.nc} but the weighting parameter changes to \begin{equation}\label{e.c_rho}\rho_2=\mathbb E[X_2|D=0]=\frac{ k_0^2 \beta}{2k_-^2 \beta_-}.\end{equation}

In both cases the distribution  has modes  at $\frac{k_{1}}{k_-}$ and   $\frac{k_{0}}{k_-}$, where the height of the first mode is proportional to $\rho$.

Comparing \eqref{e.nc_rho} and \eqref{e.c_rho}, note that, in the non-cooperative case, if we want to increase the weight of the mode corresponding to the bound state keeping the dissociation ratio, then  the mode location needs to be changed.  {On the other hand,  the dimerization rates in \eqref{e.c_rho}}  can be used in order to tune the weights freely while keeping the modes and the binding to unbinding kinetic constants ratio unchanged. For instance,  the distribution can be made effectively unimodal with a sufficiently high dimerization ratio.

\paragraph{Comparison with the deterministic model:} \strut \\

Table \ref{table} compares the number of stable equilibria in the deterministic model with the number of modes in the stochastic model in the case of a single gene copy. It can be noted that there is no apparent correlation between the numbers of deterministic equilibria and stochastic modes.

Figure \ref{f.selfactivate_plot} depicts the transition from a unique mode with fast promoter kinetics to multiple modes with slow kinetics with cooperativity and leakiness  {for a self-activating gene}.

\begin{table}[t]
\centering
\begin{tabular}{ccccc}
\toprule & \multicolumn{2}{c}{\textbf{Non-Cooperative}} & \multicolumn{2}{c}{\textbf{Cooperative}}  \\
 & Leaky & {Non-Leaky}  & Leaky &{Non-Leaky} \\
 \toprule
 Stochastic  &  2 & 1   & 2 & 1  \\
 ({\scriptsize Slow promoter kinetics}) &  & {\scriptsize (at 0)}  & & {\scriptsize (at 0)} \\ \hline
 Deterministic&  1 &  {1} & 1-2 &  {1-2}\\ \bottomrule
\end{tabular}
\caption{Comparing the number of stable equilibria/modes for a self-regulating gene between stochastic with slow promoter kinetics and deterministic modelling frameworks.  {Details for reconstructing the table are given in SI-\S 5.3.
}   } \label{table}
\end{table}

\begin{figure}
\centering
\subfigure[]{\raisebox{0.8\height}{\includegraphics[width=0.09\columnwidth]{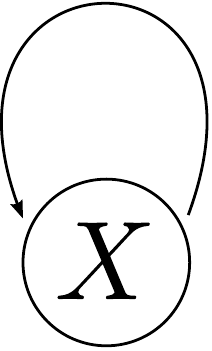}}}
\subfigure[]{\includegraphics[width=0.55\columnwidth]{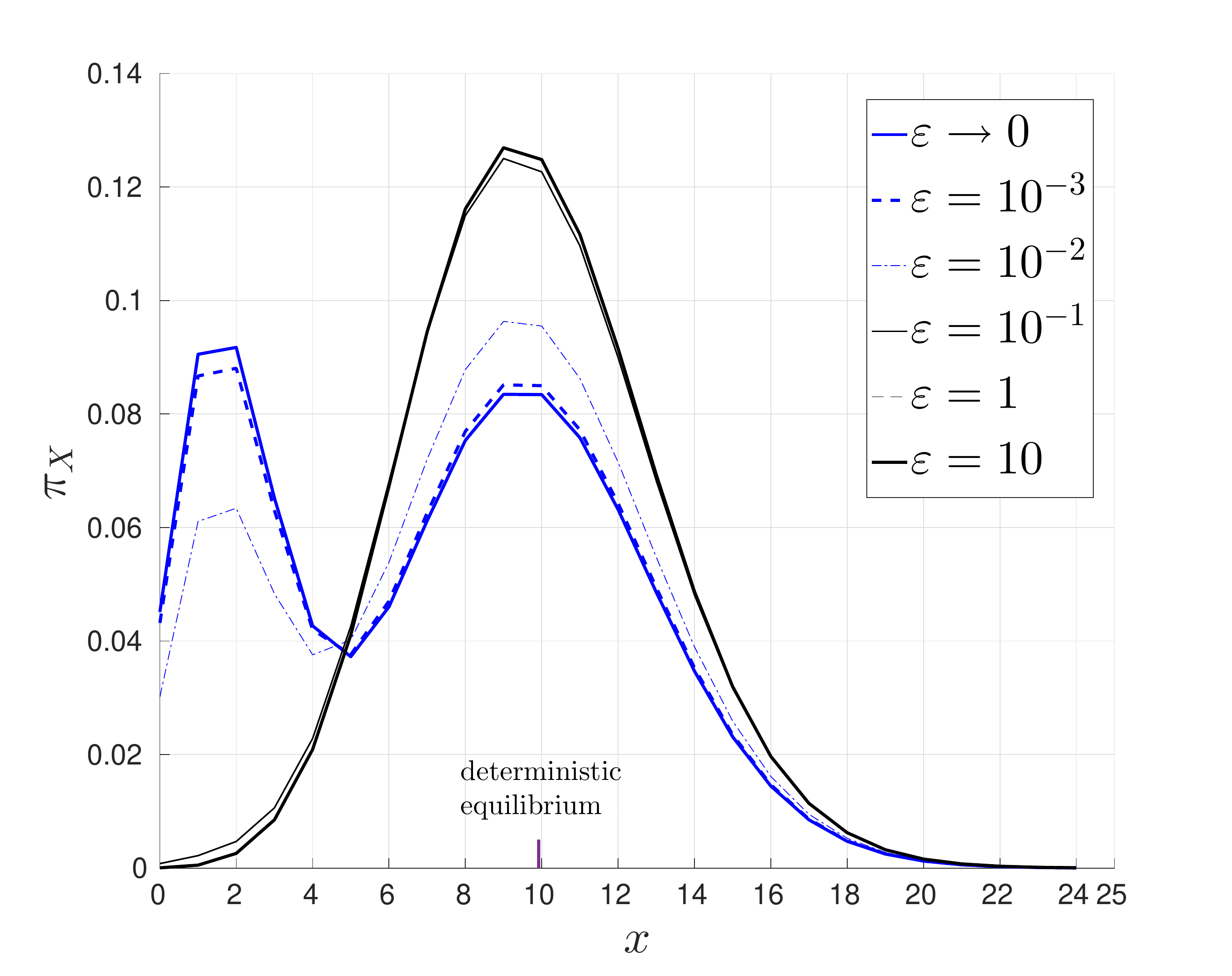}}
\caption{\textbf{More modes emerge due to slow promoter kinetics.} (a) A self-activating gene. (b) \red{The stationary probability distribution for different $\varepsilon$ which shows the} transition from fast promoter kinetics to slow promoter kinetics in a leaky cooperative self-activation of a gene  {given by \eqref{coop}}. \red{The slow kinetic limit is calculated via \eqref{e.mixture_m}, while the remaining curves are computed by a finite projection solution \cite{khammash06} of the CME. }  {The details and parameters are given in SI-\S 5.4 .}}

\label{f.selfactivate_plot}
\end{figure}

   \subsection*{The Toggle Switch}
A toggle switch is a basic GRN that exhibits deterministic multi-stability. It has two stable steady states and can switch between them with an external input or via noise. The basic design is a pair of two mutually repressing genes  {as in Figure \ref{f.toggle_switch}-a}. The ideal behavior is that only one gene is ``on'' at any moment in time. The network can be given by the following network with cooperativity indices $n,m$:
   \begin{align*}\begin{array}{rl}
\nonumber \mathrm Y_m+ \mathrm D_0^X &\xrightleftharpoons[\alpha_{-1}]{\alpha_1}  \mathrm D_1^X \\
  \mathrm D_0 &\mathop\rightarrow\limits^{k_{10}} \mathrm D_0^X+\mathrm X, \\
\mathrm X &   \mathop\rightarrow\limits^{k_{-1}} 0 \\
n\mathrm X  &\xrightleftharpoons[\beta_{-1}]{\beta_1}  \mathrm X_n \\ \end{array} \begin{array}{rl}
\nonumber \mathrm X_n+ \mathrm D_0^Y &\xrightleftharpoons[\alpha_{-2}]{\alpha_2} \mathrm D_1^Y  \\
\nonumber  \mathrm D_0^Y &\mathop\rightarrow\limits^{k_{20}} D_0^Y+\mathrm Y, \\ \mathrm Y & \mathop\rightarrow\limits^{k_{-2}} 0\\
mX  &\xrightleftharpoons[\beta_{-2}]{\beta_2}  \mathrm Y_m. \end{array}
\end{align*}
\rv{$\mathrm D^X$, $\mathrm D^Y$ denote the states of the promoters of the two genes expressing $X, Y$, respectively. } \\
 For the case $n,m=1$,  {there is no multi-merization reaction}.  {For consistency,} we choose $\beta_{ 1}=\beta_{-1}=\beta_2=\beta_{- 2}=1$ {in that case}.

  \red{Using the algorithm of} Proposition SI-2 we get that the distribution has three modes only (see SI-\S 3.3).  The stationary distribution for $X,Y$ is:
\begin{equation} \pi(x,y)= \frac 1{\frac{\alpha_1}{\alpha_{-1}} \rho_2 + \frac{\alpha_2}{\alpha_{-2}} \rho_1 + 1 }\left ( \mathbf P(y;{ \tfrac{k_{20}}{k_{-2}}}) \mathbf P(x;{ \tfrac{k_{10}}{k_{-1}}})  +  \frac{\alpha_1}{\alpha_{-1}} \rho_2  \mathbf P(y;{ \tfrac{k_{20}}{k_{-2}}})\delta(x) + \frac{\alpha_2}{\alpha_{-2}} \rho_1  \mathbf P(x;{ \tfrac{k_{10}}{k_{-1}}}) \delta(y) \right).  \end{equation}
where
\begin{equation}\label{e.rho_toggle} \rho_1 = \left ( \frac{k_{10}}{k_{-1}} \right ) ^n \frac{\beta_{1}}{n! \beta_{-1} }, \quad  \rho_2 = \left ( \frac{k_{20}}{k_{-2}} \right ) ^m \frac{\beta_{2}}{m! \beta_{-2} }. \end{equation}

 {Since the stationary distribution has three modes, it deviates from the ideal behavior of a switch where at most two stable steady states, under appropriate parameter conditions,  are possible. Nevertheless, } a bimodal distribution  can be achieved by minimizing the weight of the first mode at $(\frac{k_{10}}{k_{-1}},\frac{k_{20}}{k_{-2}})$. If we fix $\alpha_{1}/\alpha_{-1}, \alpha_{2}/\alpha_{-2}$, then this can be satisfied by tuning $n,m,\beta_{\pm 1},\beta_{\pm 2}$ to maximize $\rho_1,\rho_2$ in \eqref{e.rho_toggle}. Choosing higher cooperativity indices, subject to $n< k_{10}/k_{-1},m<k_{20}/k_{-2}$, achieves this. For instance, a standard design \cite{collins00} uses $n=2,m=3$. Figure \ref{f.toggle_switch} depicts the effect of cooperativity on achieving the desired behavior with the same dissociation constant and production ratios, and dimerization ratios equal to one. 
Notice that cooperativity allow us to minimize or maximize the weight of the mode corresponding to both proteins at high concentrations.

\begin{figure}
\begin{tabular}{cc}
\subfigure[Diagram of the toggle switch.]{\raisebox{0.25in}{\includegraphics[width=0.27\columnwidth]{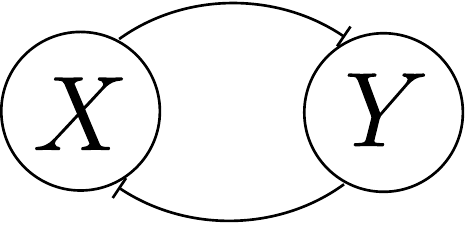}}}&
\subfigure[\red{Non-cooperative, } {$n,m=1$.}]{\includegraphics[width=0.5\columnwidth]{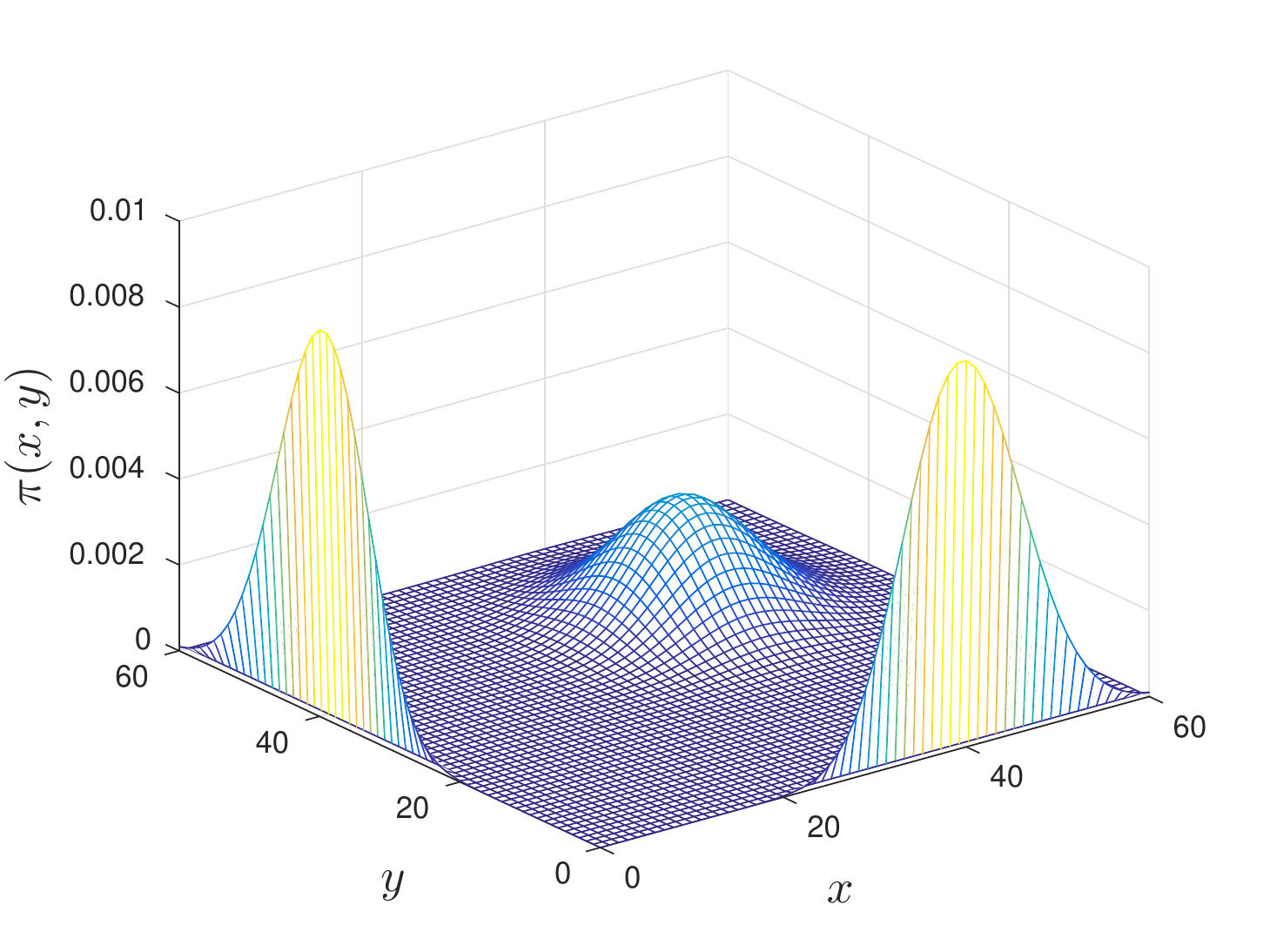}} \\
\subfigure[ {Cooperative, } \red{$\beta_1/\beta_{-1}=\beta_2/\beta_{-2}=1$.}]{\includegraphics[width=0.5\columnwidth]{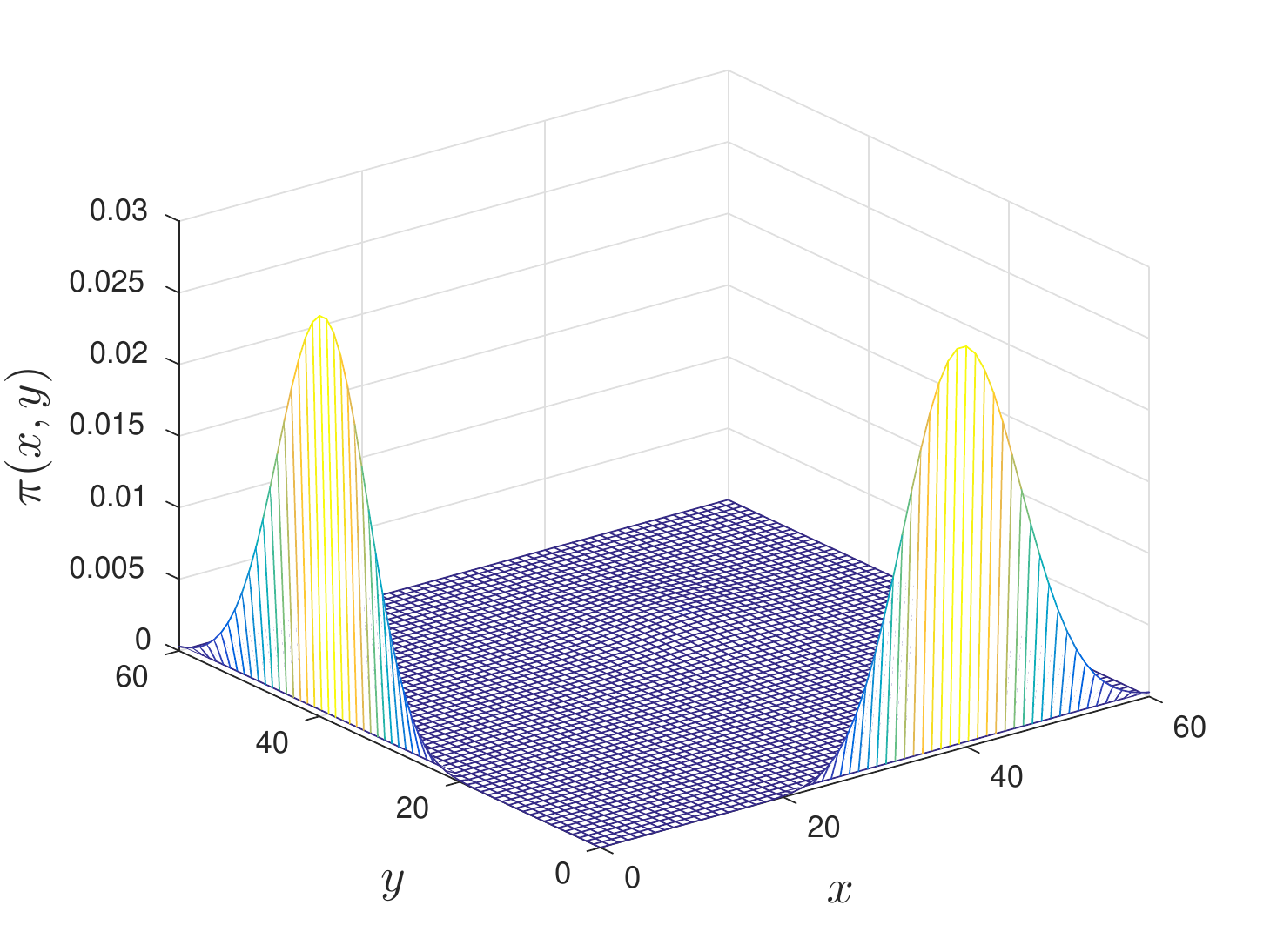}} &
\subfigure[ {Cooperative,} \red{$\beta_1/\beta_{-1}=\beta_2/\beta_{-2}=0.01$.}]{\includegraphics[width=0.5\columnwidth]{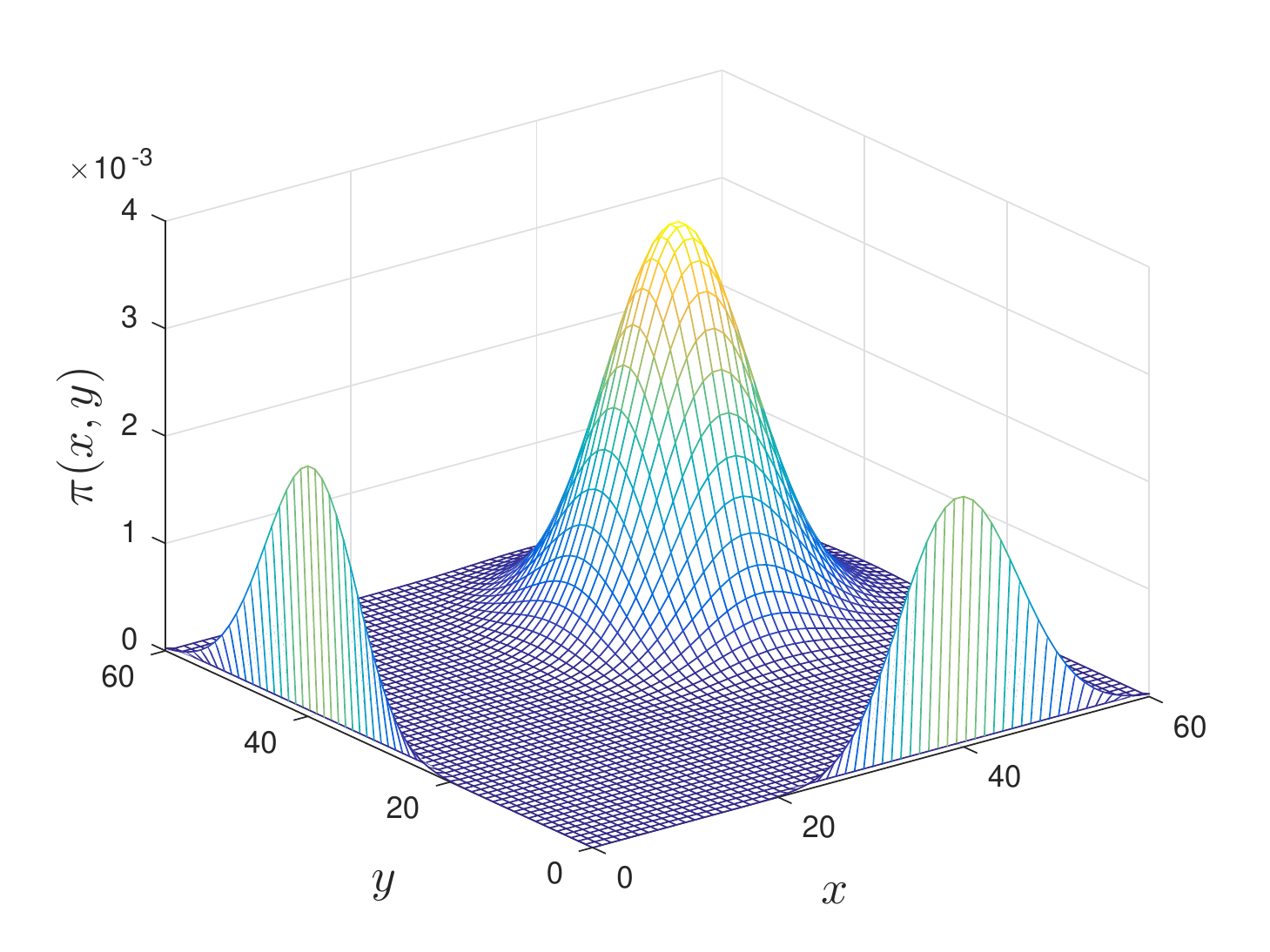}}\end{tabular}
\caption{\textbf{Cooperativity enables tuning of modes' weights.} Comparison of the stationary distribution between non-cooperative and cooperative binding. For all cases: $\alpha_{1}/\alpha_{-1}=\alpha_{2}/\alpha_{-2}=1/200, k_{10}/k_{-1}=k_{20}/k_{-2}=40$.  (a) Diagram of the toggle switch. (b) The stationary distribution corresponding to the non-cooperative case. (c) The stationary distribution corresponding to the cooperative case with $m=n=2, \beta_{1}/\beta_{-1}=\beta_2/\beta_{-2}=1$. (d) The stationary distribution corresponding to the cooperative case with $m=n=2, \beta_{1}/\beta_{-1}=\beta_2/\beta_{-2}=0.01$. All surfaces are plotted using \eqref{e.mixture_m}. }
\label{f.toggle_switch}
\end{figure}

The toggle switch has three modes regardless of the cooperativity index. This
is unlike the deterministic model where only one positive stable state is
realizable with non-cooperative binding,  {and two stable steady states are
  realizable with cooperative binding.} \rv{\S SI-3.3 contains further
  Monte-Carlo simulations that show that the predicted third mode appears with
  a 0.5 time scale separation. Experimentally, a recent pre-print has reported
  that the CRI-Cro toggle switch exhibits the third (high,high) mode and the authors proposed slow-promoter kinetics as a contributing mechanism \cite{jhu17}. }

\subsection*{Synchronization of interconnected toggle switches}
  We consider $N$ identical toggle switches:
\begin{equation}\begin{array}{rcl}
\nonumber \mathrm Y_{ic}+   \mathrm D_{0}^{xi}  &\xrightleftharpoons[\varepsilon \alpha_{-x} ]{\varepsilon \alpha_{x} }& \mathrm D_1^{xi} \\
  \mathrm D_{0}^{xi} &\mathop\rightarrow\limits^{k_x}& \mathrm X_i+ \mathrm D_{0}^{xi}, \\
\mathrm X_i &  \mathop\rightarrow\limits^{k_{-x}}& 0, \\
n\mathrm X_i &  \xrightleftharpoons[\beta_{-x} ]{\beta_x }& \mathrm X_{ic},
\end{array} \begin{array}{rcl}
\nonumber \mathrm X_{ic}+   \mathrm D_{0}^{yi}  &\xrightleftharpoons[\varepsilon \alpha_{-y} ]{\varepsilon \alpha_{y} }& \mathrm D_1^{yi} \\
  \mathrm D_{0}^{yi} &\mathop\rightarrow\limits^{k_y}& \mathrm Y_i+ \mathrm D_{0}^{yi}, \\
\mathrm Y_i &  \mathop\rightarrow\limits^{k_{-y}}& 0, \\
n\mathrm Y_i &  \xrightleftharpoons[\beta_{-y} ]{\beta_y }& \mathrm Y_{ic},
\end{array}\end{equation}
where $i=1,..,N$.
{We interconnect these systems through diffusion of
the protein species $\mathrm X_i,\mathrm Y_i$ among cells, modeled
through reversible
reactions with a diffusion coefficient $\Omega$:
 \begin{equation}\label{e.diffusion}
  \mathrm X_i \xrightleftharpoons[\Omega ]{\Omega } \mathrm X_j,
  \quad \mathrm Y_i \xrightleftharpoons[\Omega ]{\Omega } \mathrm Y_j, \quad i \ne j, \;i,j=1,..,N.
\end{equation}
We study this model as a very simplified version of a more complex quorum
sensing communication mechanism, in which orthogonal AHL molecules are
produced and by cells and act as activators of TFs in
receiving cells, as analyzed for example in \cite{sontag16}.}

Figure \ref{f.toggleNet}a depicts a block diagram
{of such a network.}

For a deterministic model, there exists a parameter range for which all toggle switches will synchronize into bistability for sufficiently high diffusion coefficient \cite{sontag16}.  {This implies each switch in the network behaves as a bistable switch, and it converges with all the other switches to the same steady-states. }

Our aim is to analyze the stochastic model at the limit of slow promot{e}r kinetics and compare it with the deterministic model.

This network is not in the form of the class of networks in Figure 1. Nevertheless, we show in SI-\S 4.1 that our results can be generalized to networks that admit weakly reversible deficiency zero conditional Markov chains.

 There are $4^N$ conditional Markov chains, and using Theorem 3 and Proposition 5, the stationary distribution is a mixture of $4^N-1$ Poissons.

 {Consider now the case of a high diffusion coefficient.} We show (see SI-\S 3.5) that as $\Omega\to \infty$, $X_1,..,X_N$ will synchronize in the sense that the joint distribution of $X_1,..,X_N$ is symmetric with respect to all permutations of the random variables. This implies that the marginal stationary distributions $p_{X_i}, i=1,..,N$ are identical. Hence, for sufficiently large $\Omega$,  the probability mass is concentrated around the region for which $X_1,..,X_N$ are close to each other.
Consequently, for large $\Omega$ we can replace the population of toggle switches with  a \emph{single toggle switch} with the \emph{synchronized protein processes} $X(t), Y(t)$,  {which are defined, for the sake of convenience, as $X(t):=X_1(t), Y(t):=Y_1(t)$.} Next, we describe the stationary distribution of $X(t),Y(t)$.

 The state of synchronized toggle switches does not depend on individual promoter configurations, and it depends only on the total number of unbound promoter sites in the network. Hence, the number of modes will drop from $4^N-1$ to $(N+1)^2-1$. 
 Note that similar to the single toggle switch, there are modes which have both $X,Y$ with non-zero copy number. On the other hand, there are many additional modes.  {Recall that in the case of a} single toggle switch, {we have tuned the cooperativity ratios such that the modes in which both genes are ON are suppressed}. Similarly, the undesired modes can be suppressed by tuning the cooperativity ratio  {which} can be achieved by choosing $\rho_{d_i}^X,\rho_{d_i}^Y, d=0,..,4^N-1$ sufficiently large. In particular, letting the multi-merization ratio $\beta_x/\beta_{-x},\beta_y/\beta_{-y} \to \infty$, the weights of modes in the interior of the positive orthant $\mathbb R_+^2$ approach zero.

 In conclusion, for sufficiently high $\Omega$ and sufficiently high multimerization ratio the population behaves as a \emph{multimodal switch },  {which means that the whole network can have either the gene $\mathrm X$ ON, or the gene $\mathrm Y$ ON. And every gene can take} $2N$ modes which are:
  \[ \left \{ \left ( \frac{i k_x}{Nk_{-x}}, 0   \right ),\left ( 0, \frac{i k_y}{Nk_{-y}}   \right ): i=1,..,N \right\}.\]
Comparing to the low diffusion case, the network will have up to $2^N-1$ modes with sufficiently high multimerization ratio.

In order to illustrate the previous results, consider a population of three toggle switches ($N=3$) and cooperativity  $n=2$.
 {For $\Omega$ greater then a certain threshold, the deterministic system bifurcates into bistabiliy. This means that all toggle switches converge to the same exact equilibria if $\Omega$ is greater than the threshold.  In contrast, the modes in the stochastic model of the toggle switches converge \emph{asymptotically} to each other. Hence,} we need to choose a {threshold} for  $\Omega$ that  { constitutes ``sufficient'' synchronization.}  {We choose to define this as} the protein processes synchronizing within one copy number. In other words, we require the maximum distance between the modes to be less than 1. It can be shown {(see SI-\S 3.5)} that the diffusion coefficient needs to satisfy:
\begin{equation}\label{q_inequality} \Omega\ge\frac 1N \max\{k_x -k_{-x}, k_y-k_{-y} \}. \end{equation}
The minimal  $\Omega$  that satisfies the inequality is $\Omega=75$ in this example.  The stationary distribution is depicted in Figure \ref{f.toggleNet}d.  The network has 15 modes, nine  {of} which are in the interior are suppressed due to cooperativity.
Comparing with the deterministic model,  it bifurcates into synchronization for $\Omega>0.5$. The stable equilibria of synchronized switch are $(149.98,0.022),(0.022,149.98)$.

The stochastic model with slow promoter kinetics adds four additional modes at $(0,100)$, $(100,0)$, $(50,0)$, $(0,50)$. This can be interpreted in the following manner. In the stochastic model,  the protein processes synchronize while the promoter configurations do not. The high states $(150,0),(0,150)$ correspond to the case when all the binding sites are empty. In the case when one binding site is empty, the first gene is producing while the second and the third are not. Due to diffusion,  {the first gene ``shares'' its expressed protein with the other two genes, which implies that each gene will receive a \emph{third} of the total protein copy numbers produced in the network.} A similar situation arises when two binding sites are empty.

\begin{figure*}
  \centering
  \subfigure[Network of toggle switches]{\raisebox{0.23 \height}{\includegraphics[width=0.2\textwidth]{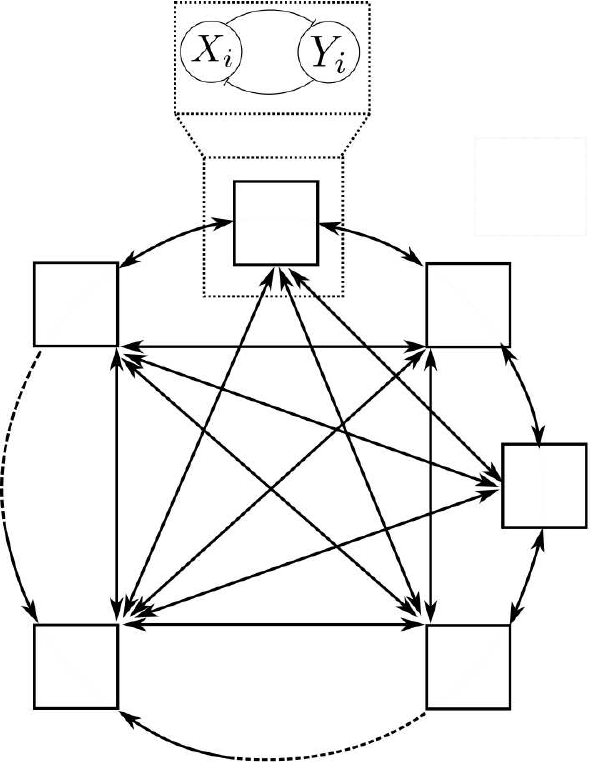}}}
   \raisebox{-0.08\height}{\includegraphics[width=0.7\textwidth]{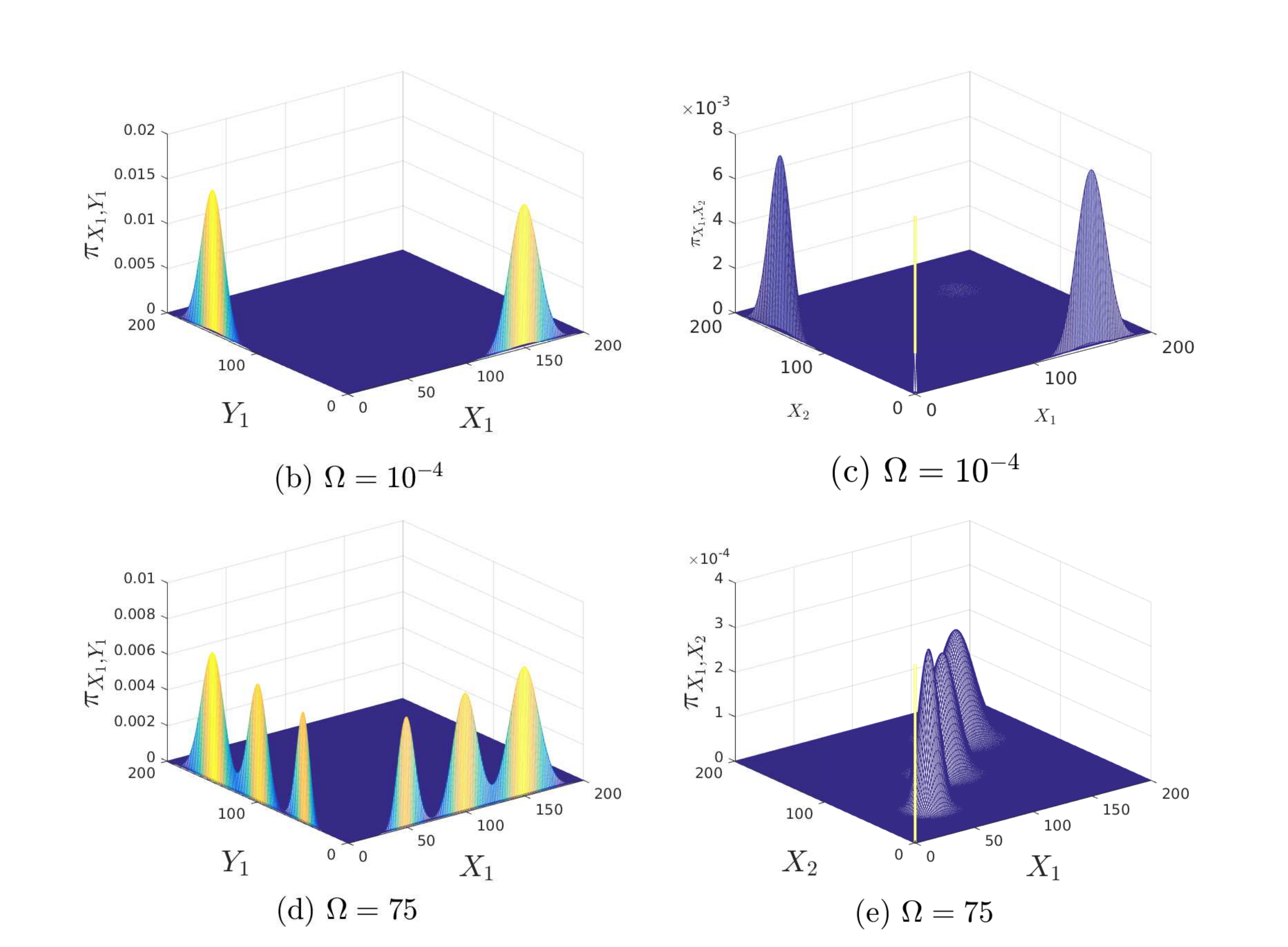}}
    \caption{\textbf{Slow promoter kinetics lead to the emergence of a multi- {modal} toggle switch} (a) A diagram of population of toggle switches.  {Arrows between blocks represent reversible diffusion reactions \eqref{e.diffusion}. Each block contains a toggle switch}. \red{The remaining subfigures show} stationary probability distributions for a population of three identical cooperative toggle switches. Due to the symmetries we plot joint distributions of $X_1,Y_1$ and $X_1,X_2$ only. Subplots (b),(c) depict the uncoupled toggle switches. It can be noted that the $X_1$ and $X_2$ are not synchronized. Subplots (d), (e) depict a high diffusion case. The toggle switches synchronize into a multi-modal toggle switch. \red{Parameters and details for calculating the plots are given in SI-\S 5.5  }}\label{f.toggleNet}
\end{figure*}

{\color{black}
 \subsection*{Trans-Differentiation Network}

We consider two networks for TF cross-antagonism in cell fate decision in this section. Both networks consist of two self-activating genes repressing each other as depicted in Figure \ref{f.pugata}-a \cite{zhou11}. The first network has independent cooperative binding of the TFs to the promoters. So it can be written as follows \cite{feng12}:
 \begin{equation}\label{fate1}
 \begin{array}{rl}
  \mathrm X_2+ \mathrm D_{00}^X \xrightleftharpoons[]{}  \mathrm D_{10}^X,\quad &    \mathrm X_2+ \mathrm D_{01}^X \xrightleftharpoons[]{}  \mathrm D_{11}^X,\\
 \mathrm Y_2+ \mathrm D_{00}^X \xrightleftharpoons[]{}  \mathrm  D_{01}^X, \quad &    \mathrm Y_2+  \mathrm  D_{10}^X  \xrightleftharpoons[]{}  \mathrm  D_{11}^X, \\
\mathrm X_2+ \mathrm D_{00}^Y \xrightleftharpoons[]{}  \mathrm  \mathrm D_{10}^Y, \quad&    \mathrm X_2+  \mathrm  D_{01}^Y  \xrightleftharpoons[]{}  \mathrm   D_{11}^Y, \\
\mathrm Y_2+ \mathrm D_{00}^Y  \xrightleftharpoons[]{}  \mathrm  D_{01}^Y,\quad&     \mathrm Y_2+  \mathrm  D_{10}^Y  \xrightleftharpoons[]{}   \mathrm D_{11}^Y, \\ 2\mathrm X  \xrightleftharpoons[\beta_{-x}]{\beta_x} \mathrm X_2, \quad& 2\mathrm Y  \xrightleftharpoons[\beta_{-y}]{\beta_y} \mathrm Y_2 \end{array}
\begin{array}{rl}
\mathrm D_{00}^X \mathop\rightarrow\limits^{k_{x0}} \mathrm  D_{00}^X + \mathrm X,\quad&  \mathrm D_{01}^X  \mathop\rightarrow\limits^{k_{x1}} \mathrm  D_{01}^X + \mathrm X,  \\ \mathrm D_{10}^X \mathop\rightarrow\limits^{k_{x2}}  \mathrm  D_{10}^X + \mathrm X, \quad &  \mathrm  D_{11}^X \mathop\rightarrow\limits^{k_{x3}} \mathrm  D_{11}^X \mathop\rightarrow\limits^{k_{x3}} \mathrm D_{11}^X + \mathrm X, \\
\mathrm D_{00}^Y \mathop\rightarrow\limits^{k_{y0}}   \mathrm  D_{00}^Y + \mathrm Y,\quad&  \mathrm  D_{01}^Y  \mathop\rightarrow\limits^{k_{y1}} \mathrm  D_{01}^Y + \mathrm Y, \\ \mathrm D_{10}^Y \mathop\rightarrow\limits^{k_{y2}}  \mathrm D_{10}^Y + \mathrm Y, \quad &  \mathrm  D_{11}^Y \mathop\rightarrow\limits^{k_{y3}}  \mathrm  D_{11}^Y \mathop\rightarrow\limits^{k_{y3}} \mathrm D_{11}^Y + \mathrm Y, \\
\mathrm X \mathop\rightarrow\limits^{k_{-x}} \emptyset, \quad& \mathrm Y \mathop\rightarrow\limits^{k_{-y}} \emptyset. \\ \strut
\end{array}
\end{equation}
 In order for the genes to be cross-inhibiting and self-activating we let: $k_{x1}=k_{y2}=0$. Also, $k_{x2}>k_{x0},k_{x3}$ and $k_{y1}>k_{y0},k_{y3}$.

 \begin{figure}
  \centering
    \subfigure[ {A cell fate  network}]{\raisebox{0.1\height}{\includegraphics[width=0.08\columnwidth]{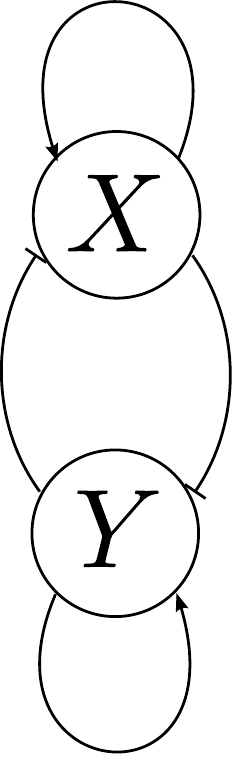}}}
  \subfigure[The stationary distribution of cell-fate circuit \eqref{fate1} ]{\includegraphics[width=0.43\columnwidth]{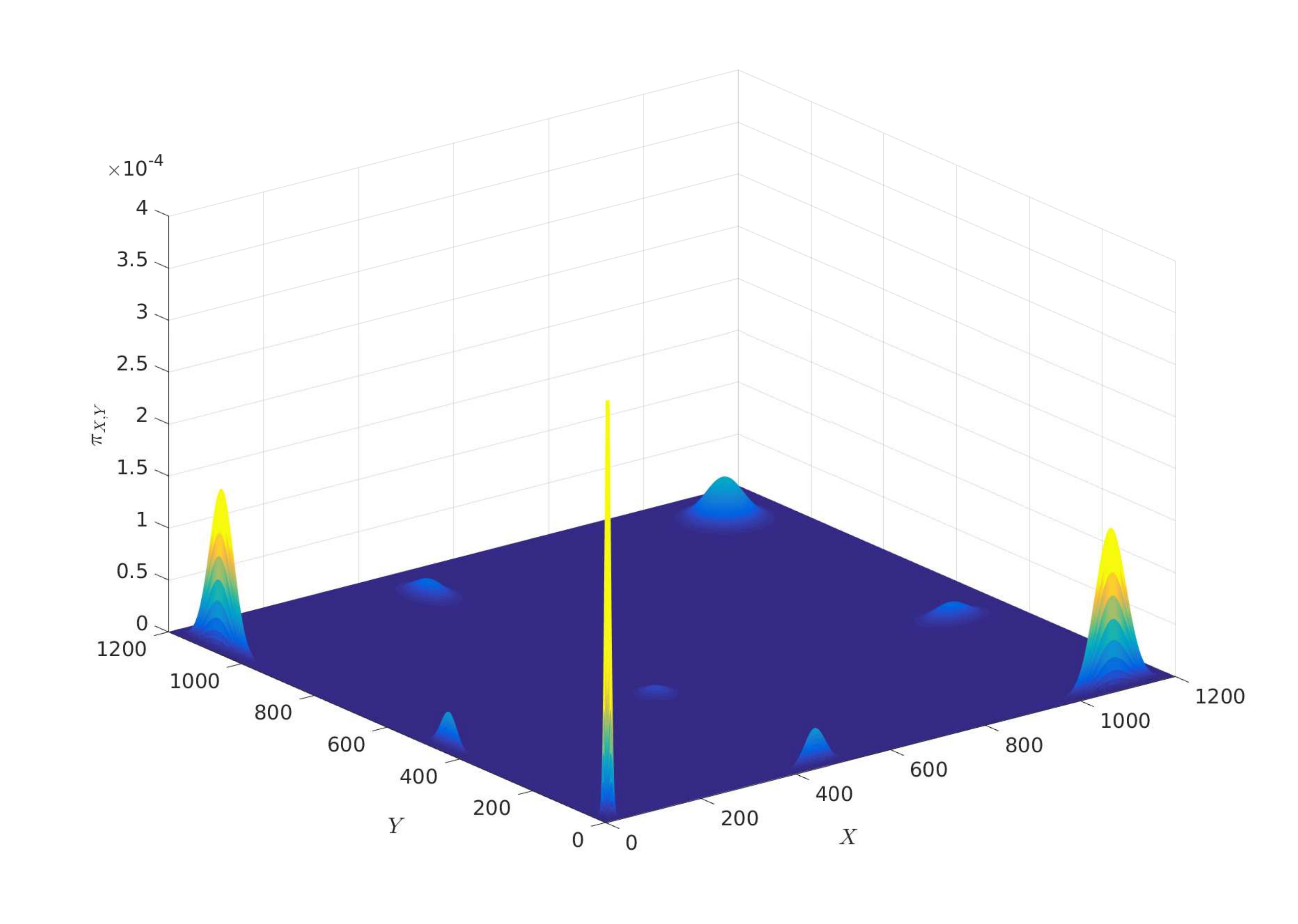}}
  \subfigure[The stationary distribution of a PU.1/GATA.1 network \eqref{fate2}]{\includegraphics[width=0.43\columnwidth]{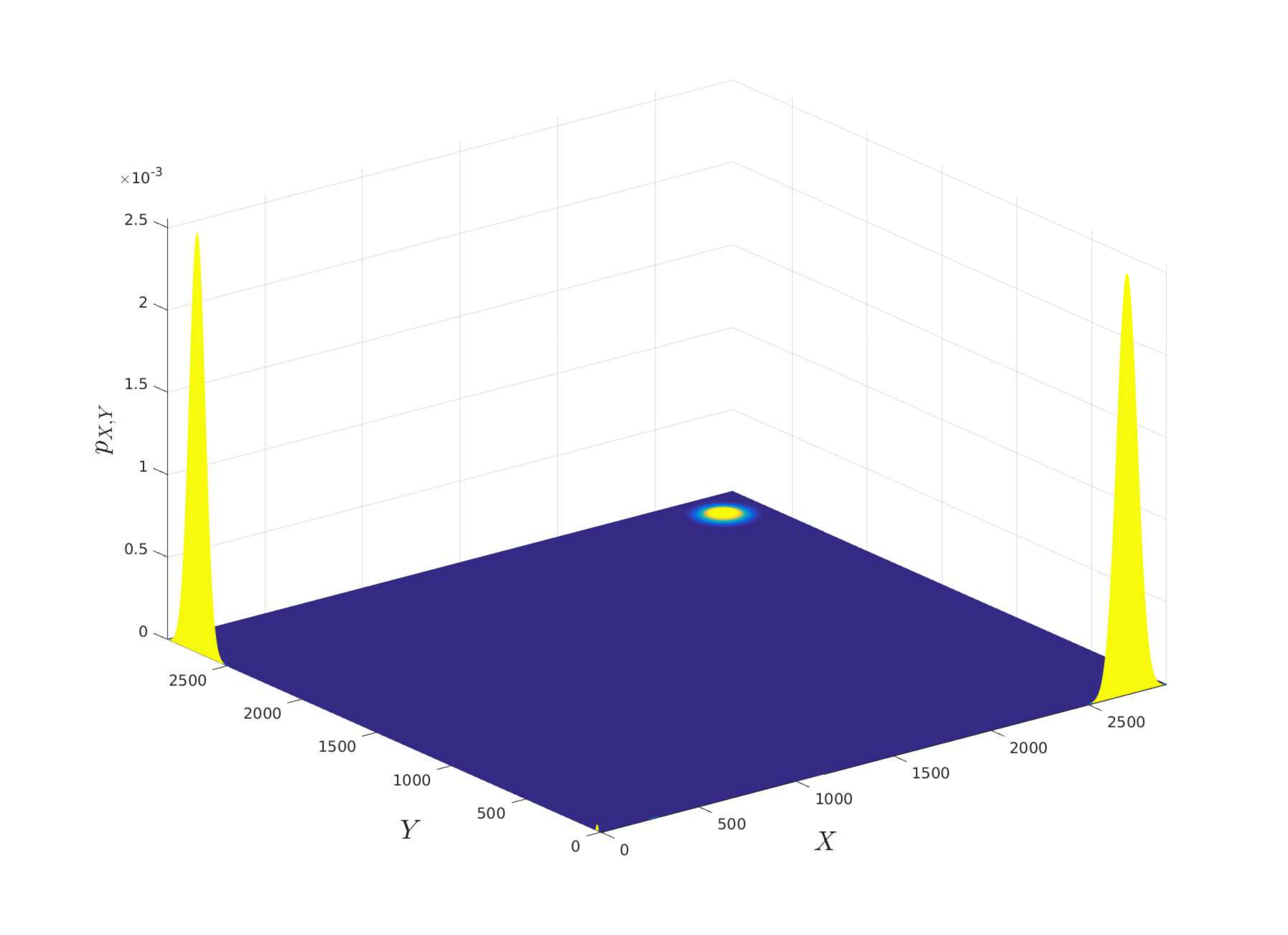}}
  \caption{\textbf{The cell-fate decision network with slow promoter kinetics has more modes than what a deterministic model predicts.} (a) A diagram of a generic cell-fate circuit that can describe both networks \eqref{fate1},  \eqref{fate2}, (b) The joint probability of an archetypical cell-fate circuit  \eqref{fate1} \red{computed using Theorem \ref{th}}. (c) The stationary distribution of a PU.1/GATA.1 circuit  \eqref{fate2}, where $\mathrm X$ denotes PU.1 and $\mathrm Y$ denotes GATA.1. Three modes can be seen. The parameters are given in SI-\S 5.6.} \label{f.pugata}
\end{figure}
The network can be analyzed with the proposed framework, as it consists of two genes each with two binding sites. Hence it can  {theoretically} admit up to 16 modes according to \eqref{e.mixture_m}.
The stationary distribution is depicted in Figure \ref{f.pugata}-b for an example parameter set. Note that despite the fact that we have 16 modes, only eight of them contribute to most of the stationary distribution.  This is to be contrasted with a deterministic model, which cannot produce more than 4 stable equilibria \cite{elife}.

The second network that  we study is a model of the PU.1/GATA.1 network, which is a lineage determinant in hematopoietic stem cells
 \cite{dore11}. Diagrammatically, it can also be presented by Figure \ref{f.pugata}-a. However, it differs from the first network presented above in several ways. First, PU.1 needs GATA.1 to bind to the promoter of GATA.1 \cite{zhang00}, and vice versa \cite{burda16}. In our modelling framework this means that the promoter configurations $\mathrm D_{01}^X,D_{10}^Y$ do not exist, where $\mathrm X$ stands for PU.1 and $Y$ stands for GATA.1. Hence, the network has nine gene states. Second, there is no evidence that PU.1 and GATA.1 form dimers to activate their own promoters cooperatively. In fact, it has been shown that self-activation for GATA-1 occurs primarily through monomeric binding \cite{crossley95}. Therefore, the PU.1/GATA.1 network can be written as follows:
\begin{equation} \label{fate2}
 \begin{array}{rl}
  \mathrm X_2+ \mathrm D_{00}^X \xrightleftharpoons[\alpha_{-x0}]{\alpha_{x0}}  \mathrm D_{10}^X,\quad & \mathrm Y_2+  \mathrm  D_{10}^X  \xrightleftharpoons[\alpha_{-x1}]{\alpha_{x1}}  \mathrm D_{11}^X, \\
\mathrm Y_2+ \mathrm D_{00}^Y  \xrightleftharpoons[\alpha_{-y0}]{\alpha_{y0}}   \mathrm D_{01}^Y,\quad&     \mathrm X_2+  \mathrm  D_{01}^Y  \xrightleftharpoons[\alpha_{-y1}]{\alpha_{y1}}  \mathrm D_{11}^Y,\\
\mathrm X \mathop\rightarrow\limits^{k_{-x}} \emptyset, \quad& \mathrm Y \mathop\rightarrow\limits^{k_{-y}} \emptyset, \\  \end{array}
\begin{array}{rl}
\mathrm D_{00}^X \mathop\rightarrow\limits^{k_{x0}} \mathrm  D_{00}^X + \mathrm X, & \mathrm D_{10}^X \mathop\rightarrow\limits^{k_{x1}}  \mathrm D_{10}^X + \mathrm X, \\ \mathrm D_{00}^Y \mathop\rightarrow\limits^{k_{y0}} \mathrm D_{00}^Y + \mathrm Y &
    \mathrm D_{01}^Y  \mathop\rightarrow\limits^{k_{y1}} \mathrm  D_{01}^Y + \mathrm Y, \\ \strut
\end{array}
\end{equation}
 More detailed discussion of the model is included in \cite{acc18}.

With lack of cooperativity, a deterministic model is only monostable and cannot explain the emergence of bistability for the above network \cite{acc18}. However, using our framework, up to nine modes can be realized. In order to simplify the landscape, we group the nine into four modes. This is possible since the states $\mathrm D_{11}^X, D_{00}^X$, $D_{11}^Y, D_{00}^Y$ have very low production rates. This gives a total of \emph{four} modes which are (low,low),(high,low),(low,high),(high,high). Using our model, we choose the parameters to realize bistability and tristability. Figure \ref{f.pugata}-c depicts the stationary distribution for a set of parameters that satisfies the assumptions and give rise  to a tristable distribution.
}

\subsection*{The Repressilator} \rv{A very different example is provided by a well-studied synthetic oscillator, the repressilator \cite{elowitz00}.   The slow gene activation model predicts oscillations even in the noncooperative binding regime in which the deterministic model as well as the adiabatic stochastic model do not oscillate.  See \S SI-3.4.  }
\section*{Methods}
\red{\subsection*{Numerical Simulation Software}
All calculations were performed using MATLAB R2016a, except for the
computation of deterministic solutions of the
{``quorum sensing''} numerical example where we used Bertini 1.5, which is software for solving polynomials numerically via homotopy methods.}

\section*{Discussion}

 Phenotypical variability in the absence of genetic variation
is a phenomenon of great interest in current biological and
translational research, as it plays an
important role in processes as diverse as
embryonic development \cite{CalvaneseFraga2012},
hematopoietic cell differentiation \cite{SharmaGurudutta2016},
and cancer heterogeneity \cite{EaswaranTsaiBaylin2014}.
A conceptual, and often proposed, unifying framework to explain non-genetic
variability is to think of distinct phenotypes as multiple ``metastable
states'' or ``modes'' in the complex energetic landscape associated to an
underlying GRN.
Following this point of view, we studied in this paper a general but
simplified mathematical model of gene regulation.  Our focus was on
stochastic slow promoter kinetics, the time scale relevant when
transcription factor binding and unbinding are affected by epigenetic
processes such as DNA methylation and chromatin remodeling.
In that regime, adiabatic approximations of promoter kinetics
are not appropriate.
In contrast to the existing literature, which largely confines itself
to numerical simulations, in this work we provided a rigorous analytic
characterization of multiple modes.

The general formal approach that we developed provides
insight into the
relative influence of model parameters on system behavior.  It also allows
making theoretical predictions of how changes in wiring of a gene regulatory
network, be it through natural mutations or through artificial interventions,
impact the possible number, location, and likelihood, of alternative states.
We were able to tease out the role of cooperative binding in stochastic models
in comparison to deterministic models, which is a question of great interest
in both the analysis of natural systems and in synthetic biology engineering.
Specifically, we found that, unlike deterministic systems, the number of modes
is independent of whether the TF-promoter binding is cooperative or not; on
the other hand, cooperative binding gives extra degrees of freedom for
assigning weights to the different modes.
More generally, we characterized the stationary distributions of CMEs for our GRNs as mixtures of Poisson
distributions, which enabled us to obtain explicit formulas for the locations
and probabilities of metastable states as a function of the parameters
describing the system.
One application of our mathematical results was to models of single or
communicating ``toggle switches'' in bacteria, where we showed that, for
suitable parameters, there are a very large number of metastable attractors. \rv{Indeed, stochastic effects have been shown before to lead to  multi-stability in a population of enzymatic reactions \cite{warmflash08}.}

This work was in fact
motivated
by our interest in
hematopoietic cell differentiation,
and in this paper we discussed two possible models of trans-differentiation
networks in mammalian cells.  In a first model, based on previous
publications, we uncovered more modes than had been predicted with different
analyses of the same model.  This implies that in practice there could be
unknown ``intermediate'' phenotypes that result from the network's dynamics,
which may be acquired by cells during the natural differentiation process or
which one might be able to induce through artificial stimulation.  The second
model included only binding reactions that have been experimentally
documented, and as such might be more biologically realistic than the first
model.  For this second model, a deterministic analysis predicts
monostability, which is inconsistent with the fact that the network should
control a switch between two stable phenotypes (erythroid and myeloid). This
suggests that stochasticity, likely due to low copy numbers and/or slow
promoter kinetics, might be responsible for the multiple attractors
(phenotypes) that are possible in cell differentiation gene regulatory
networks.

Our mathematical results, being quite generic, should also be useful in the
analysis of networks that have been proposed for understanding aspects of
cancer biology.
For example,
non-genetic heterogeneity has been recently recognized as an important
factor in cancer development and resistance to therapy, with stochastic
multistability in gene expression dynamics acting as a generator of phenotype
heterogeneity, setting a balance between mesenchymal, epithelial, and
cancer stem-cell-like states \cite{DeanFojoBates2005}
\cite{Huang_et_al_Ingber2005} \cite{Gupta_et_al_Lander2011} \cite{Huang2011},
and nongenetic variability due to multistability arising from mutually
repressing gene networks has been proposed to explain metastatic
progression \cite{Lee_at_al_Balazsi_Rosner}.

\section*{Acknowledgements} We thank Nithin S. Kumar for discussions regarding the PU.1/GATA.1 network and Cameron McBride for proofreading the manuscript. This work was supported by an AFOSR grant FA9550-14-1-0060.

\newpage

\begin{center} \Huge \bfseries Supporting Information
\end{center}

\vspace{1in}

\section{Review of the Master Equation and Markov Chains}%

\subsection{The Chemical Master Equation}
\rv{
 A generic reaction $\mathrm R_j \in \mathscr R$ takes the form: 
\begin{equation}\label{e.reaction}
 \mathrm R_j: \ \sum_{i=1}^{|\mathscr S|} \alpha_{ij} \mathrm Z_i \rightarrow  \sum_{i=1}^{|\mathscr S|} \beta_{ij} \mathrm Z_i,
\end{equation}
where $Z_i \in \mathscr S$, \red{$\alpha_{ij},$ and $\beta_{ij}$ are positive integers}.  The reactions that we consider are limited to at most two reactants. The reverse reaction of $\mathrm R_j$ is the reaction in which the products and reactants are interchanged. If the network contains both the reaction and its reverse then we use the short-hand notation to denote both of them as
\begin{equation}\label{e.reaction_r}
 \mathrm R_j: \ \sum_{i=1}^{|\mathscr S|} \mathrm \alpha_{ij} \mathrm Z_i  \rightleftharpoons  \sum_{i=1}^{|\mathscr S|} \beta_{ij} \mathrm Z_i.
\end{equation}}

\rv{The stoichiometry of a CRN can be summarized by a \emph{stoichiometry matrix} $\Gamma$ which is defined element-wise as follows:
\[  [\Gamma]_{ij} = \beta_{ij}-\alpha_{ij}.\]
The columns of the stoichiometry matrix $\gamma_1,..,\gamma_{|\mathscr R|}$ are known as the stoichiometry vectors.  {We say that} a nonzero nonnegative vector  $d$ {gives} a \emph{conservation law} for the stoichiometry if $d^T \Gamma =0$.}

The dynamics of the network refers to the manner in which the \emph{state} evolves in time, where {the} state $Z(t) \in \mathbf Z \subset \mathbb Z_{\ge 0}^{|\mathscr S|}$ is the vector of  copy numbers of the species of the network at time $t$. Since the collision of molecules is random in nature, the time-evolution of states is described mathematically by a stochastic process.  The standard stochastic model for a CRN is that of a continuous Markov chain. Let $\mathbf Z$ denote the state space. Consider a time $t$ and let the state be $Z(t)=z \in \mathbf Z$. Then, the probability that the $j^{\rm th}$ reaction fires in an interval $[t,t+\delta]$ is $R_j(z) \delta+o(\delta)$. If $\mathrm R_j$ fires, then the states changes from $z$ to $z+\gamma_j$, where $\gamma_j$ is the corresponding stoichiometric vector.

As $Z$ is a stochastic process we are interested in characterizing its qualitative behavior given by the joint probability distribution $p_z(t)=\Pr[Z(t)=z|Z(0)=z_0]$ for any given initial condition $z_0$.   The time-evolution of the probability distribution can be shown \cite{anderson15} to be given by a system of linear ordinary differential equations  known as the \emph{forward Kolmogorov equation} or the \emph{Chemical Master Equation}, given by:
\begin{equation}\label{e.MasterEquation}
\dot p_z(t)= \sum_{j=1}^{|\mathscr R|} R_j(z-\gamma_j) p_{z-\gamma_j}(t) - R_j(z)p_z(t),  {z \in \mathbf Z},
\end{equation}
where $\gamma_1,..,\gamma_{|\mathscr R|}$ are the columns of the stoichiometry matrix.

Since our species are either gene species or protein species, we split the stochastic process $Z(t)$ into two subprocesses: \emph{the gene process} $D(t)$ and \emph{the protein process} $X(t)$, as explained below.

  Consider the $i^{\rm th}$ gene. For each configuration species $\mathrm D_j^i \in B_i$, let  $D_j^i(t) \in \{0,1\}$ denote its occupancy, i.e. if $D_j^i(t)=1$, then at time $t$ the $i^{\rm th}$ gene is in a configuration $j \in B_i$.  It can be seen from gene reactions {(1)-(6)} that the network always has  {a} conservation law supported on $\{\mathrm D_j^i, j \in B_i\}$,  {so that:}
 \[ \sum_{j \in B_i} D_j^i(t) = 1, \]
 which reflects the physical constraint that the {promoter} can be in  {only} one
 configuration at  {any} given time.

 This conservation law enables us to introduce an equivalent reduced
 representation. For each gene we define one process $D_i$ such that $D_i(t)
 \in B_i$.  $D_i(t)=j$ if and only if $D_j^i(t)=1$.
Collecting these into a vector, define the gene process $D(t):=[D_1(t),...,D_N(t)]^T$ where $D(t) \in \prod_{i=1}^N B_i $. The $i^{\rm th}$ gene can be represented by $|B_i|$ states, so $L:={\prod_{i=1}^N |B_i|}$ is  the total number of promoter configurations in the GRN. {With} abuse of notation, we write also $D(t) \in \{0,..,L-1\}$ in the sense of the bijection between $\{0,..,L-1\}$ and $\prod_{i=1}^N B_i $ defined by interpreting $D_1...D_N$ as a binary representation of an integer. Hence, $d \in \{0,..,L-1\}$ corresponds to $(d_1,...,d_N) \in B_1 \times .. \times B_N$ and we write $d=(d_1,..,d_N)$.

 Since each gene expresses a corresponding protein,  we define $X_{i1}(t)\in \mathbb Z_{\ge 0}, i =1,..,N$ protein processes. If the multimerized version of the $i^{\rm th}$ protein participates in the network as an activator or repressor then we define $X_{ic}(t)$ as the corresponding multimerized protein process, and we denote $X_i(t):=[X_{i1}(t),X_{ic}(t)]^T$. If there is no multimerization reaction then we define $X_i(t):=X_{i1}(t)$. Since not all proteins are necessarily multimerized, the total number of protein processes is  $N \le M \le 2N$. Hence,
   the \emph{protein  process} is $X(t)=[X_1^T(t),..,X_N^T(t)]^T \in {\mathbb Z}_{\ge 0}^M $ and the state space can be written as $\mathbf Z= {\mathbb Z}_{\ge 0}^M \times \prod_{i=1}^N B_i$.

Consider the joint probability distribution:
\begin{equation}\label{e.jointpdfSI}
  p_{d,x}(t)= \Pr[X(t)=x, D(t)=d],
\end{equation}
\red{which represents the probability at time $t$ that the protein process $X$ takes the value} $x \in {\mathbb Z}_+^M$ \red{and the gene process $D$ takes the value} $d\in \{0,..,L-1\}$.  {Recall} \red{that $x$ is a vector of copy numbers for the protein processes while $d$ encodes the configuration of each promoter in the network.} Then, we can define for each fixed $d$:
\begin{equation}\label{p_dSI} p_d (t):= [p_{dx_0}(t),p_{dx_1}(t),....]^T, \end{equation}
\red{representing the vector enumerating  the probabilities \eqref{e.jointpdfSI} for all values of $x$ and for a fixed $d$, }where $x_0,x_1,..$ is an indexing of ${\mathbb Z}_{\ge 0}^M$. Note that $p_d(t)$ can be thought of as an infinite vector with respect to the aforementioned indexing. Finally, let \begin{equation}\label{e.vectordecompSI} p(t):=[p_0(t)^T,...,p_{L-1}^T(t)]^T \end{equation}
\red{representing a concatenation of the vectors \eqref{p_dSI} for $d=0,..,L-1$.} Note that $p(t)$ is a finite concatenation of infinite vectors.

\red{The joint stationary distribution $\bar\pi$ is defined as the following limit,  {which we assume to exist and be independent of the initial distribution:}
\begin{equation}\label{pi_defSI}
\bar\pi = \lim_{t \to \infty } p(t).
\end{equation}}
The stationary distribution $\bar \pi$ is a function of $\varepsilon$ also.

Consider a given GRN.  The master equation \eqref{e.MasterEquation} \red{is defined over a countable state space $\mathbf Z$ which can be enumerated with an arbitrarily chosen order. Hence, the  master equation can be interpreted as an infinite system of} differential equations. Its infinite infinitesimal generator matrix $\Lambda$   can be written succinctly entry-wise as:
\begin{equation}\label{e.Q} \lambda_{ z \tilde z}:=\left \{ \begin{array}{ll} R_j(z) & \mbox{if} \, \exists j \ \mbox{such that} \  \tilde z=z-\gamma_j  \\ -\sum_{\tilde z \ne z} \lambda_{z \tilde z} = -\sum_{j=1}^{|\mathscr R|} R_j(z) & \mbox{if} \ \tilde z= z \\ 0 & \mbox{otherwise}  \end{array} \right . , \end{equation}
\red{where $\lambda_{z\tilde z}$ refers to the rate of transition from $z$ to $\tilde z$.}
The matrix $\Lambda$ is stochastic, which means that it is Metzler and $1^T \Lambda=0$. A Metzler matrix is a matrix whose off-diagonal elements are non-negative.

\subsection{Irreducibility}
An important property in the context of  Markov chain analysis is that of \emph{irreducibility} \cite{norris}, and its significance stems from the fact that it is a necessary condition for the existence of a unique positive stationary distribution. Consider the Markov chain $Z(t)$ defined on $\mathbf Z$ with an associated infinitesimal generator $\Lambda$ as given in \eqref{e.Q}. Let $z,w \in \mathbf Z$. Then, it is said that $z$ leads to $w$ if there exist states $z_0,...,z_n \in \mathbf Z$ such that $\lambda_{zz_0} \lambda_{z_0z_1} ... \lambda_{z_nw}  >0$. A set $U \subset \mathbf Z$ is said to be  a \emph{communicating class} if for every $z_1,z_2 \in U$, $z_1$ leads to $z_2$ and $z_2$ leads to $z_1$. The state space $\mathbf Z$ can always be partitioned into a disjoint union of communicating classes \cite{norris}. The Markov chain is said to be \emph{irreducible} if the state space is a \emph{communicating class}. A communicating class $U$ is said to be \emph{closed} if $z \in U$, and $z$ leads to $w$ implies $w \in U$. A Markov chain is said to be \emph{weakly irreducible} if it has a unique closed communicating class $U$, and for all $z \in \mathbf Z$, $z$ leads to some element $U$.

We state the following result, under assumption A4:

\begin{proposition}\label{th.irreducible}Consider a gene regulatory network that consists of $N$ gene expression blocks. Then the associated Markov chain is {weakly irreducible}.
\end{proposition}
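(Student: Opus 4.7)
The plan is to exhibit a canonical state $z^{*} = (0, d^{*}) \in \mathbf Z$ and to prove that every state of the chain leads to $z^{*}$; this will then force the set of states reachable from $z^{*}$ to be the unique closed communicating class. Concretely, take $d^{*}$ to be the promoter configuration in which every TF binding site is vacant and in which each constitutive promoter sits in a fixed reference configuration (say $D_i = 0$). Thus $z^{*}$ represents the network with no proteins, no multimers, and no bound TFs. Assumption A4 is only used in a mild way: if the block graph were disconnected, the same argument would run on each component independently, so we may assume connectedness.

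To show that every state $(x, d)$ leads to $z^{*}$, I would proceed in two phases. First, deplete all proteins and multimers: each multimer $\mathrm X_{ic}$ can be dissociated via the reverse of $n \mathrm X_i \rightleftharpoons \mathrm X_{ic}$, and each monomer $\mathrm X_i$ can be removed by $\mathrm X_i \to \emptyset$; applying these finitely many times drives the protein process to $0$ and reaches a state of the form $(0, d')$. Second, empty every bound promoter site: each unbinding reaction $\mathrm D_j^i \to \mathrm{TF} + \mathrm D_{j''}^i$ releases a TF (possibly a multimer) into the system, after which I immediately apply any needed demultimerization followed by decay before firing the next reaction. Iterating this ``unbind-then-decay'' procedure clears every bound site, and the autonomous transitions $\mathrm D_0^i \rightleftharpoons \mathrm D_1^i$ for constitutive promoters allow each such promoter to be placed in its reference configuration. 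Every step used has strictly positive propensity whenever its reactants are present, hence strictly positive rate in the generator $\Lambda$, so $(x, d)$ leads to $z^{*}$.

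With this reachability in hand, let $C = \{w \in \mathbf Z : z^{*} \text{ leads to } w\}$. By construction $C$ is closed under the dynamics, and for any $w_1, w_2 \in C$ one has $w_1 \to z^{*} \to w_2$, so $C$ is a single communicating class. If $C'$ is any other closed communicating class, pick $w \in C'$; then $w \to z^{*}$, and closedness of $C'$ forces $z^{*} \in C'$. Since communicating classes are either equal or disjoint and $z^{*} \in C \cap C'$, we conclude $C = C'$. Hence $C$ is the unique closed communicating class, and every state leads to an element of $C$, which is precisely the definition of weak irreducibility.

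The main obstacle is the interaction between unbinding and rebinding: a freshly released TF can, at positive rate, rebind to that promoter or another one before the next binding site is cleared. The resolution is to note that weak irreducibility is a purely structural (set-theoretic) property of the generator, not a probabilistic one --- it suffices to display any single finite sequence of reactions from $(x, d)$ to $z^{*}$ each of whose transitions has positive rate. Interleaving each unbinding with the immediate demultimerization and decay of the released TF before firing the next promoter reaction accomplishes exactly this, and completes the proof.
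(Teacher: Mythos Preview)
Your proof is correct and follows essentially the same route as the paper: both arguments identify the ``all-zero'' state (no proteins, all promoters unbound) as a universal sink reachable from every state via unbinding, reverse multimerization, and decay, and then conclude that the communicating class containing this state is the unique closed class. The only cosmetic difference is the order of operations---the paper first unbinds all promoters and then decays the released proteins in bulk, whereas you interleave each unbinding with an immediate decay---but this is a trivial variation of the same idea.
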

\begin{proof}
Consider the state $0 \in \mathbf Z$. We first show that for all $z \in \mathbf Z$, $z$ leads to $0$. Let $z=(x_1,d_1,..,x_N,d_N)$. We list the set of reactions, i.e transitions, that will lead to 0. Consider $d_i\ne 0$, if $d_i=1$ then we apply either the reaction (1) or (2). If $d_i=10$, then we apply reaction (3) and if $d_i=01$ we apply reaction (5). If $d_i=11$, then we apply reactions (3) and (4). Hence, $z$ leads to a state of the form $(x_1,0,x_2,0,...,x_N,0)$. Similarly, we can apply the decay reactions (8) and the reverse dimerization until we reach the origin.

Now we show that there exists a closed communicating class. If $0$ does not lead to any state then $\{0\}$ is a closed communicating class.  Otherwise, let $U$ be the smallest communicating class containing $0$. Note that $U$ is closed, since if there exists $z \in U$ that leads  to $w$, then $w$ leads to $0 \in U$.

In order to show that $U$ is unique, assume that there exists another closed communicating class $U'$. But this contradicts with the fact that all $z \in U'$ lead to $0 \in U$. We have shown that for all $z \in \mathbf Z$, $z$ leads to 0. Hence $z$ leads to $U$. \end{proof}

\begin{remark}
For finite Markov chains, weak irreducibility with appropriate stochastic stability assumptions are sufficient for the existence of a \emph{nonnegative} unique stationary distribution \cite{yin97}, while irreducibility is usually needed for the existence a \emph{positive} stationary distribution. Note that not all GRNs are irreducible. However, our subsequent results require weak irreducibility only, and investigation of irreducibility is out of the scope of this paper. Nevertheless, necessary and sufficient graphical conditions for irreducibility can be developed and are subject to future work.
\end{remark}

\section{Proofs of the Main Results}
In this section we include mathematical proofs of the main results in the main text.

\subsection{Decomposition of the Master Equation}
We include a proof for Proposition 1.

By the time-scale separation assumption, the gene reactions are slow and the protein reactions are fast. Then \eqref{e.Q} can be written as:
  \begin{equation}\label{e.Qdecomp} \lambda_{ z \tilde z}:=\left \{ \begin{array}{ll}  R_j^{(f)}(z)+ \varepsilon R_j^{(s)}(z) & \mbox{if} \, \exists j \ \mbox{such that} \  \tilde z=z-\gamma_j  \\ -\sum_{\tilde z \ne z} \lambda_{z \tilde z} = -\sum_{j=1}^{|\mathscr R|}   R_j^{(f)}(z)+\varepsilon R_j^{(s)}(z) & \mbox{if} \ \tilde z= z \\ 0 & \mbox{otherwise}  \end{array} \right ., \end{equation}
  where $(f),(s)$ denote fast and slow, respectively.

Hence, the summation in \eqref{e.MasterEquation} can decomposed into two terms. This implies  that the system matrix can be written as a sum of a fast matrix $\tilde \Lambda$ and a slow matrix $\varepsilon\hat \Lambda$ as in Eq. (14).

We now show that Eq. (15) holds, \red{which amounts to showing that $\tilde\Lambda$ is block diagonal}. Assume $\exists j$ such that $\tilde z=z-\gamma_j$. Let $z=(x,d),\tilde z=(\tilde x,\tilde d)$ with $\tilde d\ne d$.  As can be seen in Figure  2, protein reactions do not change the promoter configuration state $d$. \red{Hence, the transition rate $\lambda_{z\tilde z}$ has terms corresponding to gene reactions rate only, i.e.,} $\lambda_{z\tilde z}=\varepsilon R_j^{(s)}(z)$. \red{Hence, $\tilde \Lambda$ is block diagonal.}
 \hfill $\blacksquare$

 \subsection{Analytic Expression of the Conditional Probability Distributions}
 We include a proof of Proposition 2.

 As mentioned before, the stationary distribution is the product of the marginal stationary distributions, since the underlying conditional stochastic processes are independent.
If $\mathrm X_i$ does not form a multimer then it is known that the stationary distribution of the reaction network \eqref{e.birthdeathSI} is Poisson with mean $k_{id_i}/k_{-i}$ as in \eqref{e.cond_disSI},

Assume, instead, that $\mathrm X_i$ forms a multimer. In order to simplify notations, we drop the index $i$ and write $  \emptyset \xrightleftharpoons[ k_{-}]{ k} \mathrm X, \ n \mathrm X \xrightleftharpoons[  \beta_{-}]{ \beta} \mathrm X_{n}.$  Let $x_1,x_2$ denote the molecular counts of $\mathrm X,\mathrm X_n$. Then, the master equation is
\begin{align}\label{master2} \dot p_{x_1,x_2} &= \left (k p_{x_1-1,x_2}  -  k_- x_1 p_{x_1,x_2}  \right ) + \left (  k_-(x_1+1) p_{x_1+1,x_2}  - k p_{x_1,x_2}  \right ) \\ &+ \left (  \beta_{-2} (x_2+1) p_{x_1-2,x_2+1}  - \tfrac 1{n!}  \beta \prod_{k=0}^{n-1}(x_1-k) p_{x_1,x_2}\right ) \\ & + \left ( \tfrac 1{n!} \beta \prod_{k=1}^n(x_1+k) p_{x_1+n,x_2-1}   -  \beta_{-}x_2 p_{x_1,x_2}  \right )..\end{align}
We solve the recurrence equation assuming detailed balance, and then verify that   the obtained solution, which is given in \eqref{e.cond_disSI}, solves \eqref{master2} \hfill $\blacksquare$

\subsection{The Stationary Distribution as a Mixture of Poisson Distributions}
We include here the proof of Theorem 3.

Recall the slow-fast decomposition \red{of the master equation} in Eq. (14).  {Recall the joint stationary distribution \eqref{pi_defSI}. In order to emphasize the dependence on $\varepsilon$ we} \red{denote $\bar\pi^\varepsilon:=\bar \pi(\varepsilon)$}. \red{Hence,} $\bar\pi^\varepsilon$ is the unique stationary distribution that satisfies
 $ \Lambda_\varepsilon \bar\pi^\varepsilon=0$, $\pi^\varepsilon>0$, and $\sum_z \pi_z^\varepsilon = 1 $,  {where the subscript denotes the value of the stationary distribution at $z$}.

  Our objective is to characterize the stationary distribution as $\varepsilon \to 0$.
 Writing $\bar\pi_\varepsilon$ as an asymptotic expansion  {to first order} in terms of  $\varepsilon$, we have
\begin{equation}\label{e.expansionSI} \bar\pi^\varepsilon = \bar\pi^{(0)} + \bar\pi^{(1)} \varepsilon + o(\varepsilon).\end{equation}

Our aim is to find $\bar\pi^{(0)}$.
Substituting $\bar\pi^\varepsilon$ in Eq. (14),  and equating the coefficients of the powers of $\varepsilon$ to zero we obtain the following two equations:
\begin{align}\label{e.1steq}  {\tilde \Lambda \bar \pi^{(0)}}
& = 0 \\\label{e.2ndeq}    {\tilde \Lambda \bar \pi^{(1)}+ \hat\Lambda \bar\pi^{(0)}}
&= 0
\end{align}
 {where $\tilde \Lambda$ is given in Eq. (15).}
  \eqref{e.1steq} implies that $\bar\pi^{(0)} \in \ker \tilde\Lambda$\rv{, where $\ker$ denotes the kernel of $\tilde\Lambda$.}  We next show how to compute $\ker \tilde\Lambda$.

 Recall the conditional Markov chains with the associated infinitesimal generators as in Eq. (16). By the assumptions,  for each $d \in \{0,..,L-1\}$ there exists a unique {$\pi_{X|d}$} such that:
   {$ \Lambda_d \pi_{X|d} = 0,$
  $\pi_{X|d}>0$, and $\sum_x \pi_{X|d}(x) = 1 $}. {Recall} that $\pi_{X|d}$ is the stationary distribution of the Markov chain conditioned on $D(t)=d$.

  Defining the extended conditional distributions for $d=0,.., L-1$ as: \begin{equation} \label{e.concat}\bar \pi_{X|d}:=[ \overbrace{\mathbf 0^T \ ... \ \mathbf 0^T }^{d-1} \pi_{X|d}^T  \overbrace{\mathbf 0^T \ ... \ \mathbf 0^T }^{L-d} ] ^T.\end{equation}
  \rv{The stationary distribution above can be interpreted as a function as follows:} $\bar \pi_{X|d}(x,d)=\pi_{X|d}(x)$, and $\bar \pi_{X|d}(x,d')=0$ when $d'\ne d$.

 Then $\ker \tilde\Lambda = \mbox{span}\{\bar \pi_{X|0},..,\bar \pi_{X|L-1} \}$. Hence, we can write:
\[\bar\pi^{(0)}= \sum_{i=0}^{L-1} \lambda_i \bar \pi_{X|i},\]
for some $\lambda_0,...,\lambda_{L-1}\ge 0$. We normalize them to satisfy $\sum_{d=0}^{L-1} \lambda_d=1$.

In order to satisfy \eqref{e.2ndeq}, we utilize the fact that each $\Lambda_d$ is an infinitesimal generator which satisfies $\mathbf 1^T \Lambda_d=0$. Hence, we pre-multiply \eqref{e.2ndeq} by the vectors: $[\mathbf 1^T \ \mathbf 0^T \ ...   \mathbf 0^T]^T$, $[\mathbf 0^T \ \mathbf 1^T \ ...  \mathbf 0^T]^T$, $[\mathbf 0^T \ \mathbf 0^T \ ...    \ \mathbf 1^T]^T$ in order to get the following $L$-dimensional linear system:
\begin{equation}\label{e.solutionSI} \Lambda_r \lambda:=\begin{bmatrix}\mathbf 1^T & \mathbf 0^T &...  &  \mathbf 0^T \\ \mathbf 0^T & \mathbf 1^T &...  &  \mathbf 0^T \\ & & \ddots & \\ \mathbf 0^T & \mathbf 0^T &...  &  \mathbf 1^T \end{bmatrix} \hat\Lambda \, [\bar \pi_{X|0} \ \bar \pi_{X|1} \ ... \ \bar \pi_{X|L-1} ] \begin{bmatrix} \lambda_0 \\ \vdots \\ \lambda_{L-1} \end{bmatrix} = 0.
\end{equation}
   {Furthermore,}  we need the following {normalization} equation to find $\lambda_0,...,\lambda_{L-1}$ uniquely:
\begin{equation}\label{e.sum1}
\lambda_0+ ... + \lambda_{L-1} = 1.
\end{equation}

This is equivalent to  stating that $\lambda=[\lambda_0,...,\lambda_{L-1}]$ is the principal eigenvector of $\Lambda_r$. \hfill $\blacksquare$

\subsection{Computation of the Reduced-Order Markov Chain's Generator}
Recall that the generator of the reduced-order Markov chain can be written as follows:
\begin{equation}\label{e.solutionSI_explicit}
  \Lambda_r=\begin{bmatrix}\mathbf 1^T \hat \Lambda_{00} \pi_0 & \dots & \mathbf 1^T \hat \Lambda_{0{L-1}} \pi_{L-1} \\ \vdots &\ddots & \\ \mathbf 1^T \hat \Lambda_{{L-1}0} \pi_0 & \dots & \mathbf 1^T \hat \Lambda_{(L-1)(L-1)} \pi_{L-1}  \end{bmatrix}.
\end{equation}

The $(d',d)$ entry represents the probability of transition from the configuration $d'$ to configuration $d$, and it can be interpreted as a weighted conditional expectation of $\pi_d$.

 {Consider the reduced chain},  {and} fix a configuration $d$. Then, the maximum number of possible  {transitions out of $d$} is {given by the number of reactions which is} $\frac 12 \sum_{i=1}^N |B_i| $. Hence, $\Lambda_{r}$ is a sparse matrix  {for large $N$}. {Computation of} the infinite matrices and matrix product in \eqref{e.solutionSI_explicit} \red{can be cumbersome for networks with multiple genes. Hence,} we provide an algorithm for computing the nonzero entries in $\Lambda_r$.
This can be achieved by considering all the possible transitions from a configuration $d=(d_1,..,d_N)$. {Specifically, we consider a transition} from $d$ to  $d'$ by a gene reaction modifying a single promoter configuration. For instance consider $D_{d_i}$. Then for a constitutive or single TF-gene binding/unbinding, there can be only one transition starting  {from} $D_{d_i}$. This transition is  either the forward or reverse reaction in (1) or (2), respectively. For the case of two TFs, there can be two reactions among (3)-(6).

The algorithm can be described as follows:
\begin{proposition}\label{th.algorithm}The matrix $\Lambda_r$ in \eqref{e.solutionSI} can be computed via the algorithm below.
\end{proposition}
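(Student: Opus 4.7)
The plan is to derive the algorithm from the explicit matrix expression \eqref{e.solutionSI_explicit} for $\Lambda_r$, exploiting the block structure of $\hat\Lambda$ and the Poisson product form of the conditional distributions given in Proposition~2. First, I would examine $\hat\Lambda_{d'd}$, the $(d',d)$ block of the slow matrix $\hat\Lambda$ acting on the protein state. Because every slow reaction \eqref{e.ba}--\eqref{e.b10} modifies exactly one promoter (and leaves all protein and multimer copy numbers unchanged, since unbinding does not release the TF in the bookkeeping used here), $\hat\Lambda_{d'd}$ is nonzero only when $d$ and $d'$ differ in exactly one coordinate $i$, and the pair $(d'_i,d_i)$ corresponds to one of the gene reactions in the $i$-th gene block. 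The diagonal block $\hat\Lambda_{dd}$ contains the negative total outflow from $d$, and all other blocks vanish.

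Next I would compute $\mathbf{1}^T \hat\Lambda_{d'd} \pi_{X|d}$ reaction by reaction. There are two cases: unimolecular unbinding reactions of the form $\mathrm D^i_{d_i}\xrightarrow{\alpha_{-}}\mathrm D^i_{d'_i}$ have propensity $\alpha_{-}$ (because, conditioned on $D=d$, the indicator of configuration $d_i$ equals $1$), so the contribution to the $(d',d)$ entry is simply $\alpha_{-}$. Bimolecular binding reactions of the form $\mathrm{TF}+\mathrm D^i_{d_i}\xrightarrow{\alpha}\mathrm D^i_{d'_i}$ have propensity $\alpha \cdot \mathrm{TF}$, so summing over all protein states weighted by $\pi_{X|d}$ produces $\alpha\,\mathbb{E}[\mathrm{TF}\mid D=d]$. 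Invoking Proposition~2, $\pi_{X|d}$ factorizes as a product of Poisson distributions, so the conditional expectation of a multimer $\mathrm X_{ic}$ can be computed explicitly: using the Poisson marginal in \eqref{e.cond_dis2},
\[
\mathbb{E}[\mathrm X_{ic}\mid D=d]
=\frac{k_{id_i}^{n_i}\beta_i}{n_i!\,k_{-i}^{n_i}\beta_{-i}},
\]
which recovers \eqref{e.rates}. This yields precisely the reduced reaction \eqref{e.reversible_gene} as a summand of $\Lambda_r$.

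Then I would assemble $\Lambda_r$ by summing these contributions over all gene reactions. The off-diagonal entry $[\Lambda_r]_{d'd}$ is the total rate of the reduced reactions that carry the promoter configuration from $d$ to $d'$, and the diagonal entry $[\Lambda_r]_{dd}$ is the negative row sum, since $\mathbf{1}^T\hat\Lambda_{dd}\pi_{X|d}=-\sum_{d'\ne d}\mathbf{1}^T\hat\Lambda_{d'd}\pi_{X|d}$ by conservation of probability (the rows of each $\hat\Lambda$ column sum over $d'$ to zero when the column corresponds to a promoter state). This reproduces exactly the algorithm described informally in the main text: walk through all gene reactions emanating from each configuration $d$, replace each binding reaction by a unimolecular reaction with rate $\alpha\,\mathbb{E}[\mathrm{TF}\mid D=d]$, keep each unbinding reaction with its original constant, and write down the resulting finite-state infinitesimal generator.

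The main subtlety I expect is the bookkeeping for the two-binding-site and competitive-binding cases covered by \eqref{e.b00x}--\eqref{e.b10}: one must verify that the block $\hat\Lambda_{d'd}$ is nonempty only for transitions among the configurations reachable by a single binding or unbinding step, and that when both TFs appear as distinct species the correct conditional expectation (the one corresponding to the specific TF bound in that particular reaction) is used. This reduces to a careful case analysis, but no new ideas are required beyond the Poisson factorization of Proposition~2 and the fact that gene reactions act diagonally on protein degrees of freedom. Once these are in place, the expression \eqref{e.solutionSI_explicit} reduces entry-by-entry to the claimed algorithm.
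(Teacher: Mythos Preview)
Your approach is essentially the same as the paper's: compute each off-diagonal entry of $\Lambda_r$ as $\mathbf 1^T\hat\Lambda_{d'd}\pi_{X|d}$, split into the monomolecular and bimolecular cases, and evaluate the bimolecular case as $\alpha\,\mathbb E[\mathrm{TF}\mid D=d]$ using the Poisson product form of Proposition~2; the diagonal follows from $\mathbf 1^T\hat\Lambda=0$.

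One factual slip to correct: your parenthetical ``unbinding does not release the TF in the bookkeeping used here'' is wrong in this model. The gene reactions \eqref{e.b0}--\eqref{e.b10} \emph{do} change the TF copy number (binding consumes one copy of the multimer, unbinding releases one). Fortunately this has no effect on your computation: the block $\hat\Lambda_{d'd}$ is not diagonal in the protein indices, but because you left-multiply by $\mathbf 1^T$ you sum over all target protein states, so only the source-state propensity survives and you still obtain $\alpha\,\mathbb E[\mathrm{TF}\mid D=d]$ (respectively $\alpha_-$). Just remove that parenthetical and the argument is clean; the reason $\hat\Lambda_{d'd}$ vanishes unless $d,d'$ differ in a single coordinate is simply that each gene reaction touches only one promoter, independent of what happens to the protein coordinates.
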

\begin{itemize}\itshape
\item For each $d \in \{0,..,L-1\}$ write $d=(d_1,..,d_N)\in \prod_{i=1}^N B_i$. Using the previously discussed identification:
\begin{itemize}
\item[-]  {Let $\mathscr R_d=\{\rm R_1,..,\rm R_{|\mathscr R_d|}\}$ the set of all gene reactions. Then, for each $j \in \{1,..,|\mathscr R_d|\}$}:
      \begin{enumerate}
   \item Let  {$D_{d_{i} }^i$, and} $D_{d_{i'} }^i$ be the  {reactant and product} configuration species of the {$\rm R_j$}.  {Hence, the reaction will cause a transition from $d$ to} $d'=(d_1,..,d_{i'},..,d_N)$. Let $\alpha$ be the kinetic constant of  {$\rm R_j$}. If  {$\rm R_j$} is a binding reaction, then let $\mathrm X_{\bar i }$, $\mathrm X_{\bar ic }$  denote the TF or the multi-merized TF,  {where $\bar i$ denotes the index of the gene that expresses the TF.}
  \item Then, the $(d',d)$ entry of $\Lambda_r$ can be written as:
     \begin{equation}\label{e.ratesSI}
         [\Lambda_r]_{d'd}= \left \{ \begin{array}{ll} \alpha, & \mbox{if the reaction is monomolecular}  \\
       \\  {\frac{\alpha}{n_{\bar i}!}  \frac{ \beta_{\bar i  }}{  \beta_{-\bar i}}\left (\frac{k_{\bar i d_{\bar i }}}{k_{-\bar i}} \right )^{n_{\bar i}}}, & \mbox{if the reaction is bimolecular}\end{array}\right .
     \end{equation}
   \end{enumerate}

\item[-] Set \begin{equation}\label{e.rate_d}\mathbf [\Lambda_r]_{d d}  = -\sum_{i'\ne i} \mathbf 1^T \hat \Lambda_{d_{i'} d_i } \pi_{d}.\end{equation}
\end{itemize}
\item Set the rest of the entries of $\Lambda_r$ to zero. \\
\end{itemize}

\emph{Proof.} Recall that in Eq. (14), the matrix $\hat\Lambda$ represents the slow matrix, which corresponds to the gene binding reactions. Hence, $\hat\Lambda_{dd'}$ represents the matrix corresponding to the transition between states of the form $(x,d)$ and $(x,d')$. Assume that $D(t)=d=(d_1,...,d_N)$. Consider the $i$th block. Note that it has one or two reactions that can fire. Specifically, there are $\frac 12 |B_i|$ gene reactions that can fire. Assume that such a reaction is in one of the forms:
\[ \mathrm D_{d_i}^i \mathop{\to}\limits^\alpha  \mathrm  D_{d_{i'}}, \mathrm  D_{d_i}^i \mathop{\to}\limits^\alpha \mathop{\to}\limits^\alpha \mathrm D_{d_{i'}}+\mathrm X_{\bar i},\]
where $\mathrm X_{\bar i}$ is a TF for that block. Then, \[ [\Lambda_r]_{d'd}=\mathbf 1^T \Lambda_{d'd}\pi_d =\alpha \mathbf 1 \pi_d = \alpha.\]
Now consider a reaction of the form:
\[ \mathrm X_{\bar i} +\mathrm D_{d_i}^i \mathop{\to}\limits^\alpha \mathrm D_{d_{i'}}^i.\]
Then,
\[ [\Lambda_r]_{d'd}=\mathbf 1^T \Lambda_{d'd}\pi_d =\alpha \sum_{x_{\bar i}=0}^\infty x_{\bar i} \pi_{d\bar i} \sum_{x_{ i \ne \bar i}} \pi_{d i} = \alpha \mathbb E [ X_{\bar i}(t)| D(t)=d] =  \alpha  \frac{k_{\bar i d_{\bar i }}^{n_{\bar i}} \beta_{\bar i }}{n_{\bar i}!k_{-\bar i d_{\bar i }}^{n_{\bar i}} \beta_{-\bar i}}.\]
 The last equality follows from evaluating the mean value of the Poisson distribution in \eqref{e.cond_disSI2}.

Finally, \eqref{e.rate_d} holds since $\mathbf 1^T\Lambda_r=0$, which follows from $\mathbf 1^T \hat \Lambda=0$.
 \hfill $\blacksquare$

\section{Detailed Discussion of Examples}

\subsection{The Gene Bursting Model}

We start with the simplest form of network, which is the {autonomous} TF-gene binding/unbinding model. It has been verified as a model for transcriptional bursting \cite{raj06}. This model has been studied analytically  using time-scale separation \cite{qian09},\cite{grima15}, \rv{Poisson-representations \cite{iyer14}, and the exact steady solution is known \cite{shahrezaei08}, \cite{wang14}.}

Consider:
  \begin{align}\label{aut_swiSI}
\nonumber \mathrm D_0 &\xrightleftharpoons[\varepsilon \alpha_{-}]{\varepsilon \alpha} \mathrm D_1 \\
  \mathrm D_1 &\mathop\rightarrow\limits^{k} \mathrm X+\mathrm D_1, \\ \nonumber
\mathrm X &  \mathop\rightarrow\limits^{k_-} 0.
\end{align}

 {Referring to Figure  2, we identify a single gene block with two states. Using \eqref{e.cond_disSI}, the conditional stationary distributions are:
\begin{align*}   \pi_{0}(x) &= \delta(x) \\
  \pi_{1}(x) &= \mathbf P(x; k/k_-).
\end{align*}
\red{In order to compute the}  {stationary distribution $\pi$, we need to find the} \red{generator for the reduced chain}. Since both reactions are monomolecular, we write the following  {using \eqref{e.ratesSI}:}
\[\Lambda_r =\begin{bmatrix} -\alpha & \alpha_- \\ \alpha & -\alpha_- \end{bmatrix}.\] Hence,} the reduced Markov chain is a binary Bernoulli process with a rate of $\alpha/(\alpha+\alpha_-)$.
Then the stationary distribution of $X$ can be written using \eqref{e.mixture_mSI} as:

\begin{equation}\label{e.bursting}\pi(x) =\frac{\alpha_{-}}{\alpha+\alpha_{-}}  \mathbf P(x;0) +  \frac{\alpha}{\alpha+\alpha_{-}} \mathbf P(x;k/k_-),\end{equation}
which is a bimodal distribution with peaks at 0 and $k/k_-$.
 {The fast promoter kinetics model is obtained, instead, by} reversing  {the} time-scale separation {such that the protein reactions become slow and gene reactions become fast. In that case, the resulting stationary distribution can be shown to be} a Poisson with mean $\frac{\alpha}{\alpha+\alpha_-} \frac{k}{k_-}$ which is the same as the deterministic equilibrium if we used the conservation law $D_1(t)+\mathrm D_0(t)=1$ for the above model. Finally, note that the  {mean of the} slow promoter kinetics model {is} the same  {as in the fast kinetics model} but the two stationary distributions differ drastically.

\rv{ Figure \ref{f.bursting} shows the transition from fast to slow promoter kinetics using the exact solution \cite{wang14} and compares it to the expression \eqref{e.bursting}.}
 \begin{figure}[t]
 \centering
 {\includegraphics[width=0.65\textwidth]{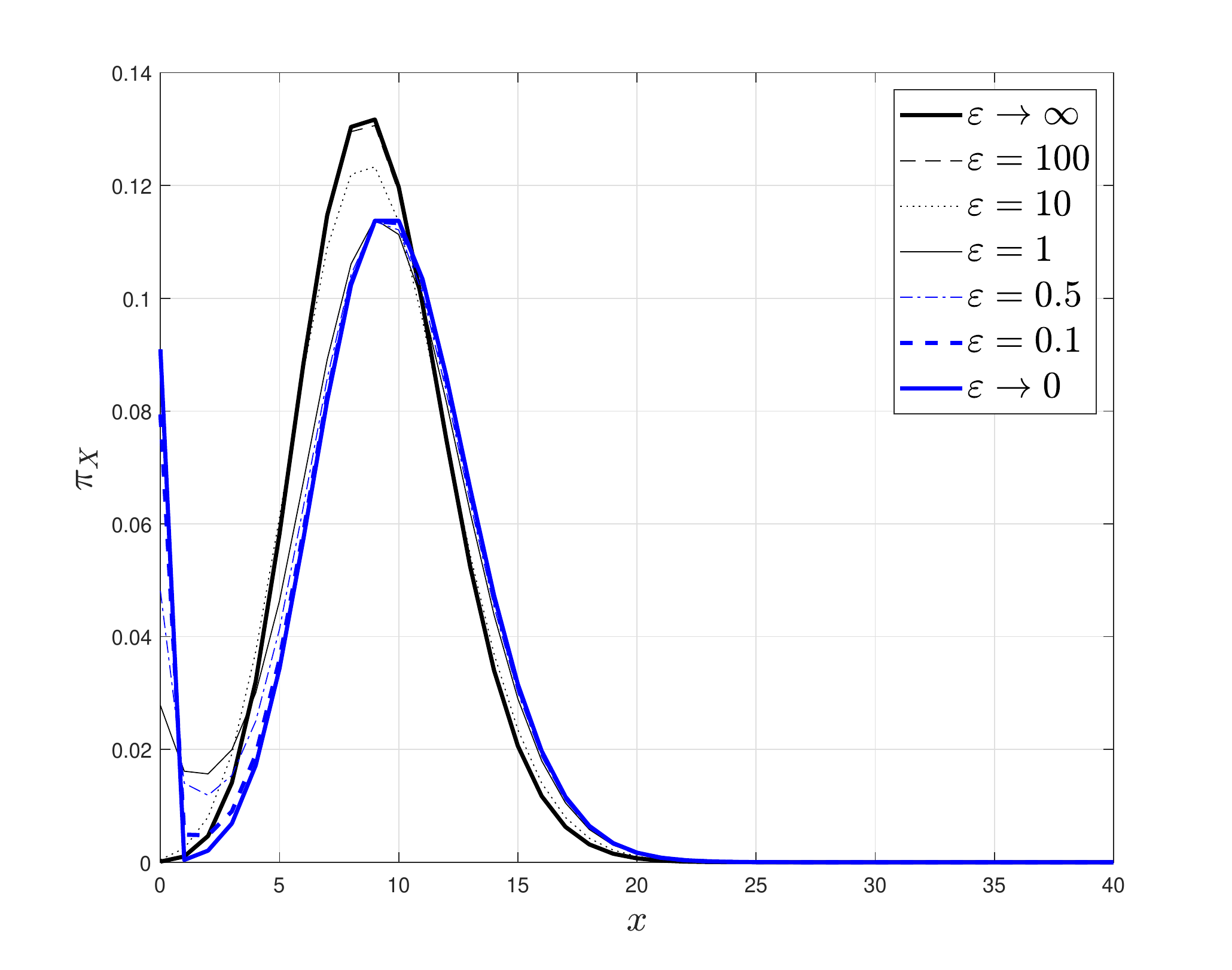}}
 \caption{\rv{The stationary probability distribution for different $\varepsilon$ which shows the transition from fast promoter kinetics,  {i.e., $\varepsilon \to \infty $}, to slow promoter kinetics,  {i.e., $\varepsilon \to 0 $}, in a single unregulated gene.  {The stationary distribution is bimodal for small $\varepsilon$, i.e. $\varepsilon \le 1$, and unimodal for large $\varepsilon$.} The deterministic equilibrium coincides with the fast kinetics mode at $ \frac{\alpha}{\alpha+\alpha_-}\frac k{k_-}$. The slow kinetic limit is calculated via \eqref{e.bursting}, the fast kinetics limit is a Poisson centered at the deterministic equilibrium, while the remaining curves are computed by evaluating the exact solution given in \cite{wang14} . The parameters are $\alpha_-=0.1, \alpha=1, k_-=2, k=20$.}} \label{f.bursting}
\end{figure}

\subsection{A Self-Regulating Gene}
Consider a non-cooperative self-regulating gene Eq. \eqref{noncoopSI}.

 {Referring to Figure 2}, this is a single-gene block with two states. At the limit of slow promoter kinetics,  {Remark 4 implies that} the gene binding/unbinding reaction can be written as in Eq. (29) \red{ as follows:
\[\mathrm D_0 \xrightleftharpoons[ \alpha_{-}]{  \alpha  k_0/k_-} \mathrm D_1.\]}
 {Using \eqref{e.ratesSI}}, \red{ the reduced generator can be written as:
\[\Lambda_r =\begin{bmatrix} -\alpha k_0/k_- & \alpha_- \\ \alpha k_0/k_- & -\alpha_- \end{bmatrix}.\]
}
 {Hence it} defines a binary Bernoulli process with the rate $\alpha k_0/(\alpha k_-+\alpha k_0)$. Using \eqref{e.mixture_mSI} the stationary distribution is a mixture of two Poisson distributions and can be written as:
\begin{equation}\label{e.ncSI} \pi_1(x)= \frac {\alpha \rho_1}{\alpha_- +\alpha \rho_1} \mathbf P(x;k_1/k_-) + \frac {\alpha_-}{\alpha_-+\alpha \rho_1} \mathbf P(x;k_0/k_-), \end{equation}
where \[\rho_1=\mathbb E[X_2|D=0]=k_0/k_-.\]

Next, consider the same reaction network, but now with cooperativity:
 \begin{align}
\nonumber \mathrm X_2+\mathrm D_0 &\xrightleftharpoons[\varepsilon \alpha_{-}]{\varepsilon \alpha} \mathrm D_1 \\ \label{coopSI}
  \mathrm D_0 &\mathop\rightarrow\limits^{k_{0}} D_0+\mathrm X, \\
  \mathrm D_1 &\mathop\rightarrow\limits^{k_{1}} \mathrm D_1+\mathrm X,\nonumber \\
\mathrm X &  \mathop\rightarrow\limits^{k_{-}} 0 \nonumber \\
\nonumber 2\mathrm X &\xrightleftharpoons[\beta_-]{\beta} \mathrm X_2.
\end{align}
In this case, the gene process is still a Bernoulli process, but with a different rate. The stationary distribution for $X$ can be written as:
\begin{equation}\label{e.c} \pi_{2}(x)= \frac {\alpha \rho_2}{\alpha_- +\alpha \rho_2} \mathbf P(x;k_1/k_-) + \frac {\alpha_-}{\alpha_-+\alpha \rho_2} \mathbf P(x;k_0/k_-), \end{equation}
where \[\rho_2=\mathbb E[X_2|D=0]=\frac{ k_0^2 \beta}{2k_-^2 \beta_-}.\]

Both distributions \eqref{e.ncSI}, \eqref{e.c} have modes  at $\frac{k_{1}}{k_-}$ and   $\frac{k_{0}}{k_-}$. The height of the first mode is proportional to $\rho_1$ for \eqref{e.ncSI}, and is proportional to $\rho_2$ for \eqref{e.c}. The network is activating if $k_{1}>k_{0}$, and repressing otherwise.

Comparing \eqref{e.ncSI} and \eqref{e.c}, note that, in the non-cooperative case, if we want to increase the weight of the mode corresponding to the bound state keeping the association ratio, then  the mode location needs to be changed.  {On the other hand,}  {the factor $\rho_2$ in   the dimerization rates in \eqref{e.c}}  can be used in order to tune the weights freely while keeping the modes and the binding to unbinding kinetic constants ratio unchanged. For instance, we can make the distribution effectively unimodal with a sufficiently high dimerization ratio.

\red{ A non-cooperative self-regulating gene with slow promoter kinetics has been studied in literature by deriving close-form expression \cite{hornos05} and using time-scale separation \cite{qian09} . However, the gene binding/unbinding reaction in both papers was approximated by an auto-catalytic reaction:
\[\mathrm X+\mathrm D_0  \xrightleftharpoons[\varepsilon \alpha_{-}]{\varepsilon \alpha} \mathrm D_1+\mathrm X.\]
This has the advantage of decoupling the slow and the fast processes. However, it is a simplification of the physical process.  We did not use such simplifications.
}

\paragraph{Special cases:} \strut \\
The model above {considers} a network with possibly non-zero production rates for both  {the unbound and bound} promoter configurations. We may also {consider} the special cases of pure self-activation or self-repression  {which are explained below:}
\begin{enumerate}
  \item \emph{Pure Self-Activation}, i.e. $k_{0}=0$ {in \eqref{noncoopSI} and \eqref{coopSI}}. Then the  peak  {corresponding to the bound configuration} disappears and we get only one peak at  {zero protein copy number} as it forms an absorbing state. This is a manifestation of the Keizer's paradox  {\cite{qian07}}. One way to circumvent this is to allow for a small transcriptional ``leak''. This amounts to taking  $k_0 \ll k_1$, and it allows us to recover the second mode.
  \item \emph{Pure Self-Repression}, i.e.  $k_{1}=0$ {in \eqref{noncoopSI} and \eqref{coopSI}}. Then we get two modes: one at 0 and the other at $k_{0}/k_-$.
\end{enumerate}
For both cases, in the non-cooperative case  {with a fixed dissociation ratio} the choice of this kinetic rate determines completely the relative weight of the modes as in \eqref{e.ncSI}. Cooperativity allows us to tune the weight of the mode corresponding  {to} the bound state without changing the location as mentioned before.

\paragraph{Comparison with Fast  {Promoter Kinetics}:} \strut \\
In order to demonstrate that slow switching is responsible for the emergence of new modes {compared to the deterministic model}, consider the non-cooperative {self-regulating gene}  network \eqref{noncoopSI} with fast  {promoter kinetics} modeled by letting $\varepsilon $ grow without bound in the first reversible TF-promoter binding/unbinding reactions.
We state the following proposition which is proved in the Methods section:
\begin{proposition}\label{th.fast}
As $\varepsilon \to \infty$, the stationary distribution of the network \eqref{noncoopSI} is given by:
\[\pi(m)=\lim_{t \to \infty}\Pr[X(t)=m] = \frac {\alpha_- w_m + \alpha mw_{m+1}}{\alpha m+\alpha_-},\]
           where
           $w_m$ satisfies the following recurrence relation:
\begin{align*}
w_{m+1}&=\frac{((k_1 m +(\alpha_-/\alpha) k_0)(m+(\alpha_-/\alpha)+1)}{k_- (m+1)( m+(\alpha_-/\alpha))^2} w_m, \ m\ge 0.
\end{align*}
where $w_0$ is chosen to satisfy $\sum_{m=0}^\infty  w_m=1$.
\end{proposition}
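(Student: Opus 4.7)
My plan is to prove this proposition by a singular perturbation argument dual to the one used for Theorem 3: here the gene binding/unbinding reactions form the \emph{fast} subsystem (rates proportional to $\varepsilon \to \infty$) and the production/decay reactions are \emph{slow}. The slow manifold will be a one-parameter family of quasi-equilibria of the fast subsystem, indexed by a conserved quantity of that subsystem, and the reduced chain on this manifold will be an ordinary birth-death chain whose stationary distribution can be written down by detailed balance.

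First I would identify the conserved quantity of the fast dynamics. The binding reaction $\mathrm X + \mathrm D_0 \to \mathrm D_1$ satisfies $\Delta X = -1$ and $\Delta D_1 = +1$ (with $\Delta D_0 = -1$ enforced by the promoter conservation $D_0+D_1=1$), so $M := X + D_1 \in \mathbb Z_{\ge 0}$ is preserved by binding and unbinding. This partitions the state space into classes $\{M=m\}$, each consisting of at most two states: $(D=0,X=m)$ and, for $m\ge 1$, $(D=1,X=m-1)$. Second, within each class the fast detailed-balance equation $\alpha m\,q_0(m)=\alpha_-\,q_1(m)$, together with $q_0(m)+q_1(m)=1$, yields the quasi-stationary conditional distribution
\[
q_0(m)=\frac{\alpha_-}{\alpha m+\alpha_-},\qquad q_1(m)=\frac{\alpha m}{\alpha m+\alpha_-},
\]
where $q_d(m):=\Pr[D=d\mid M=m]$.

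Third, I would project the slow production and decay reactions onto the line of quasi-equilibria to obtain an effective birth-death chain for $w_m:=\Pr[M=m]$. Averaging slow rates against $q_d(m)$ gives an up-rate $r_m^+ = k_0 q_0(m)+k_1 q_1(m)$ and a down-rate $r_m^- = k_- m\,q_0(m)+k_-(m-1)\,q_1(m)$, both of which simplify to rational functions of $m$. Because the reduced chain is birth-death, its unique (up to normalization) stationary distribution satisfies detailed balance $w_m r_m^+ = w_{m+1}r_{m+1}^-$; clearing denominators, factoring $\alpha$ out of each parenthesis, and substituting $\gamma := \alpha_-/\alpha$ reduces this identity algebraically to exactly the recurrence claimed in the proposition.

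Finally I would reconstruct the marginal distribution of $X$ from $w_m$ by undoing the change of coordinates: for each $m$,
\[
\pi(m) = \Pr[D=0,X=m]+\Pr[D=1,X=m] = q_0(m)\,w_m + q_1(m+1)\,w_{m+1},
\]
which after simplification gives the stated closed form in terms of $w_m,w_{m+1}$. The step I expect to be hardest is the rigorous justification of the dual singular-perturbation expansion on the countable state space $\mathbb Z_{\ge 0}$, i.e.\ establishing that $\pi^\varepsilon$ admits an asymptotic expansion in $\varepsilon^{-1}$ whose leading term is the projected distribution constructed above. The paper already flags this technical issue for the slow limit (after Eq.~(14)); here one would need a Foster-Lyapunov/exponential ergodicity bound for the $M$-chain (for which the linear growth of $r_m^+$ and the $\Theta(m)$ growth of $r_m^-$ should suffice) in order to legitimately interchange the limits $\varepsilon\to\infty$ and $t\to\infty$.
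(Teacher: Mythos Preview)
Your proposal is correct and follows essentially the same route as the paper's own proof: the paper also identifies $M=X+D_1$ as the slow variable conserved by the fast binding/unbinding reactions, computes the same quasi-stationary conditional probabilities $q_d(m)$ on each level set, obtains the birth--death recurrence for $w_m=\Pr[M=m]$, and then recovers $\pi(m)$ by writing $\Pr[X=m]=\Pr[M=m,D=0]+\Pr[M=m+1,D=1]$. Your explicit averaging of the slow rates and the detailed-balance derivation of the recurrence are in fact more detailed than what the paper writes down, and your remark about the need for a Foster--Lyapunov bound to justify the $\varepsilon^{-1}$ expansion on the countable state space is exactly the technical caveat the paper acknowledges but does not resolve.
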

\begin{proof}
A decomposition dual to Eq. (14) can be written and it can be noted that the fast matrix is block-diagonal with respect to $\mathrm X+\mathrm D_1$ which is the slow variable, while the fast variable is  $\mathrm D_1$.

Expanding asymptotically the stationary distribution in terms of $\varepsilon$, and taking the limit as $\varepsilon$ goes to zero, we can find the distribution for the slow variable as follows:
\[ \Pr[X+D_1=m]= w_m,\]
where $w_m$ satisfies the following recurrence relation:
\begin{align*}
w_{m+1}&=\frac{((k_1 m +(\alpha_-/\alpha) k_0)(m+(\alpha_-/\alpha)+1)}{k_- (m+1)( m+(\alpha_-/\alpha))^2} w_m, \ m\ge 0.
\end{align*}
The joint distribution can be given as:
\begin{align*}
\Pr[X+D_1=m,D_1=0]&= w_m \frac{\alpha_-}{\alpha m+\alpha_-},\\
\Pr[X+D_1=m,D_1=1]&= w_m \frac{\alpha m}{\alpha m+\alpha_-}.
\end{align*}

Hence we can compute the marginal density of $X$ as follows:
\begin{align*} \Pr[X=m]&=\Pr[X+D=m,D=0]+\Pr[X+D=m+1,D=1] \\
           &= \frac {\alpha_- w_m + \alpha mw_{m+1}}{\alpha m+\alpha_-}. \end{align*}

\end{proof}

Since the ratio $w_{m+1}/w_m$ is a ratio of two polynomials and the denominator's degree is higher than the numerator, then stationary distribution is unimodal, while  slow TF-gene binding/unbinding was shown to give a bimodal distribution  {(see \eqref{e.ncSI})}.

\subsection{The Toggle Switch}

A toggle switch is a basic GRN that exhibits deterministic multi-stability. It has two stable steady states and can switch between them with an external input or via noise.   The ideal behavior is that only one gene is ``on'' at any moment in time. We now study the network with the slow promoter kinetics. Consider the following network with cooperativity indices $n,m$:
   \begin{align*}\begin{array}{rl}
\nonumber \mathrm Y_m+ \mathrm D_0^X &\xrightleftharpoons[\alpha_{-1}]{\alpha_1}  \mathrm D_1^X \\
  \mathrm D_0 &\mathop\rightarrow\limits^{k_{10}} \mathrm D_0^X+\mathrm X, \\
\mathrm X &   \mathop\rightarrow\limits^{k_{-1}} 0 \\
n\mathrm X  &\xrightleftharpoons[\beta_{-1}]{\beta_1}  \mathrm X_n \\ \end{array} \begin{array}{rl}
\nonumber \mathrm X_n+ \mathrm D_0^Y &\xrightleftharpoons[\alpha_{-2}]{\alpha_2} \mathrm D_1^Y  \\
\nonumber  \mathrm D_0^Y &\mathop\rightarrow\limits^{k_{20}} D_0^Y+\mathrm Y, \\ \mathrm Y & \mathop\rightarrow\limits^{k_{-2}} 0\\
m\mathrm Y  &\xrightleftharpoons[\beta_{-2}]{\beta_2}  \mathrm Y_m. \end{array}
\end{align*}
 For the case $n,m=1$,  {there is no multi-merization reaction}.  {For consistency,} we choose $\beta_{ 1}=\beta_{-1},\beta_2=\beta_{- 2}=1$  {in that case}.

 {Denote the promoter configuration species by $\mathrm D^X,D^Y$.} Then the network has four configurations $(\mathrm D^X,\mathrm D^Y)\in\{(0,0),(0,1)$ $,(1,0),(1,1)\}$. Using Corollary 4 we expect to have a stationary distribution with four modes $(k_{10}/k_{-1},k_{20}/k_{-2})$, $(k_{10}/k_{-1},0)$, $(0,k_{20}/k_{-2})$, $(0,0)$. \red{Using the algorithm of}  {Proposition 5,} the reduced-order Markov chain infinitesimal generator is:
\begin{equation}\label{e.switch_matrix}
  \Lambda_r=\begin{bmatrix} -\alpha_{1} \rho_2 -\alpha_2 \rho_1 & \alpha_{-2} & \alpha_{-1} & 0 \\ \alpha_2 \rho_1 & - \alpha_{-2} & 0 & \alpha_{-1} \\ \alpha_1 \rho_2 & 0 & -\alpha_{-1} & \alpha_{-2} \\ 0 & 0 & 0 & -\alpha_{-2}-\alpha_{-1}
  \end{bmatrix},
\end{equation}
where
\begin{equation}\label{e.rho_toggleSI} \rho_1 = \left ( \frac{k_{10}}{k_{-1}} \right ) ^n \frac{\beta_{1}}{n! \beta_{-1} }, \quad  \rho_2 = \left ( \frac{k_{20}}{k_{-2}} \right ) ^m \frac{\beta_{2}}{m! \beta_{-2} }. \end{equation}
We notice immediately  {from the last row in the matrix \eqref{e.switch_matrix}} that the transition rates towards the configuration (1,1) are zero, which implies that {the weight of the mode corresponding to $(1,1)$ is zero}. Hence, we have three modes only.   {The weights corresponding to the modes} can be found as the principal eigenvector of $\Lambda_r$ as given in Corollary 4. Hence, the stationary distribution for $X,Y$ is:
\begin{equation} \pi(x,y)= \frac 1{\frac{\alpha_1}{\alpha_{-1}} \rho_2 + \frac{\alpha_2}{\alpha_{-2}} \rho_1 + 1 }\left ( \mathbf P(y;{ \tfrac{k_{20}}{k_{-2}}}) \mathbf P(x;{ \tfrac{k_{10}}{k_{-1}}})  +  \frac{\alpha_1}{\alpha_{-1}} \rho_2  \mathbf P(y;{ \tfrac{k_{20}}{k_{-2}}})\delta(x) + \frac{\alpha_2}{\alpha_{-2}} \rho_1  \mathbf P(x;{ \tfrac{k_{10}}{k_{-1}}}) \delta(y) \right).  \end{equation}

 {Since the stationary distribution has three modes, it deviates from the ideal behavior of a switch where at most two stable steady states, under appropriate parameter conditions,  are possible. Nevertheless, } a bimodal distribution  can be achieved by minimizing the weight of the first mode at $(\frac{k_{10}}{k_{-1}},\frac{k_{20}}{k_{-2}})$. If we fix $\alpha_{1}/\alpha_{-1}, \alpha_{2}/\alpha_{-2}$, then this can be satisfied by tuning $n,m,\beta_{\pm 1},\beta_{\pm 2}$ to maximize $\rho_1,\rho_2$ in \eqref{e.rho_toggleSI}. Choosing higher cooperativity indices, subject to $n< k_{10}/k_{-1},m<k_{20}/k_{-2}$, achieves this.

The toggle switch has three modes regardless of the cooperativity index. This is unlike the deterministic model where only one positive stable state is realizable with non-cooperative binding,  {and two stable steady states are realizable with cooperative binding.}
However,   the toggle switch with fast switching can admit three modes in some parameter ranges.  {In contrast to the case of slow switching under consideration here}, the third stable state is the (low,low) state \cite{ma12}.

\rv{Monte-Carlo simulations via Gillespie's algorithm for the cooperative toggle switch have been performed to investigate the minimum time-scale separation to recover our predictions. We have chosen the parameters such that the (low,high), and (high,low) modes get 0.25 each, and the (high,high) modes gets 0.5 at the slow promoter limit. Figure \ref{f.toggle_monto}-a shows that the three modes predicted by the analysis have been recovered with promoter kinetics being just \emph{two times} slower than the protein decay rate. Figure \ref{f.toggle_monto}-b shows a sample trajectory where the three modes are visible.}

 { 
  \paragraph{Alternative model of the toggle switch}  \strut \\
Instead of modeling the toggle switch   as two independent genes with their own promoters, an alternative model consists of a signal promoter regulating two operons \cite{warren04}, \cite{warmflash07}.  In our modelling framework this amounts to a single gene expression block with two transcription factors and two expressed proteins, i.e., two inputs and two outputs. Despite the fact that our formalism accounts for a single expressed protein, we find no difficulty in applying our methods as shown below.

The alternative model can be written as:
\begin{equation}\label{fate1SI}
 \begin{array}{llll}
  \mathrm X_c+ \mathrm D_{00} \xrightleftharpoons[]{}  \mathrm D_{10},\quad &    \mathrm X_c+ \mathrm D_{01} \xrightleftharpoons[]{}  \mathrm D_{11}, &
 \mathrm Y_c+ \mathrm D_{00} \xrightleftharpoons[]{}  \mathrm  D_{01}, \quad &        \mathrm Y_c+  \mathrm  D_{10}  \xrightleftharpoons[]{}  \mathrm  D_{11}, \quad \\
  \mathrm D_{00} \mathop\rightarrow\limits^{k_{x}} \mathrm  D_{00}  + \mathrm X,\quad &  \mathrm D_{00}  \mathop\rightarrow\limits^{k_{y}} \mathrm D_{00} + \mathrm Y,&   \mathrm D_{01}  \mathop\rightarrow\limits^{k_{x}} \mathrm  D_{01} + \mathrm X, \quad &
     \quad \mathrm D_{10}  \mathop\rightarrow\limits^{k_{y}}  \mathrm  D_{10} + \mathrm Y, \quad \\ \mathrm X \mathop\rightarrow\limits^{k_{-x}} \emptyset, \quad& \mathrm Y \mathop\rightarrow\limits^{k_{-y}} \emptyset, \quad &
  n\mathrm X  \xrightleftharpoons[\beta_{-x}]{\beta_x} \mathrm X_c, &\quad  m\mathrm Y \xrightleftharpoons[\beta_{-y}]{\beta_y} \mathrm Y_c,

\end{array}
\end{equation}

Assuming slow-promoter kinetics, the stationary distribution can be decomposed into a mixture of Poisson distributions centered at $(\frac{k_{x}}{k_{-x}},\frac{k_{y}}{k_{-y}}), (\frac{k_{x}}{k_{-x}},0), (0,\frac{k_{y}}{k_{-y}})$. Despite the fact that model appears to be different, the resulting modes and the reduced-order Markov chain are similar to the model used before.
}

\begin{figure}
\centering
\subfigure[]{\includegraphics[width=0.75\textwidth]{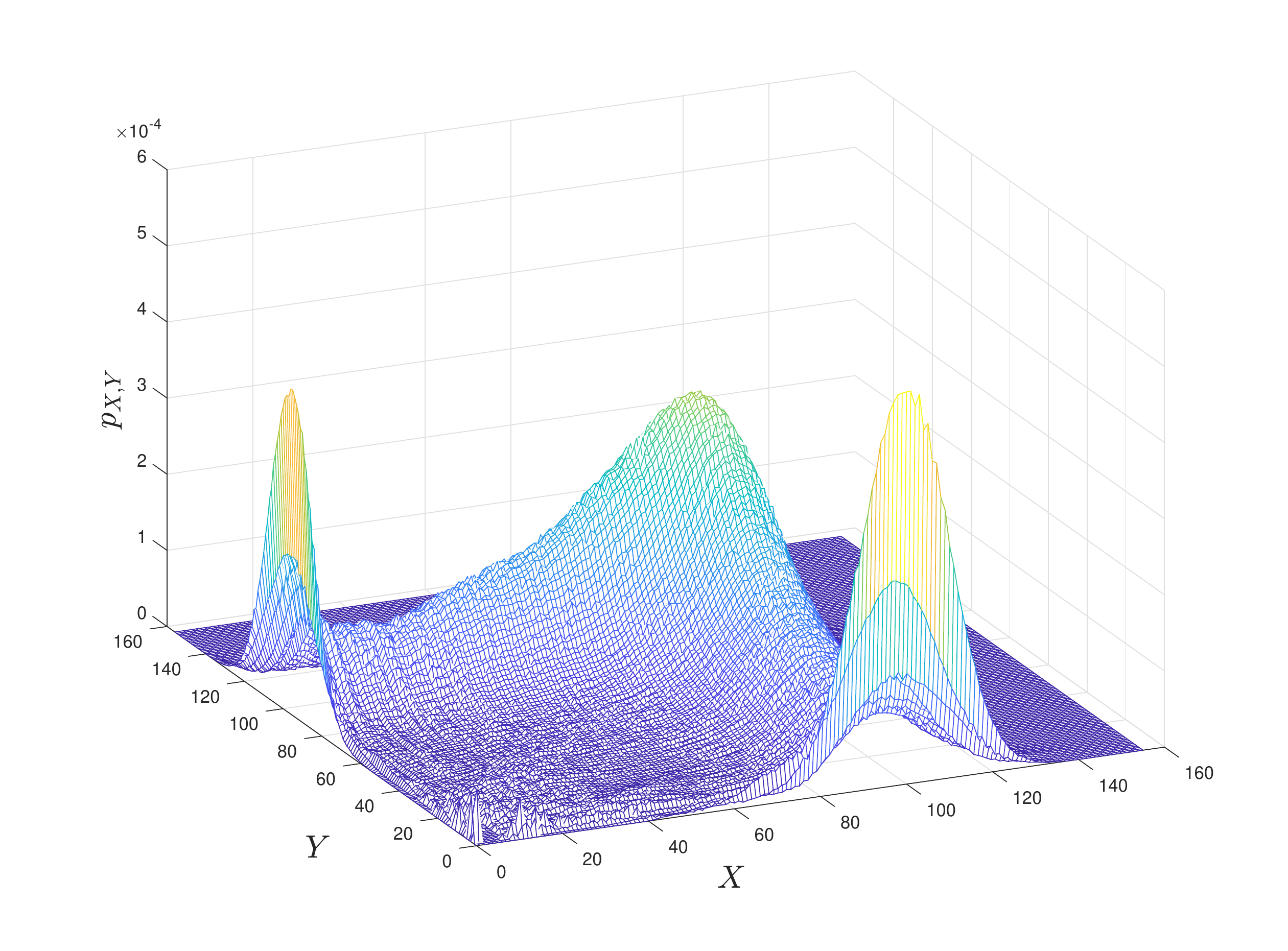}}
\subfigure[]{\includegraphics[width=0.49\textwidth,height=1.25in]{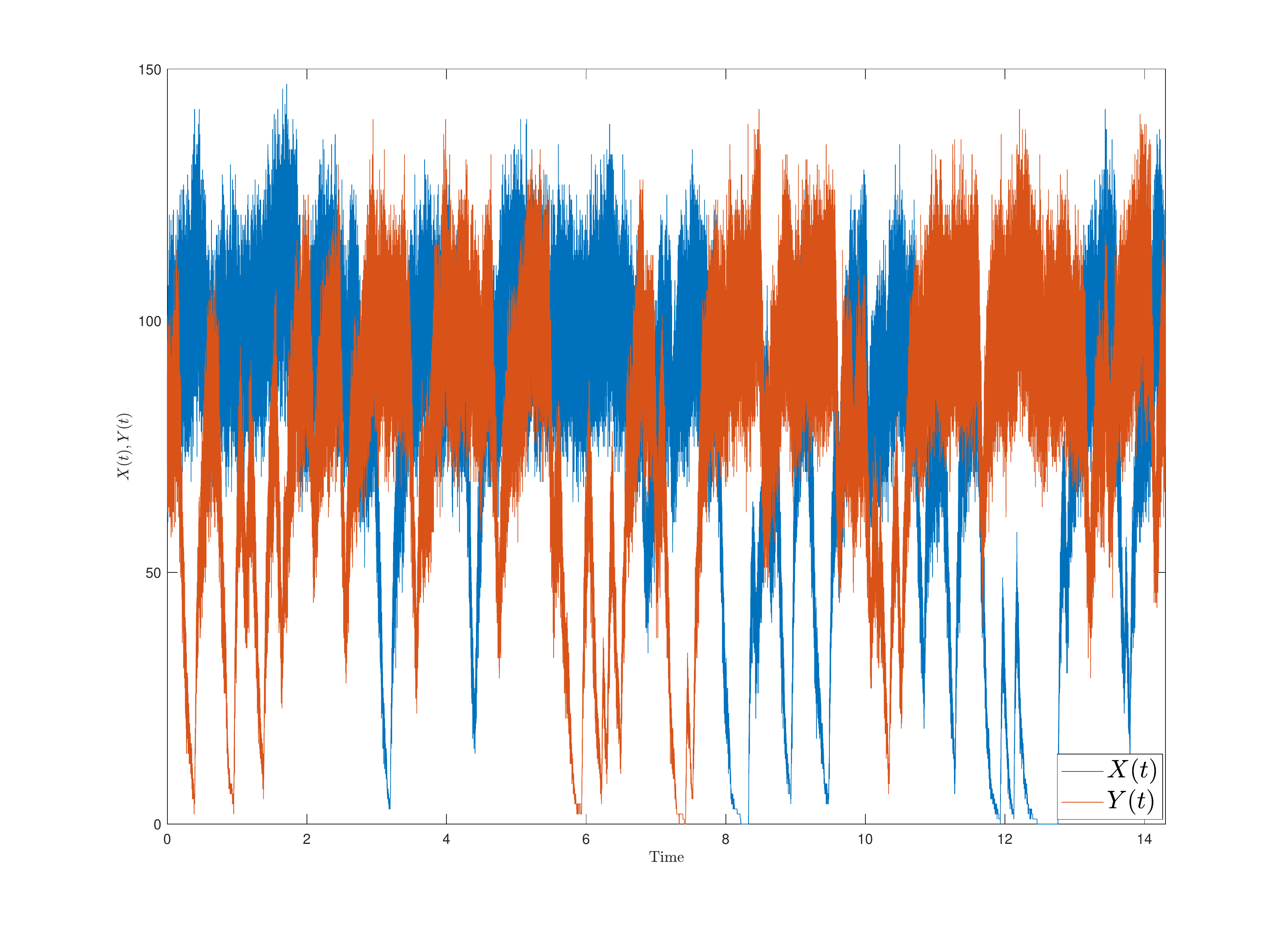}}
\subfigure[]{\includegraphics[width=0.49\textwidth,height=1.25in]{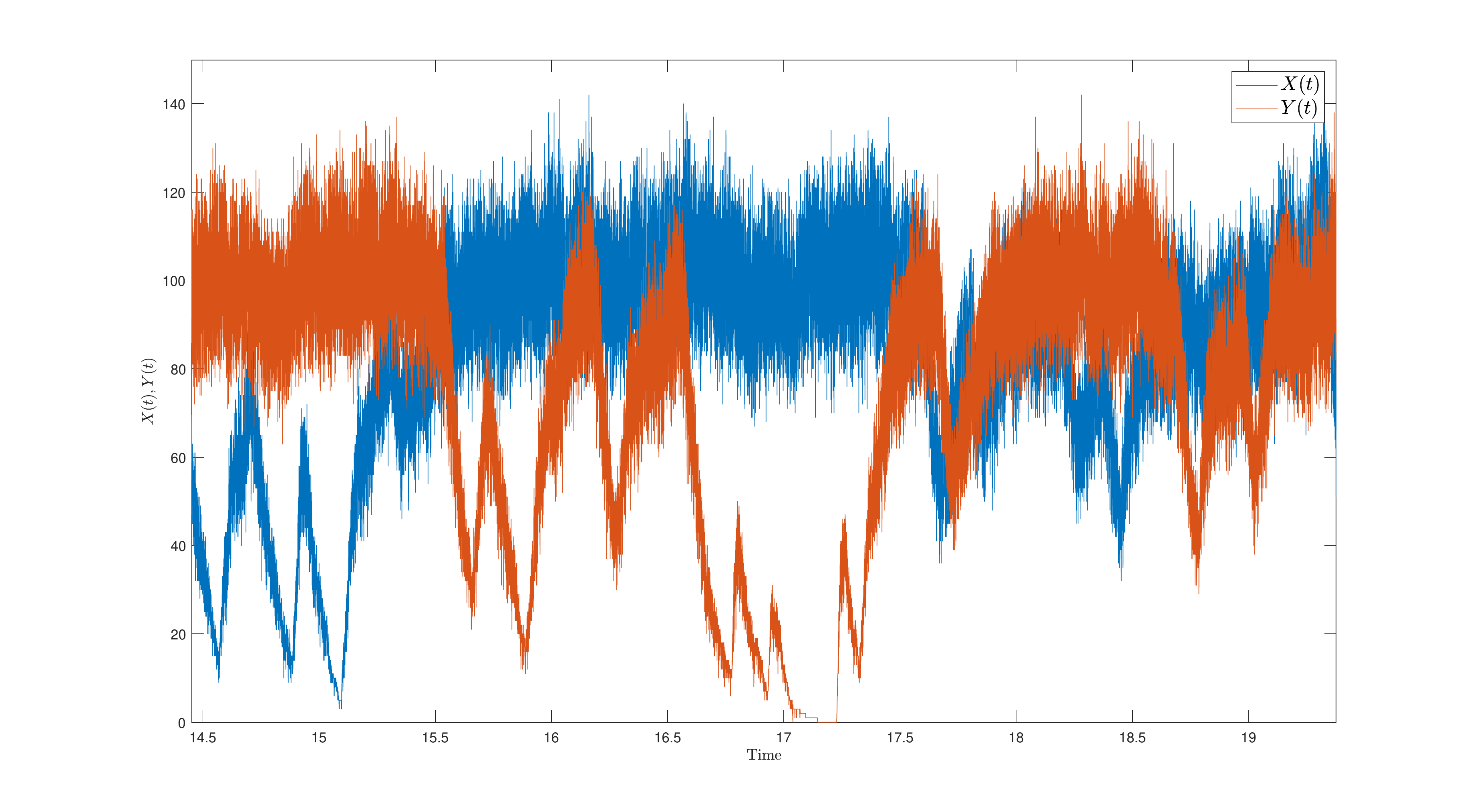}}
\caption{ \textbf{Analysis predicts the stationary distribution with 0.5 time scale separation.} (a) The empirical stationary distribution has been computed based on a realization of the stochastic process. The realization has been computed via Gillespie's algorithm and has been averaged over $70\times 10^6$ iterations. (b) A sample trajectory of the toggle switch. (c) A zoomed plot showing the three modes  Observe that between times 18 and 18.5 both modes are high. The parameters are: $\alpha_-=10, \alpha=0.1, k_-=20, k=2000, \beta=20, \beta_-=2000$. Note that the unbinding rate is two times slower than the protein decay rate. } \label{f.toggle_monto}
\end{figure}

{
\subsection{The Repressilator}
The repressilator is a synthetic biological circuit that implements a ring oscillator \cite{elowitz00}, and it has been simulated with slow-promoter kinetics [21]. It is a canonical example of a GRN that exhibits a limit cycle, i.e. sustained oscillation. The block diagram is shown in Figure \ref{f.repressilator}-a which shows three genes in a repression cycle. The list of reactions is given as follows:
\begin{align*}\begin{array}{rl}
\nonumber \mathrm Z_n+ \mathrm D_0^X &\xrightleftharpoons[\alpha_{-}]{\alpha}  \mathrm D_1^X \\
  \mathrm D_0 &\mathop\rightarrow\limits^{k_{}} \mathrm D_0^X+\mathrm X, \\
\mathrm X &   \mathop\rightarrow\limits^{k_{-}} 0 \\
n\mathrm X  &\xrightleftharpoons[\beta_{-}]{\beta}  \mathrm X_n \\ \end{array} \begin{array}{rl}
\nonumber \mathrm X_n+ \mathrm D_0^Y &\xrightleftharpoons[\alpha_{-}]{\alpha} \mathrm D_1^Y  \\
\nonumber  \mathrm D_0^Y &\mathop\rightarrow\limits^{k}  D_0^Y+\mathrm Y, \\ \mathrm Y & \mathop\rightarrow\limits^{k_{-}} 0\\
n\mathrm Y  &\xrightleftharpoons[\beta_{-}]{\beta}  \mathrm Y_n. \end{array}
\begin{array}{rl}
\nonumber \mathrm Y_n+ \mathrm D_0^Z &\xrightleftharpoons[\alpha_{-}]{\alpha} \mathrm D_1^Z  \\
\nonumber  \mathrm D_0^Z &\mathop\rightarrow\limits^{k} D_0^Z+\mathrm Z, \\ \mathrm Z & \mathop\rightarrow\limits^{k_{-}} 0\\
n \mathrm Z  &\xrightleftharpoons[\beta_{-}]{\beta}  \mathrm Z_n. \end{array}
\end{align*}

Deterministic analysis of the repressilator \cite{ddv_book} reveals that it does not oscillate with non-cooperative binding. We proceed to apply our analysis methods to study the behaviour in the stochastic case with slow promoter kinetics.

{Denote the promoter configuration species by $\mathrm D^X,\mathrm D^Y,\mathrm D^Z$.} Then the network has eight configurations $(\mathrm D^X,\mathrm D^Y,\mathrm D^Z)\in\{(0,0,0),(0,0,1)$ $,...,(1,1,1)\}$. Using Corollary 4 we expect to have a stationary distribution with eight modes. \red{Using the algorithm of}  {Proposition 5,} the reduced-order Markov chain infinitesimal generator is:
\begin{equation}\label{e.repressilator_matrix}
  \Lambda_r=\begin{bmatrix}
  -3 \alpha \rho & \alpha_- & \alpha_- & 0  & \alpha_- & 0 & 0 & 0 \\
  \alpha \rho & -\alpha_- - \alpha \rho & 0 & \alpha_- & 0   & \alpha_- & 0 & 0 \\
   \alpha \rho & 0 & -\alpha_- - \alpha \rho& \alpha_- & 0 & 0   & \alpha_- &0 \\
   0 &  \alpha \rho & 0 & -2 \alpha_- &  0 &0&0   & \alpha_- \\
   \alpha \rho &0&0&0& -\alpha_- - \alpha \rho  & \alpha_-& \alpha_-&0 \\
   0&0&0&0& \alpha \rho & -2\alpha_- & 0 & \alpha_- \\
   0 & 0 & \alpha \rho & 0 & 0 &0 & -2\alpha_-  & \alpha_- \\
   0&0&0&0  &0&0&0& -3\alpha_- .
  \end{bmatrix},
\end{equation}
where $ \rho =  \left ( \frac{k }{k_{- }} \right ) ^n \frac{\beta }{n! \beta_{- } }. $

The stationary distribution is:
\begin{align}\label{e.repressilator_pmf}\pi(x,y,z)&=\frac 1D \Big (  w\mathbf P(x,y,z;K,0,0)+w\mathbf P(x,y,z;0,K,0)+w\mathbf P(x,y,z;0,0,K)   \\ &\left . +\frac{1}{2w} \mathbf P(x,y,z;K,K,K)+ \mathbf P(x,y,z;K,K,0)+ \mathbf P(x,y,z;0,K,K)+ \mathbf P(x,y,z;K,0,K)\right ), \nonumber \end{align}
where $K=\frac{k}{k_-}, w=2\rho\frac{\alpha}{\alpha_-}, D=3w+3+\tfrac 1{2w}$. Notice that the Poisson distribution centered at (0,0,0) has zero weight. Hence, the network can admit up to seven modes.

We will study the behaviour of this network in two cases: when $w\gg 1$, and $w \approx 1$.

\paragraph{The ``Stochastic Oscillator''} \strut \\
Although the distribution is a mixture of seven modes, note that it reduces to three effective modes if \[ w= 2\frac{\alpha}{\alpha_-}   \left ( \frac{k }{k_{- }} \right ) ^n \frac{\beta }{n! \beta_{- } } \gg 1, \]
where the three modes are located at $d_x=(K,0,0),d_y=(0,K,0),d_z=(0,0,K)$.   The tri-modal stationary distribution is consistent with the classical oscillations of the repressilator, and this is independent of the cooperativity index. Note that this condition is analogous to the oscillation condition in the deterministic model \cite{ddv_book} (but with cooperativity only) which also requires ``large'' production ratio.

In order to study whether the network oscillates, we need to define a notion of limit cycle for a stochastic system. Due to randomness, the time-series can not be periodic. Nevertheless, since the stationary distribution is tri-modal, we say that the   network oscillates if the time trajectory continues to jump between the modes in the same order.

Assume $w \gg 1$. Let $d_x,d_y,d_z$ be the three dominant modes. We will show that if the reduced-order Markov chain is at mode $r_x$ then it is much more likely to transition to $d_y$ rather than to $d_z$. Similar arguments apply if we start from $d_y,d_z$.

 Let $Q(t)=e^{t\Lambda_r}$ be the probability transition matrix \cite{norris}. The $(i,j)$ entry in the transition matrix denotes the probability of being at $i$ at time $t$ if we start from $j$. Hence, $Q_{ij}(t)=\Pr[D(t)=i|D(0)=j]$.

We are interested in comparing the probabilities of transiting from $r_x$ to $r_y,r_z$. Hence we study small $t \ll 1$. We  utilize the power series expansion to evaluate $Q(t)$. Let $\Lambda_d=\mbox{diag}(\Lambda_r)$ and $\Lambda_+=\Lambda_r - \Lambda_d$, where $\mbox{diag}(\Lambda_r)$ is the diagonal matrix which contains the diagonal entries of $\Lambda_r$. Note that $\Lambda_+$ is positive. Since $\Lambda_d$ and $\Lambda_+$ commute, we can write
\[Q(t)= e^{\Lambda_d} e^{\Lambda_+}.\]
We can approximate $Q(t)$ for small $t$ by writing the Taylor expansion of $e^{\Lambda_+}$. Hence, using \eqref{e.repressilator_matrix} we write a third-order Taylor series for the transition probabilities that we need:
\begin{align*}
  Q_{d_x d_x} &= e^{-2t}+e^{-2t}o(t)=e^{-2t} ( 1+wt^2) + e^{-2t} o(t^3) \\
  Q_{d_y d_x} & = e^{-2t}o(t)= e^{-2t} (\tfrac 12 wt^2) + e^{-2t} o(t^3) \\
  Q_{d_z d_x} & = e^{-2t} o(t^3),
\end{align*}
where we assumed, w.l.o.g, that $\alpha_-=1$.

Using the expressions above, if the Markov chain is at $r_x$ then it is most likely to stay there. The transition is much more likely to happen to $d_y$ rather than $d_z$. Hence, we expect to see ``long'' periods of protein $X$ being expressed, and then it jumps to express protein $Y$, and then  protein $Z$.   Since the finite Markov chain is ergodic, the pattern repeats.

Note that the analysis above predicts that both the cooperative and the noncooperative repressilator are capable of oscillation with slow-promoter kinetics when $w \gg 1$.
We performed Monte-Carlo simulations via the Gillespie algorithm for both fast and slow kinetics. The results are shown in Figure \ref{f.repressilator_s}. We observe that the network always oscillates with cooperative binding. With non-cooperative binding, only the network with slow kinetics oscillates as predicted. The network with fast kinetics does not oscillate. Recall that the deterministic model with non-cooperative binding does not oscillate \cite{ddv_book}.

\paragraph{Multi-modality in the repressilator with slow-promoter kinetics} \strut \\
In the case that $w$ is close to 1, the seven modes in \eqref{e.repressilator_pmf} share the probability almost equally.  This is independent of the cooperativity index. Compare this the non-oscillatory deterministic model, which is mono-stable and it can't admit multiple stable equilibria.

\begin{figure}[t]
\centering
{ \raisebox{0\height}{\includegraphics[width=0.23\textwidth]{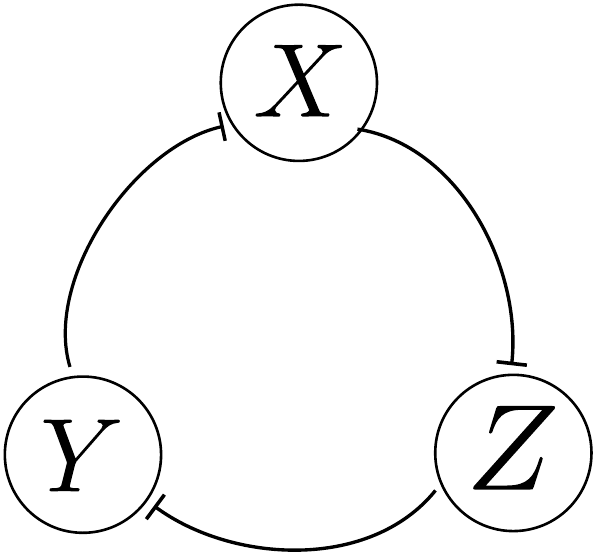}}}
\caption{  A \red{diagram of the} repressilator.   }
\label{f.repressilator}
\end{figure}

\begin{figure}[t]
\centering\subfigure[Cooperative, Slow-Promoter Kinetics.]{ {\includegraphics[width=0.5\textwidth,height=1in]{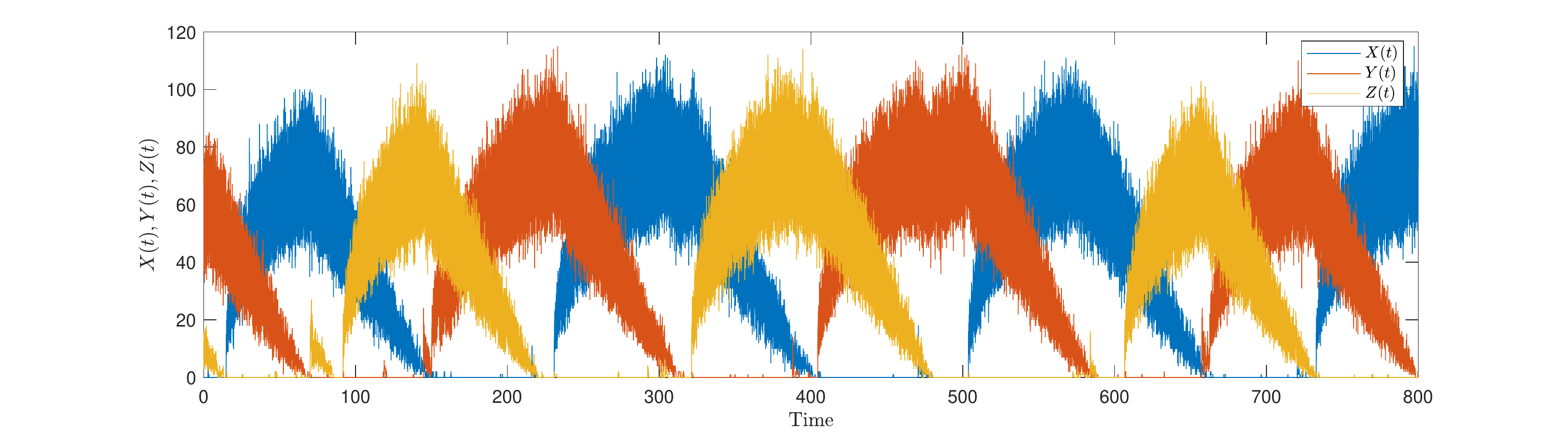}}}
\subfigure[Noncooperative, Slow-Promoter Kinetics.]{\includegraphics[width=0.475\textwidth,height=1in]{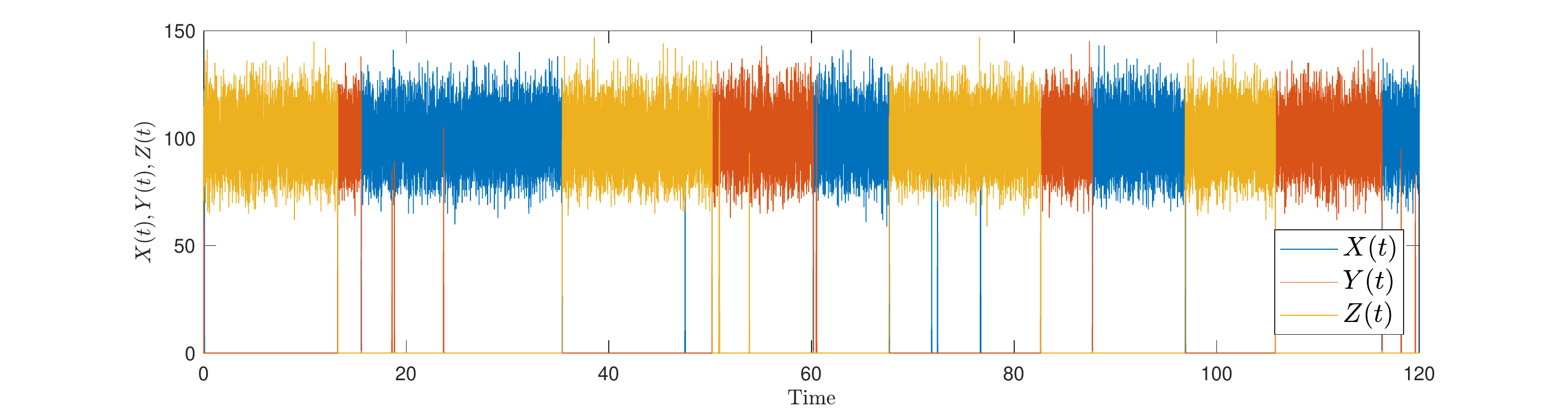}}
\subfigure[Cooperative, Fast-Promoter Kinetics.]{{\includegraphics[width=0.485\textwidth,height=1in]{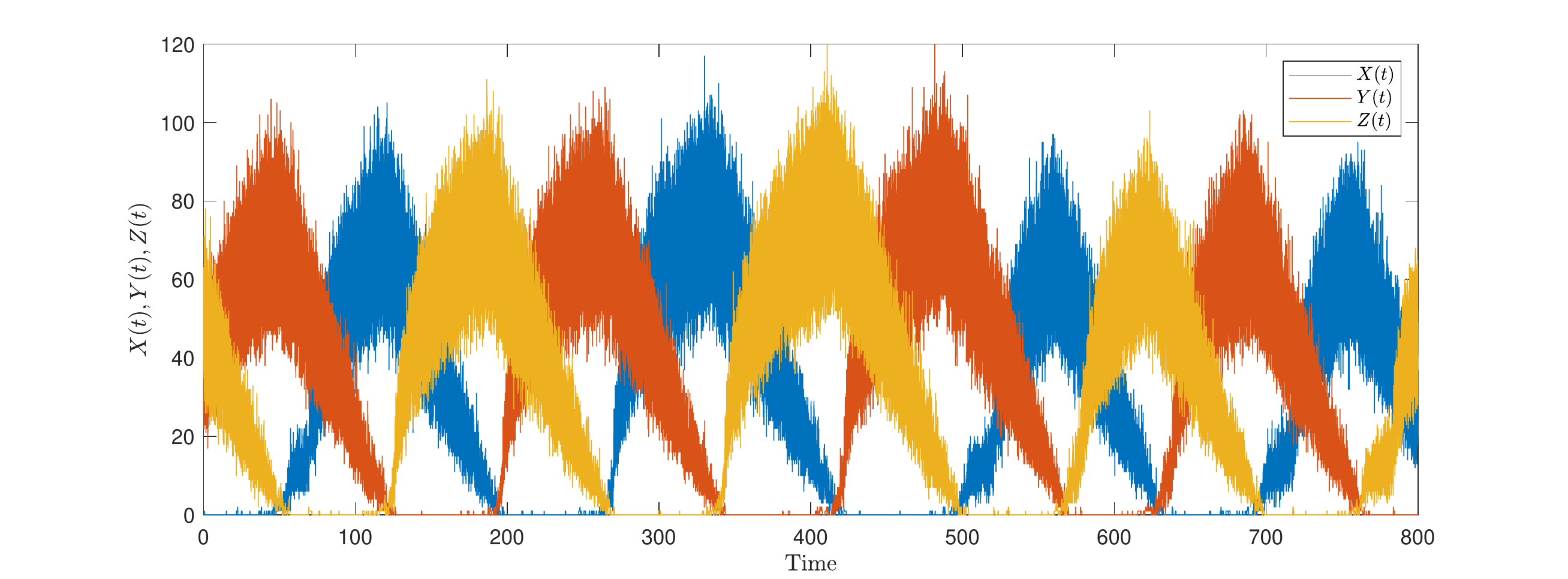}}}
\subfigure[Noncooperative, Fast-Promoter Kinetics.]{\includegraphics[width=0.485\textwidth,height=1in]{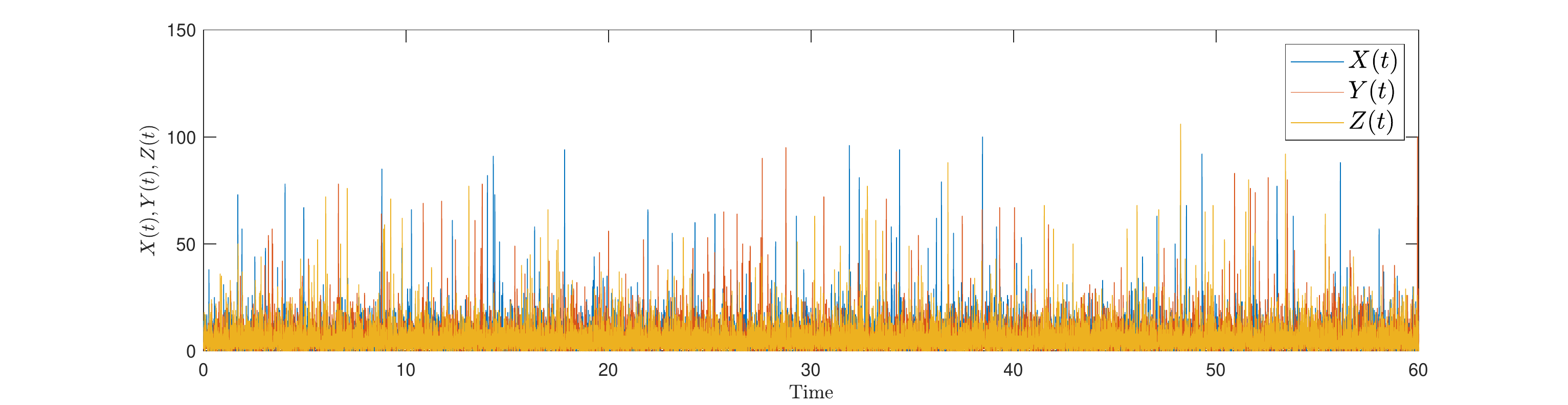}}
\caption{\rv{\textbf{ The noncooperative repressilator oscillates.  } (a) A time-series for the cooperative repressilator with cooperativity index 2, and slow promoter kinetics. (b) A time-series for the cooperative repressilator with cooperativity index 2, and fast promoter kinetics. (c) A time-series for the noncooperative repressilator, and slow promoter kinetics. (d) A time-series for the noncooperative repressilator, and fast promoter kinetics. The plots were generated by stochastic simulation via the Gillespie algorithm. For all the figures, the parameters are: $ \alpha=5\varepsilon, \alpha_-=1\varepsilon, k=2000, k_-=20, \beta_{\pm}=1$, where $\varepsilon=0.1$ for slow kinetics, and $\varepsilon=1000$ for fast kinetics.}}
\label{f.repressilator_s}
\end{figure}

}
 \subsection{Synchronization of interconnected toggle switches}
  We consider $N$ identical toggle switches:
\begin{equation}\begin{array}{rcl}
\nonumber \mathrm Y_{ic}+   \mathrm D_{0}^{xi}  &\xrightleftharpoons[\varepsilon \alpha_{-x} ]{\varepsilon \alpha_{x} }& \mathrm D_1^{xi} \\
  \mathrm D_{0}^{xi} &\mathop\rightarrow\limits^{k_x}& \mathrm X_i+ \mathrm D_{0}^{xi}, \\
\mathrm X_i &  \mathop\rightarrow\limits^{k_{-x}}& 0, \\
n\mathrm X_i &  \xrightleftharpoons[\beta_{-x} ]{\beta_x }& \mathrm X_{ic},
\end{array} \begin{array}{rcl}
\nonumber \mathrm X_{ic}+   \mathrm D_{0}^{yi}  &\xrightleftharpoons[\varepsilon \alpha_{-y} ]{\varepsilon \alpha_{y} }& \mathrm D_1^{yi} \\
  \mathrm D_{0}^{yi} &\mathop\rightarrow\limits^{k_y}& \mathrm Y_i+ \mathrm D_{0}^{yi}, \\
\mathrm Y_i &  \mathop\rightarrow\limits^{k_{-y}}& 0, \\
n\mathrm Y_i &  \xrightleftharpoons[\beta_{-y} ]{\beta_y }& \mathrm Y_{ic},
\end{array}\end{equation}
where $i=1,..,N$.
{We interconnect these systems through diffusion of
the protein species $\mathrm X_i,\mathrm Y_i$ among cells, modeled
through reversible
reactions with a diffusion coefficient $\Omega$:
\begin{equation}\label{e.diffusionSI}
  \mathrm X_i \xrightleftharpoons[\Omega ]{\Omega } \mathrm X_j,
  \quad \mathrm Y_i \xrightleftharpoons[\Omega ]{\Omega } \mathrm Y_j, \quad i \ne j, \;i,j=1,..,N.
\end{equation}
}

For a deterministic model, there exists a parameter range for which all toggle switches will synchronize into bistability for sufficiently high diffusion coefficient \cite{sontag16}.  {This implies each switch in the network behaves as a bistable switch, and it converges with all the other switches to the same steady-states. }

Our aim is to analyze the stochastic model at the limit of slow promot{e}r kinetics and compare it with the deterministic model. The network has $4^N$ promoter configurations. Consider a  {promoter} configuration \[ d=(d^X,d^Y):=(d_1^X,..,d_N^X,d_1^Y,...,d_N^Y) \in \left\{0,..,4^{N}-1\right\},\] where $d_i^X,d_j^Y \in \{0,1\}$. As discussed before, we need to represent the Markov chain conditioned on $D(t)=d$. Each conditional Markov chain can be represented as follows:
\begin{align*}\label{e.conditionaltoggle}
\begin{array}{rcl}
\emptyset &  \xrightleftharpoons[k_{-x} ]{d_i^X k_{x} }& \mathrm X_{i},
\\
n\mathrm X_i &  \xrightleftharpoons[\beta_{-x} ]{\beta_x }& \mathrm X_{ic},
\end{array} \begin{array}{rcl}
\emptyset &  \xrightleftharpoons[k_{-y} ]{d_i^Y k_{y} }& \mathrm Y_{i},
 \\
n\mathrm Y_i &  \xrightleftharpoons[\beta_{-y} ]{\beta_y }& \mathrm Y_{ic},
\end{array}\\
\mathrm X_i \xrightleftharpoons[\Omega ]{\Omega } \mathrm X_j,
\quad \mathrm Y_i \xrightleftharpoons[\Omega ]{\Omega } \mathrm Y_j, \quad  i,j=1,..,N.
\end{align*}
This conditional Markov is not the forms \eqref{e.birthdeathSI}, \eqref{e.birthdeathSIDim} that arise from the class of GRNs defined previously as in Figure 1. Nevertheless, it can be observed that it is a reversible zero-deficiency network \cite{feinberg87}. We show in the SI-\S 4.1 that our results can be generalized to networks that admit weakly reversible deficiency zero conditional Markov chains.

As shown in the SI-\S 4.1, {there are $4^N$ conditional Markov chains, and their stationary distributions are Poisson distributions with the following set of modes}:
\begin{equation}\label{e.toggleNetModes}
  \left\{ \left (\frac{k_x}{k_{-x}} \omega_{d_1}^X,\dots,\frac{k_x}{k_{-x}} \omega_{d_N}^X, \frac{k_y}{k_{-y}} \omega_{d_1}^Y,\dots,\frac{k_y}{k_{-y}} \omega_{d_N}^Y   \right ) : d=(d^X,d^Y) \in \left\{0,..,4^{N}-1\right\} \right \},
\end{equation}
where
\begin{equation}\label{e.toggleNetModes2} \omega_{d_i}^X = \frac{\sum_{i=1}^N \bar d_i^X + \bar d_i^X k_{-x}/\Omega}{N+k_{-x}/\Omega}, i=1,..,N, \end{equation}
in which $\bar d_i^X=1-d_i^X$. $\omega_{d_i}^Y$ is defined similarly.

Using Theorem 3, the stationary distribution is a mixture of $4^N$
Poisson distributions, and the weights can be found by finding the principal
eigenvector of the  {infinitesimal generator of the reduced-order} Markov chain as before. {Proposition 5, gives the procedure to find the matrix $\Lambda_r$, where Eq. (27) is replaced by \eqref{e.ratesSI_g} with \eqref{e.cond_disSI}}. As before, the Markov
states are the configurations $d=(d^X,d^Y)\in \{0,..,4^N-1\}$.  Assume
that $D(t)=d$. Then, the state transitions are given by the following
reactions:
\begin{equation}\label{e.togglenet_slow}  \mathrm D_{0}^{xi}  \xrightleftharpoons[
    \alpha_{-x} ]{ \alpha_{x} \rho_{d_i}^{X} } \mathrm D_1^{xi},
  \quad
  \mathrm D_{0}^{yi}  \xrightleftharpoons[ \alpha_{-y} ]{ \alpha_{y} \rho_{d_i}^{Y} } \mathrm D_1^{yi},
\end{equation}
where
\begin{equation} \label{e.toggleNetWeight}\rho_{d_i}^{X}=\mathbb E[\mathrm X_{ic}|D(t)=d]=\frac 1{n!} \frac{\beta_x}{\beta_{-x}}\left(\frac{k_x}{k_{-x}}\omega_{d_i}^X \right)^n,\end{equation}
and $\rho_{d_i}^{Y}$ is defined analogously.

Note that the  {mode corresponding to the state in which all the TFs are bound to the promoters} has no incoming transitions in the reduced Markov chain and hence it has zero weight, hence the network has $4^N-1$ modes.

 {We consider now the case of a high diffusion coefficient.} Note from \eqref{e.toggleNetModes}, \eqref{e.toggleNetModes2}, \eqref{e.toggleNetWeight} that as $\Omega\to \infty$, $X_1,..,X_N$ will synchronize in the sense that the joint distribution of $X_1,..,X_N$ is symmetric with respect to all permutations of the random variables. This implies that the marginal stationary distributions $p_{X_i}, i=1,..,N$ are identical. Hence, for sufficiently large $\Omega$  the probability mass is concentrated around the region for which $X_1,..,X_N$ are close to each other.
Consequently, for large $\Omega$ we can replace the population of toggle switches with  a \emph{single toggle switch} with the \emph{synchronized protein processes} $X(t), Y(t)$,  {which are defined, for the sake of convenience, as $X(t):=X_1(t), Y(t):=Y_1(t)$.} Next, we describe the stationary distribution of $X(t),Y(t)$.

 From \eqref{e.toggleNetModes2} it can be seen that $\omega_{d_i}^X$ does not depend on $d_i$ for large $\Omega$. Instead it depends only on $\sum_{i=1}^N \bar d_i^X$, which is the total number of unbound promoter sites in the genes producing $\mathrm X_1,..,\mathrm X_N$. The same holds for $\omega_{d_i}^Y$. Hence, the number of modes will drop from $4^N-1$ to $(N+1)^2-1$. Hence the joint distribution of $X,Y$ is a mixture of Poisson distributions with the following modes:
 \[ \left \{ \left ( \frac{i k_x}{Nk_{-x}}, \frac{j k_y}{Nk_{-y}}   \right ): i,j=0,..,N, (i,j) \ne (0,0) \right\}.\]

 Note that similar to the single toggle switch, there are modes which have both $X,Y$ with non-zero copy number. On the other hand, there are many additional modes.  {Recall that in the case of a} single toggle switch,  {we have tuned the cooperativity ratios such that the modes in which both genes are ON are suppressed}. Similarly, the undesired modes can be suppressed by tuning the cooperativity ratio {which} can be achieved by choosing $\rho_{d_i}^X,\rho_{d_i}^Y, d=0,..,4^N-1$ sufficiently large. In particular, letting the multi-merization ratio $\beta_x/\beta_{-x},\beta_y/\beta_{-y} \to \infty$, the weights of modes in the interior of the positive orthant $\mathbb R_+^2$ approach zero.

 In conclusion, for sufficiently high $\Omega$ and sufficiently high multimerization ratio the population behaves as a \emph{multimodal switch },  {which means that the whole network can have either the gene $\mathrm X$ ON, or the gene $\mathrm Y$ ON. And every gene can take} $2N$ modes which are:
  \[ \left \{ \left ( \frac{i k_x}{Nk_{-x}}, 0   \right ),\left ( 0, \frac{i k_y}{Nk_{-y}}   \right ): i=1,..,N \right\}.\]
Comparing to the low diffusion case, the network will have up to $2^N-1$ modes with sufficiently high multimerization ratio.

\paragraph{Comparison with the deterministic model}
For $\Omega$ greater then a certain threshold, the deterministic system bifurcates into bistabiliy. This means that all toggle switches converge to the same exact equilibria if $\Omega$ is greater than the threshold.  As we have seen before, this is not the case for the stochastic system, since the toggle switches converges \emph{asymptotically} to each other. Hence, we need to choose a  {a threshold} for  $\Omega$ that  { constitutes ``sufficient'' synchronization.}  {We choose to define this as} the protein processes synchronizing within one copy number. In other words, we require the maximum distance between the modes in \eqref{e.toggleNetModes} to be less than 1.

This amounts to requiring the diffusion coefficient needs to satisfy:
\begin{equation}\label{q_inequalitySI} \Omega\ge\frac 1N \max\{k_x -k_{-x}, k_y-k_{-y} \}. \end{equation}

We derive \eqref{q_inequalitySI} below.\\
In order for the maximum distance between the modes in \eqref{e.toggleNetModes} to be less than 1, we need to satisfy this inequality:
\[\max_{d\in \{0,..,4^{n-1}\}} \max_{i,j=1,..,N} \left\{\left|\frac{k_x}{k_{-x}} \omega_{d_i}^X-\frac{k_x}{k_{-x}} \omega_{d_j}^X\right|, \left|\frac{k_y}{k_{-y}} \omega_{d_i}^Y-\frac{k_y}{k_{-y}} \omega_{d_j}^Y\right|\right\} < 1.\]

Let us consider first maximizing the term containing the variables related to the gene $\mathrm X$. Note from \eqref{e.toggleNetModes2} that $\omega_{d_i}^X$ and $\omega_{d_j}^X$ can differ only in a $k_{-x}/\Omega$ term in the numerator. Hence, the maximization can be simplified to:

\[\max \left\{\frac{k_x}{k_{-x}} \left( \frac{ k_{-x}/\Omega}{N+ k_{-x}/\Omega} \right ), \frac{k_y}{k_{-y}}\left( \frac{k_{-y}/\Omega}{N+ k_{-y}/\Omega} \right )\right\} < 1.\]
Solving for $\Omega$ yields \eqref{q_inequalitySI}.

\rv{\section{Extension to GRNs with Complex-Balanced Conditional Markov Chains}}
In the main text we have included assumptions to simplify the mathematical treatment and notations while retaining the same qualitative features of the problem.  In this section, we elucidate the manner in which the results can be generalized.

The class of networks defined in the main text does not allow direct interactions between the proteins. However, our main results hold for a more general class of networks.
Figure 2 depicts a generalized gene expression block where the proteins expressed by other genes participate in the gene reactions block. The gene expression block remains unchanged.
The basic theory presented in the main text (Proposition 1, Theorem 3) remains unchanged. Proposition 5 needs can be modified by replacing Eq. (27) by the following:
   \begin{equation}\label{e.ratesSI_g}
         [\Lambda_r]_{d'd}= \left \{ \begin{array}{ll} \alpha, & \mbox{if the reaction is monomolecular}  \\
       \\ \alpha \mathbb E[X_{\bar ic}|D=d], & \mbox{if the reaction is bimolecular}\end{array}\right .
     \end{equation}

However, since there are no closed form formulae for computing the conditional expectation in general, the expression above is of limited utility.
Nevertheless, we can define a more general class of network such that the conditional stationary distributions can be computed analytically as  obtained in Proposition 2.
In order to define our generalized class of networks, we need some further \rv{notation and background which are summarized in the following subsection}.

\subsection{Complex-Balancing and the Concept of Deficiency}
Recall that a reaction network consists of the set of species $\mathscr S$ and the set of reactions $\mathscr R$. A reaction $\mathrm R_j \in \mathscr R$ can be written as Eq. \eqref{e.reaction}.

The formal linear combination in the left-side of the reaction is called the \emph{reactant complex}, while the linear combination on the right-side of the reaction is called the \emph{product complex}. The set of all complexes in the network is denoted by $\mathscr C$. Each complex can be interpreted as a vector belonging to the Euclidean vector space whose unit vectors correspond to the species.

Define the following matrix:
\[ \tilde \Gamma = [A|B], \, \mbox{where} \, [A]_{ij}=\alpha_{ij}, [B]_{ij}=\beta_{ij},\]
where $\alpha_{ij}, \beta_{ij}$ are the stoichiometry coefficients. If we remove the redundant columns of $\tilde\Gamma$, this yields an $|\mathscr S| \times |\mathscr C|$ matrix $\Gamma_c$, where each column corresponds to a certain complex. Hence, it can be verified that there exists a matrix $Y \in \{0,\pm 1\}^{\mathscr |C| \times |\mathscr  R | } $ such the that the stoichiometry matrix $\Gamma$ admits the following factorization:
\begin{equation}\label{Stoich_factor} \Gamma = \Gamma_c Y. \end{equation}

\begin{figure}
  \centering
  \includegraphics[width=0.7\textwidth]{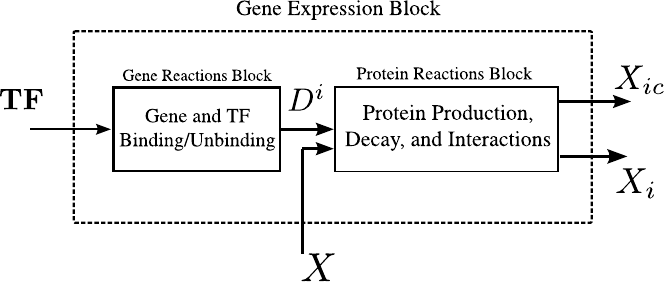}
  \caption{A generalized representation of the gene expression block. $X$ represent the vector of all protein monomers in the network.}\label{f.blockdiagram_g}
\end{figure}
There is significant literature studying the reaction networks using the notion of complexes, initially for deterministic systems, \cite{horn72,feinberg87}, and then for stochastic systems \cite{anderson10,sontag10}. The ordinary differential equation for a reaction network can be written as:
\begin{equation}\label{ode} \dot x = \Gamma R(x), \end{equation}
where $x$ is the concentration vector, and $R$ is a reaction rate function which is commonly defined using mass-action kinetics as follows:
\[R_j(x)=k_j \prod_{i=1}^{|\mathscr S|} x_i^{a_{ji}},\]
where $k_j$ is reaction rate constant.
 A conservation law is a nonnegative vector $v$ such that $v^T \Gamma=0$. The stoichiometric compatibility class containing $x_0$ is defined as $(x_\circ + \Im \Gamma) \cap \mathbb R_{\ge 0}^n$. If there are no conservation laws then $\mathbb R_{\ge 0}^n$ is a stoichiometric compatibility class.

In order to state the subsequent results, we define a \emph{modified} mass-action kinetics.  For a given a reaction rate function $R$, the modified reaction rate function is given as
\[\tilde  R_j(x)=\tilde k_j \prod_{i=1}^{|\mathscr S|} x_i^{a_{ji}}, \, \mbox{with} \, \tilde k_j:= \frac{k_j}{\prod_{i=1}^{|\mathscr S|} a_{ji}! }.\]

\rv{As per \eqref{Stoich_factor}, the complex formation rate is defined as $R_c(x):=YR(x)$. }  An equilibrium $\hat x$ of \eqref{ode} is called \emph{complex balanced} if  it satisfies $R_c(x)=0$ also. The existence of a single complex-balanced equilibrium guarantees that all equilibria of the network are complex balanced \cite{horn72}.

\rv{ Physically, reaction rates in $\ker Y$ do not change the complex-formation rate. Hence, if $\ker{\Gamma}=\ker{Y}$, then every equilibrium is complex-balanced. This ensures that a network is complex balanced regardless of kinetic rate constants. Hence, the \emph{deficiency} of the network is defined as $\delta:= \mbox{dim}( \ker(\Gamma_c) \cap \mbox{Image}(Y))$. If $\delta=0$ then every equilibrium is complex-balanced, and the network is said to be a zero-deficiency network.}

A network is weakly reversible if: \rv{existence of a} directed path from complex $C_i$ to complex $C_j$, \rv{implies the existence} a directed path from $C_j$ to $C_i$. \rv{This is equivalent to the existence of a strictly positive vector in $\ker Y$.}

\rv{For weakly-reversible networks, a simple graphical characterization of the deficiency is given by \cite{feinberg87}: } \begin{equation}\label{deficiency}\delta=|\mathscr C|-\ell-\rank(\Gamma),\end{equation} where $\ell$ is the number of strongly connected components in the graph of complexes.

\rv{A main result in the theory of deterministic complex-balanced networks is stated as:}

\begin{lemma}[\cite{feinberg87}] If a reaction network with mass-action kinetics is weakly reversible and has deficiency zero, then there exists a unique positive equilibrium in every stoichiometric compatibility class.
\end{lemma}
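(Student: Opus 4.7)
The plan is to adapt the classical Horn--Jackson--Feinberg argument, which splits naturally into three pieces: reduction of every equilibrium to a complex-balanced one, existence of at least one positive complex-balanced equilibrium, and uniqueness within each stoichiometric compatibility class via a strictly convex Lyapunov function.

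First, I would verify that under $\delta=0$ every equilibrium is automatically complex-balanced. Indeed, if $\Gamma_c Y R(x)=0$, then $YR(x)\in\ker(\Gamma_c)\cap\mathrm{Image}(Y)$, which is $\{0\}$ by hypothesis, so $YR(x)=0$. This reduces the task to solving the complex-balance equation on the positive orthant.

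Second, to construct a positive complex-balanced equilibrium I would exploit weak reversibility. By assumption the directed graph of complexes decomposes into $\ell$ strongly connected components (linkage classes) $L_1,\dots,L_\ell$. On each $L_\theta$ the restricted ``kinetic matrix'' $A_\theta$---whose off-diagonal $(i,j)$-entry is the rate constant of $C_j\to C_i$ and whose columns sum to zero---has, by the matrix-tree theorem for Laplacians of strongly connected digraphs, a one-dimensional kernel spanned by a strictly positive vector $\psi^{(\theta)}>0$. The complex-balance condition then amounts to finding $x>0$ and positive scalars $\lambda_\theta$ such that $\tilde k_j\, x^{y_j}=\lambda_\theta\,\psi^{(\theta)}_j$ for every complex $C_j\in L_\theta$. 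Taking logarithms turns this into a linear system in $\ln x$ and $\ln\lambda_\theta$. Here the deficiency-zero hypothesis enters decisively: together with \eqref{deficiency} it guarantees that the vector of log-ratios $\ln(\psi^{(\theta)}_j/\tilde k_j)$ lies in the appropriate column span, so that the system is consistent and admits a positive solution $x^*$.

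Third, uniqueness on each stoichiometric compatibility class would follow from a strictly convex Lyapunov argument. I would introduce the Horn--Jackson pseudo-Helmholtz free energy
\[
V(x)\;=\;\sum_{i=1}^{|\mathscr S|}\left(x_i\ln\frac{x_i}{x_i^*}-x_i+x_i^*\right),
\]
and show, by a direct computation that exploits complex balance at $x^*$, that $\dot V\le 0$ along \eqref{ode} with equality iff $x$ is itself complex-balanced. Strict convexity of $V$ on $(x^*+\mathrm{Image}(\Gamma))\cap\mathbb R^{|\mathscr S|}_{>0}$, combined with a Birch-type statement---every coset of the stoichiometric subspace meets the log-affine set $\{x>0:\ \ln(x/x^*)\perp\mathrm{Image}(\Gamma)\}$ in exactly one point---then forces the positive equilibrium to be unique within each compatibility class. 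I expect the main obstacle to be the second step: establishing the algebraic consistency of the log-linear complex-balance system. This is precisely where the numerical value of the deficiency earns its keep, and it typically requires either a careful matrix-tree analysis on each linkage class, or Feinberg's dimension-count argument comparing $\dim\ker$ of the global kinetic matrix with $|\mathscr C|-\ell$. By contrast, steps one and three are essentially mechanical once $x^*$ has been produced.
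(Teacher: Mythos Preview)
The paper does not supply its own proof of this lemma; it is stated with a citation to \cite{feinberg87} and used as a black box from the chemical reaction network literature. Your outline is the standard Horn--Jackson--Feinberg argument and is correct in its essentials: the reduction of equilibria to complex-balanced ones via $\ker(\Gamma_c)\cap\mathrm{Image}(Y)=\{0\}$, the matrix-tree construction of a positive kernel vector on each linkage class followed by the log-linear solvability step (which is exactly where the deficiency count $|\mathscr C|-\ell=\rank(\Gamma)$ guarantees consistency), and the pseudo-Helmholtz uniqueness argument coupled with a Birch-type intersection lemma. So there is nothing to compare against here beyond noting that your sketch faithfully reproduces the classical proof the paper is invoking.
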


\rv{Hence if the network is complex-balanced and without conservation laws, then it has a unique equilibrium for the deterministic system. A parallel result exists for the Markov chain model with associated chemical master equation. It is stated as follows:}

\begin{lemma}[\cite{anderson10,sontag10}] Assume a network is weakly reversible, has deficiency zero and has no conservation laws. Let $\hat x=[\hat x_{1},..,\hat x_{n}]^T$ be the unique complex balance equilibrium for the  deterministic system with the modified mass-action kinetics. Then the stationary distribution for the corresponding master equation is:
\[ \pi(x)=\prod_{i=1}^n \mathbf P(x_i;\hat x_{i}).\]

\end{lemma}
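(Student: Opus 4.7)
The plan is to verify directly that the product-of-Poissons $\pi(x) = \prod_{i=1}^n \mathbf{P}(x_i;\hat x_i)$ solves the stationary master equation $\Lambda^T \pi = 0$, and then use the assumption of no conservation laws together with weak reversibility to secure uniqueness. Writing the CME at stationarity, for every $x \in \mathbb{Z}_{\ge 0}^n$ one needs
\begin{equation*}
\sum_{j=1}^{|\mathscr R|} \bigl[ R_j(x-\gamma_j)\, \pi(x-\gamma_j) - R_j(x)\, \pi(x) \bigr] = 0,
\end{equation*}
where $R_j$ is the stochastic mass-action propensity and $\gamma_j = y'_j - y_j$ is the stoichiometric vector expressed through reactant and product complex vectors $y_j, y'_j$.

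The first key step is an algebraic identity. Because $R_j(x) = k_j \prod_i x_i!/[(x_i-a_{ji})!\, a_{ji}!]$ and the Poisson factor contributes $\hat x_i^{x_i}/x_i! \cdot e^{-\hat x_i}$, the factorials telescope and one obtains
\begin{equation*}
R_j(x)\, \pi(x) = \tilde k_j\, \hat x^{y_j}\, \pi(x - y_j), \qquad \tilde k_j := k_j / \prod_i a_{ji}!,
\end{equation*}
where $\hat x^{y_j} := \prod_i \hat x_i^{a_{ji}}$ and $\pi(x-y_j)$ denotes the same product-Poisson evaluated at the shifted argument (with the convention that it vanishes on negative coordinates). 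Substituting into the master equation and using $\gamma_j + y_j = y'_j$ reduces the sum to $\sum_j \tilde k_j \hat x^{y_j} [\pi(x-y'_j) - \pi(x-y_j)]$.

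The second key step is to reindex this sum \emph{by complex} rather than by reaction: grouping the first piece according to the product complex and the second according to the reactant complex yields
\begin{equation*}
\sum_{C \in \mathscr C} \pi(x - y_C) \left[ \sum_{j : y'_j = C} \tilde k_j\, \hat x^{y_j} \;-\; \hat x^{y_C} \sum_{j : y_j = C} \tilde k_j \right].
\end{equation*}
The bracket is precisely the net flux into complex $C$ at the state $\hat x$ under the modified mass-action rates. By the hypothesis of weak reversibility and deficiency zero, Feinberg's theorem (stated just before as Lemma) guarantees that $\hat x$ is complex-balanced, i.e.\ every such bracket vanishes. Hence the entire sum is identically zero for all $x$, establishing stationarity of the product-Poisson law.

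It remains to argue that this stationary measure is the \emph{unique} stationary distribution, hence equal to $\lim_{t\to\infty} p(t)$. Each Poisson factor is normalizable and the assumption of no conservation laws means the reachable state space is not partitioned by stoichiometric compatibility classes. Combined with weak reversibility, standard arguments (analogous to Proposition~SI-1 in this paper) show that the induced continuous-time Markov chain on $\mathbb Z_{\ge 0}^n$ is irreducible on a single closed communicating class containing $0$, so the stationary distribution is unique. The main technical care is in the bookkeeping of the algebraic identity and the index swap between reaction-indexed and complex-indexed sums; the conceptual content is simply that stochastic complex balance (in the Whittle/Kelly sense) is exactly the translation of deterministic complex balance once one adopts the modified rate constants $\tilde k_j$.
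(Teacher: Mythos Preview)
The paper does not prove this lemma; it is quoted as a known result from the cited references \cite{anderson10,sontag10}, so there is no ``paper's own proof'' to compare against. Your argument is correct and is essentially the standard Anderson--Craciun--Kurtz proof: the algebraic identity $R_j(x)\pi(x)=\tilde k_j\,\hat x^{y_j}\pi(x-y_j)$, followed by the reindex-by-complex step that collapses the stationary CME to the complex-balance condition at $\hat x$, which holds by weak reversibility plus deficiency zero.

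One small caution on the uniqueness paragraph: weak reversibility and deficiency zero of the \emph{reaction network} do not by themselves imply irreducibility of the \emph{Markov chain} on $\mathbb Z_{\ge 0}^n$; the Anderson--Craciun--Kurtz theorem is really stated per closed communicating class. What the ``no conservation laws'' hypothesis buys is that the stoichiometric compatibility class is the full nonnegative orthant, and since the product-Poisson you exhibit is strictly positive on all of $\mathbb Z_{\ge 0}^n$ and satisfies the stationary equation globally, its restriction to the unique closed communicating class (whose existence you can argue as in Proposition~\ref{th.irreducible}) is the stationary distribution there. This is adequate for the paper's purposes, but your phrasing ``irreducible on a single closed communicating class containing $0$'' slightly overstates what follows directly from the hypotheses.
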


\subsection{Extension of the Main Result}

\rv{We are ready to define the generalized class of networks as gene regulatory networks whose conditional Markov chains are complex-balanced. In particular, if all the conditional Markov chains are weakly reversible and zero-deficiency, then the GRN can be treated with our methods.}

Consider a set of $N$ genes, binding sets $\{B_i\}_{i=1}^N$, and kinetics constants $k_j$. A \emph{generalized gene expression block} is as shown in Figure \ref{f.blockdiagram_g}. Then, a gene regulatory network is an arbitrary interconnection of gene expression blocks subject to Assumption A2.
A gene regulatory network admits conditional product-form distributions if, for each $d \in \{0,..,L-1\}$, the reaction network corresponding to the conditional Markov chain is weakly reversible and deficiency zero.
Hence, we can restate Proposition 2:
\begin{proposition}\label{th.cond_disSI} Fix $d \in \{0,..,L-1\}$. Let $\hat x^d,\hat x^{d2}$ be the complex-balanced equilibria for the proteins and multi-merized proteins, respectively. Consider eq. (16), then there exists a conditional joint stationary distribution $\pi_{X|d}^{(J)}$ and it is given by
\begin{equation}\label{e.cond_disSI} \pi^{(J)}_{X|d}(x)= \prod_{i=1}^N \pi_{X|di}(x_i),\end{equation}
where
\begin{equation}\label{e.cond_disSI2} \pi^{(J)}_{X|d_i}(x_i)= \left \{ \begin{array}{ll}\displaystyle \mathbf P\left ( x_{i1},x_{i2}; \hat x^d_i, \hat x^{d2}_i \right ) 
& \mbox{if}\  X_i \  \mbox{is multimerized} \\
\mathbf P\left ( x_{i}; \hat x^d_i \right ) ,& \mbox{otherwise}
\end{array}\right . ,\end{equation}
where {$(J)$ refers to the joint distribution in multimerized and non-multimerized processes, $x_{i1}$ refers to the copy number of $X_i$, while $x_{i2}$ refers to the copy number of $X_{ic}$,} $\mathbf P(x;a):= \frac {a^x}{x!} e^{-a},  \mathbf P(x_1,x_2;a_1,a_2):= \frac {a_1^{x_1}}{x_1!}\frac {a_2^{x_2}}{x_2!} e^{-a_1-a_2}$.
\end{proposition}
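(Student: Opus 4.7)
The plan is to apply the Anderson--Craciun--Kurtz theorem, which is stated as the second Lemma in \S SI-4.1, directly to the reaction network corresponding to the conditional Markov chain indexed by $d$. By hypothesis we are in the generalized class of GRNs that admit conditional product-form distributions, which means precisely that, for each fixed promoter configuration $d$, the associated conditional reaction network is weakly reversible and has deficiency zero. Fixing $D(t)=d$ eliminates the gene species $\mathrm D_j^i$ from the dynamics, so the conservation laws $\sum_{j\in B_i} D^i_j = 1$ drop out; the only species remaining are the proteins $\mathrm X_i$ and (when present) their multimers $\mathrm X_{ic}$, and we assume no further conservation laws exist on these coordinates, so that $\mathbb Z_{\ge 0}^M$ is a single stoichiometric compatibility class for the conditional chain.

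Once the hypotheses of the ACK theorem are in place, the first step is to invoke the deterministic deficiency-zero theorem of Feinberg (the first Lemma in \S SI-4.1) applied to the modified mass-action system: this yields a unique complex-balanced equilibrium in the positive orthant, whose components in the protein coordinates we denote by $\hat x^d$ and in the multimerized-protein coordinates by $\hat x^{d2}$. The second step is to substitute this equilibrium into the ACK conclusion, which states that the stationary distribution of the CME factors as a product of Poisson distributions indexed by all species in the conditional chain, with means equal to the components of the complex-balanced equilibrium.

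The third step is bookkeeping: reorganize the product of single-species Poissons gene by gene. For a gene $i$ whose expressed protein does not multimerize, the only factor associated with gene $i$ is $\mathbf P(x_i;\hat x^d_i)$. For a gene $i$ with a multimerized TF, the factors for $\mathrm X_i$ and $\mathrm X_{ic}$ are grouped into the bivariate Poisson $\mathbf P(x_{i1},x_{i2};\hat x^d_i,\hat x^{d2}_i)$, which by definition factors into the two Poissons in its arguments. This produces exactly the formulas \eqref{e.cond_disSI} and \eqref{e.cond_disSI2}.

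The main obstacle is less in the algebra than in ensuring the hypotheses of the ACK theorem genuinely hold after conditioning. One must argue that (i) removing the frozen gene species preserves weak reversibility and deficiency zero of the conditional network (this is automatic once the conditional chain is declared to have these properties by assumption), and (ii) no residual conservation laws remain on the protein/multimer coordinates that would restrict the state space to a proper stoichiometric compatibility class; if such laws were present, the Poisson product would have to be replaced by its normalized restriction to that class and uniqueness of the complex-balanced equilibrium would have to be read inside that class. Under the standing assumptions of the generalized framework, both issues are covered, and the proof reduces to the citation of the ACK theorem plus the grouping step.
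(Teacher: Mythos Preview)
Your proposal is correct and follows essentially the same approach as the paper. The paper presents this proposition immediately after stating the two lemmas (Feinberg's deficiency-zero theorem and the Anderson--Craciun--Kurtz product-form result) with the words ``Hence, we can restate Proposition 2,'' and gives no further proof; your write-up simply makes explicit the citation-and-regrouping argument the paper leaves implicit, including the observation that freezing $D(t)=d$ removes the gene-species conservation laws so that the ACK hypotheses apply on $\mathbb Z_{\ge 0}^M$.
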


\subsection{Specific Examples}
\subsubsection{Cooperative Binding}
Cooperative binding is the process in which a TF forms as a hetero-dimer consisting of two different proteins. Consider two proteins $\mathrm X,\mathrm Y$, then the hetero-dimerization reaction:
\[\mathrm  X+\mathrm Y  \leftrightarrow \mathrm {XY} . \]

Consider a conditional Markov chain that consists of uncoupled birth-death processes, with multi-merization reactions. It is reversible and has deficiency zero. If a hetero-dimerization reaction is added, then this involves adding two complexes, adding a new strongly connected component to the complexes' graph, and adding a new linearly-independent row to the stoichiometry matrix. According to \eqref{deficiency} this will not change the deficiency. Hence, our results can be extended immediately to networks with cooperative binding.

\subsubsection{Diffusion}
We model diffusion between two gene expression blocks as a reversible reaction between the proteins as:
\[ \mathrm X  \leftrightarrow \mathrm Y . \]

Consider a conditional Markov chain that consists of uncoupled birth-death processes, with multi-merization reactions. It is reversible and has deficiency zero. If a diffusion reaction is added, then none of the number of complexes, the number of strongly connected components, or the rank of the stoichiometry matrix change. According to \eqref{deficiency} this will not change the deficiency. Hence, our results can be extended immediately to networks with diffusion. We have already applied this to the communicating toggle switches in the main text.
\rv{\subsubsection{Multi-Step Multi-merization}}
\rv{In the main text we have assumed that the multi-merization process occurs in one-step only as follows:
\begin{equation}\label{multimer} nX \xrightleftharpoons[ \beta_-]{ \beta} X_n. \end{equation}
As an alternative model, consider the following multi-step or \emph{sequential} multi-merization process:
\begin{align*}
2X & \xrightleftharpoons[ \beta_{-2}]{ \beta_2} X_2 \\
X_2+X &   \xrightleftharpoons[ \beta_{-3}]{ \beta_3} X_3 \\
\vdots \\
X_{n-1}+X &   \xrightleftharpoons[ \beta_{-n}]{ \beta_n} X_n \end{align*}
Consider a conditional Markov chain, conditioned on $D=d$, that consists of the above network with a birth and death process:
\[ 0 \xrightleftharpoons[ k_-]{ k} X.\]
In order to evaluate the deficiency \eqref{deficiency} of the network, note that $|\mathscr C|=2+2n, \ell=n+1, \rank(\Gamma)=n+1$. Hence, $\delta=0$. Therefore, the network is complex balanced and the stationary distribution is a product of Poisson distributions.}

\rv{Since $X_n$ acts as a transcription factor we are interested in finding $\mathbb E[X_n|D=d]$. Using Lemma SI-2, we can find the required quantity by finding the a complex-balanced equilibrium of the associated deterministic system with modified mass-action kinetics. Hence, we find:
\[\mathbb E[X_n|D=d]= \frac 1{2}\left( \frac {k}{k_-} \right )^n  \prod_{i=2}^n \frac{\beta_i}{\beta_{-i}}.\]
From the perspective of the slow promoter kinetics, the multi-step multi-merization reactions can be replaced with a single reaction \eqref{multimer} with $\beta/\beta_-= \frac 12 n! \prod_{i=2}^n {\beta_i}/{\beta_{-i}}$.}

\rv{\section{Extension to Networks with Multiple Copies of the Genes}}
In the main text, we have assumed that there exists one copy only per gene. Nevertheless, our theory is not limited by this assumption. In this subsection we show how the theory can be generalized.

Consider the network in Figure 2. Assume that the $i$th gene has $M_i$ copies.
Let $B_i$ be the corresponding binding-set, and denote $b_i=|B_i|-1$. We need to track the promoter configurations for all copies of the $i$th gene. Hence, we define the \emph{total binding set} $P_i$. The new set $P_i$ describes all possible $|B_i|$-tuples of nonnegative integers such that they sum to $M_i$.  It can be shown that $|P_i|=\binom{M_i+b_i}{b_i}$. To explain this, consider the following:
\begin{itemize}
  \item If one TF binds to a promoter or the promoter changes its configuration autonomously, then $B_i=\{0,1\}$, while $P_i := \{(0,M_i),(1,M_i),..,(M_i,0)\}$. Hence $|P_i|=M_i+1$.
  \item If two promoters bind independently to a promoter then $B_i=\{00,01,10,11\}$, while $P_i := \{(M_i,0,0,0), (M_i-1,1,0,0),....,(0,0,0,M_i)\}$, and $|P_i|=(M_i+3)(M_i+2)(M_i+1)/6$. Each tuple $(d_{i0},..,d_{i3})$ is interpreted as follows: there are $d_{i0}$ copies of the gene at configuration 00, $d_{i1}$ at configuration 01, etc. The tuple entries are ordered to correspond to 00,01,10,11, respectively.
  \item If two TFs bind competitively  to a promoter then the binding set is $B_i := \{00,10,01\}$, while $P_i := \{(M_i,0,0), (M_i-1,1,0),....,(0,0,M_i)\}$, and $|P_i|=(M_i+2)(M_i+1)/2$. The tuple entries are ordered to correspond to 00,01,10, respectively.
\end{itemize}

Consider the master equation \eqref{e.MasterEquation}.
Let us consider that the $i$th gene has $M_i$ copies. We again split the stochastic process $Z(t)$ into two subprocesses: \emph{the gene process} $D(t)$ and \emph{the protein process} $X(t)$. We will redefine $D(t)$ as explained below.

Consider the $i^{\rm th}$ gene. We define the process $D_i$ such that $D_i(t)
 \in P_i$, so it informs us exactly the copy numbers of genes at each promoter configuration.
 Collecting these into a vector, define the gene process $D(t):=[D_1(t),...,D_N(t)]^T$ where $D(t) \in \prod_{i=1}^N P_i $. The $i^{\rm th}$ gene can be represented by $|P_i|$ states, so $L:={\prod_{i=1}^N |P_i|}$ is  the total number of configurations in the GRN. {With} abuse of notation, we write also $D(t) \in \{0,..,L-1\}$ in the sense of some fixed bijection between $\{0,..,L-1\}$ and $\prod_{i=1}^N P_i $. Hence, $d \in \{0,..,L-1\}$ corresponds to $(d_1,...,d_N) \in P_1 \times .. \times P_N$ and we write $d=(d_1,..,d_N)$. Note that $d_i=(d_{i0},..,d_{ib_i})$.
 With the redefined $D$, the protein processes and the master equation are defined in an identical manner to the main text.

 In order to apply Proposition 2, we need to reexamine the conditional Markov chains.  Fix $D(t)=d$, and consider the $i$th gene. Let $d_i=(d_{i0},..,d_{ib_i})$, which informs us the copy number of genes with a certain configuration. Hence, the production rate for the protein is the sum of the production rates of the genes weighted by their respective copy numbers. Then, the conditional Markov chain is given as:
 \begin{equation}\label{e.birthdeathSI}
  \emptyset \xrightleftharpoons[  k_{-i}]{  k_{id_i}} \mathrm  X_i,
\end{equation}
or, with multimerization, as:
\begin{equation}\label{e.birthdeathSIDim}
  \emptyset \xrightleftharpoons[ k_{-i}]{ k_{id_i}} \mathrm X_i, \ n_i \mathrm  X_i \xrightleftharpoons[  \beta_{-i}]{ \beta_{i}} \mathrm X_{ic}.
  \end{equation}
 where
  \begin{equation}\label{ki} k_{id_i}:= \sum_{j=0}^{b_i} d_{ij} k_{ij}.\end{equation}

Theorem 3 and Corollary 4 hold as in the main text. Hence we conclude that
\begin{equation}\label{e.mixture_mSI}\pi(x)= \sum_{d=0}^{L-1} \lambda_d \pi_{X|d}(x)=\sum_{d=0}^{L-1} \lambda_d \prod_{i=1}^N \mathbf P\left(x_i;\frac{k_{id_i}}{k_{-i}}\right).\end{equation}

In order to find the weights $\lambda_i$'s, we need to restate Proposition SI-2 as follows:

\begin{proposition}\label{proposition_si} The matrix $\Lambda_r$ in eq. \eqref{e.solutionSI} can be computed via the algorithm below.

 \begin{itemize}\itshape
\item For each $d \in \{0,..,L-1\}$ write $d=(d_1,..,d_N)=((d_{10},...,d_{1b_1}),..,(d_{N0},...,d_{Nb_N}))\in \prod_{i=1}^N P_i$.  Using the previously discussed identification:
\begin{itemize}
\item[-]  {Let $\mathscr R_d=\{\rm R_1,..,\rm R_{|\mathscr R_d|}\}$ be the set of all gene reactions. Then, for each $\nu \in \{1,..,|\mathscr R_d|\}$}:
\begin{itemize}
  \item[*]
      \begin{enumerate}
   \item Let  {$\mathrm D_{j}^i$, and} $\mathrm D_{j'}^i$ be the  {reactant and product} configuration species of the {$\rm R_\nu$}, where $j \in B_i$.  {The reaction will cause a transition from $d$ to} a state $d'$ in which $d_{ij}^{'}=d_{ij}-1$ $d_{ij'}^{'}=d_{ij'}+1$, and $d_{ij}^{'}=d_{ij}$ otherwise. Let $\alpha$ be the kinetic constant of  {$\rm R_\nu$}. If  {$\rm R_\nu$} is a binding reaction, then let $\mathrm X_{\bar i }$, $\mathrm X_{\bar ic }$  denote the TF or the multi-merized TF,  {where $\bar i$ denotes the index of the gene that expresses the TF.}
  \item Then, the $(d',d)$ entry of $\Lambda_r$ can be written as:
     \begin{equation}\label{e.ratesSI_m}
         [\Lambda_r]_{d'd}= \left \{ \begin{array}{ll} \alpha, & \mbox{if the reaction is monomolecular}  \\
       \\ {\frac{\alpha}{n_{\bar i}!}  \frac{ \beta_{\bar i  }}{  \beta_{-\bar i}}\left (\frac{k_{\bar i d_{\bar i }}}{k_{-\bar i}} \right )^{n_{\bar i}}}, & \mbox{if the reaction is bimolecular}\end{array}\right .
     \end{equation}
     where $k_{\bar i  d_{\bar i }}$ is defined in \eqref{ki}.
   \end{enumerate}
   \end{itemize}
\item[-] Set \begin{equation}\label{e.rate_dm}\mathbf [\Lambda_r]_{d d}  = -\sum_{i'\ne i} \mathbf 1^T \hat \Lambda_{d_{i'} d_i } \pi_{d}.\end{equation}
\end{itemize}
\item Set the rest of the entries of $\Lambda_r$ to zero. \\
\end{itemize}
\end{proposition}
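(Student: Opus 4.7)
The argument should parallel the proof of Proposition SI-2 verbatim, with the only change being the combinatorial bookkeeping needed to handle $M_i$ gene copies instead of one. I would begin by invoking the slow--fast decomposition of Eq.~(14): the slow matrix $\hat\Lambda$ collects all gene reactions \eqref{e.ba}--\eqref{e.b10}, and the block $\hat\Lambda_{d'd}$ encodes the rates of transitioning from a joint state $(x,d)$ to $(x,d')$. Applying the formula $[\Lambda_r]_{d'd} = \mathbf 1^T \hat\Lambda_{d'd}\,\pi_d$ from \eqref{e.solutionSI_explicit} reduces the computation to averaging over the conditional Poisson distribution $\pi_d$ described by Proposition SI-4.

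Next I would enumerate the possible transitions out of a configuration $d = ((d_{10},\dots,d_{1b_1}),\dots,(d_{N0},\dots,d_{Nb_N}))$. Because each gene reaction touches only a single gene copy, a transition of the $i^{\rm th}$ gene from configuration $j$ to configuration $j'$ maps $d_{ij}\mapsto d_{ij}-1$ and $d_{ij'}\mapsto d_{ij'}+1$, with every other component unchanged; this defines the unique $d'$ reached by reaction $\mathrm R_\nu$. For a monomolecular reaction with rate constant $\alpha$, mass--action kinetics applied to the $d_{ij}$ identical copies of $\mathrm D_j^i$ gives an entry of $\hat\Lambda_{d'd}$ that is independent of $x$; summing against $\pi_d$ (a probability vector) yields the value recorded in \eqref{e.ratesSI_m}. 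For a bimolecular binding reaction involving the TF $\mathrm X_{\bar ic}$, the entry of $\hat\Lambda_{d'd}$ carries an additional factor $x_{\bar ic}$, so that
\begin{equation*}
[\Lambda_r]_{d'd} \;=\; \alpha\,\mathbb E[X_{\bar ic}(t)\mid D(t)=d].
\end{equation*}

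Then I would evaluate this expectation. By Proposition SI-4 (as extended in \eqref{e.cond_disSI2}), conditional on $D(t)=d$ the marginal distribution of $X_{\bar ic}$ is Poisson with mean
$\frac{1}{n_{\bar i}!}\frac{\beta_{\bar i}}{\beta_{-\bar i}}\bigl(\frac{k_{\bar i d_{\bar i}}}{k_{-\bar i}}\bigr)^{n_{\bar i}}$, where $k_{\bar i d_{\bar i}}$ is the $d_{\bar i}$-dependent effective production rate defined in \eqref{ki}. Substituting this mean into the previous expression produces exactly the bimolecular entry of \eqref{e.ratesSI_m}. The diagonal entries \eqref{e.rate_dm} then follow from the stochastic property $\mathbf 1^T\hat\Lambda = 0$, which transfers to $\mathbf 1^T \Lambda_r = 0$, forcing each column of $\Lambda_r$ to have zero sum and fixing the diagonal once the off-diagonal entries are known. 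Finally, any off-diagonal pair $(d',d)$ not obtainable from $d$ by a single-copy elementary gene reaction must have $[\Lambda_r]_{d'd}=0$, since no reaction connects them.

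The only substantive obstacle is correctly propagating the multiplicity $d_{ij}$ through the mass--action rate and verifying that the $d_{\bar i}$-dependence in \eqref{ki} is the sole way in which the expectation of $X_{\bar ic}$ depends on the current multi-copy configuration $d$. This relies crucially on the independence of the conditional protein processes (so that $\mathbb E[X_{\bar ic}\mid D=d]$ factors through $d_{\bar i}$ alone) and on the fact that the effective birth--death rate seen by the $\bar i^{\rm th}$ conditional chain is $k_{\bar i d_{\bar i}}$ rather than any individual $k_{\bar i j}$. Once these two facts are in place, the remainder of the proof is bookkeeping identical to that of Proposition SI-2.
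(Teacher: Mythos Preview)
Your proposal is correct and follows the same route as the paper: the paper does not give a separate proof of this proposition but presents it as the direct multi-copy restatement of Proposition~SI-2, so the intended argument is precisely the one you outline---apply $[\Lambda_r]_{d'd}=\mathbf 1^T\hat\Lambda_{d'd}\pi_d$, split into monomolecular and bimolecular cases, and evaluate $\mathbb E[X_{\bar ic}\mid D=d]$ from the conditional Poisson law with the effective production rate $k_{\bar i d_{\bar i}}$ of \eqref{ki}. Your identification of the two nontrivial points (propagating the copy-number multiplicity $d_{ij}$ through the mass-action rate, and verifying that the conditional expectation depends on $d$ only through $d_{\bar i}$ via \eqref{ki}) is exactly what distinguishes this case from the single-copy proof.
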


 \paragraph{Example: The Self-Regulating Gene} \strut \\
 Let us consider again the non-cooperative self-regulating gene,
 \begin{align}
\nonumber \mathrm  X+\mathrm D_0 &\xrightleftharpoons[\varepsilon \alpha_{-}]{\varepsilon \alpha} \mathrm D_1 \\ \label{noncoopSI}
  \mathrm D_0 &\mathop\rightarrow\limits^{k_{0}} \mathrm D_0+\mathrm X, \\
  \mathrm D_1 &\mathop\rightarrow\limits^{k_{1}} \mathrm D_1+\mathrm X,\nonumber \\
\mathrm X &  \mathop\rightarrow\limits^{k_{-}} 0 \nonumber .
\end{align}

Assume that the gene has $M$ copies. Hence, the total binding set $P_i=\{(M,0)...,(0,M)\}$, and we have $M+1$ states. Using Proposition \ref{proposition_si} the reduced order Markov chain can be represented as follows:
\[(M,0) \xrightleftharpoons[ \alpha_{-}]{  \alpha M k_0/k_-} (M-1,1) \xrightleftharpoons[ \alpha_{-}]{  \alpha (k_1+ (M-1) k_0)/k_-} (M-2,2) \dots  \xrightleftharpoons[ \alpha_{-}]{  \alpha Mk_1 k_0/k_-} (0,M) .\]

The resulting stationary distribution is a mixture of Poissons with modes at $Mk_0, (M-1)k_0+k_1, ..., Mk_1$.
Figure \ref{f.higher_M} shows the stationary distribution for several values of $M$ for the network above. The number of modes increases with $M$, however, the stationary distribution converges  for high $M$.

\begin{figure}
  \centering
  \includegraphics[width=\textwidth]{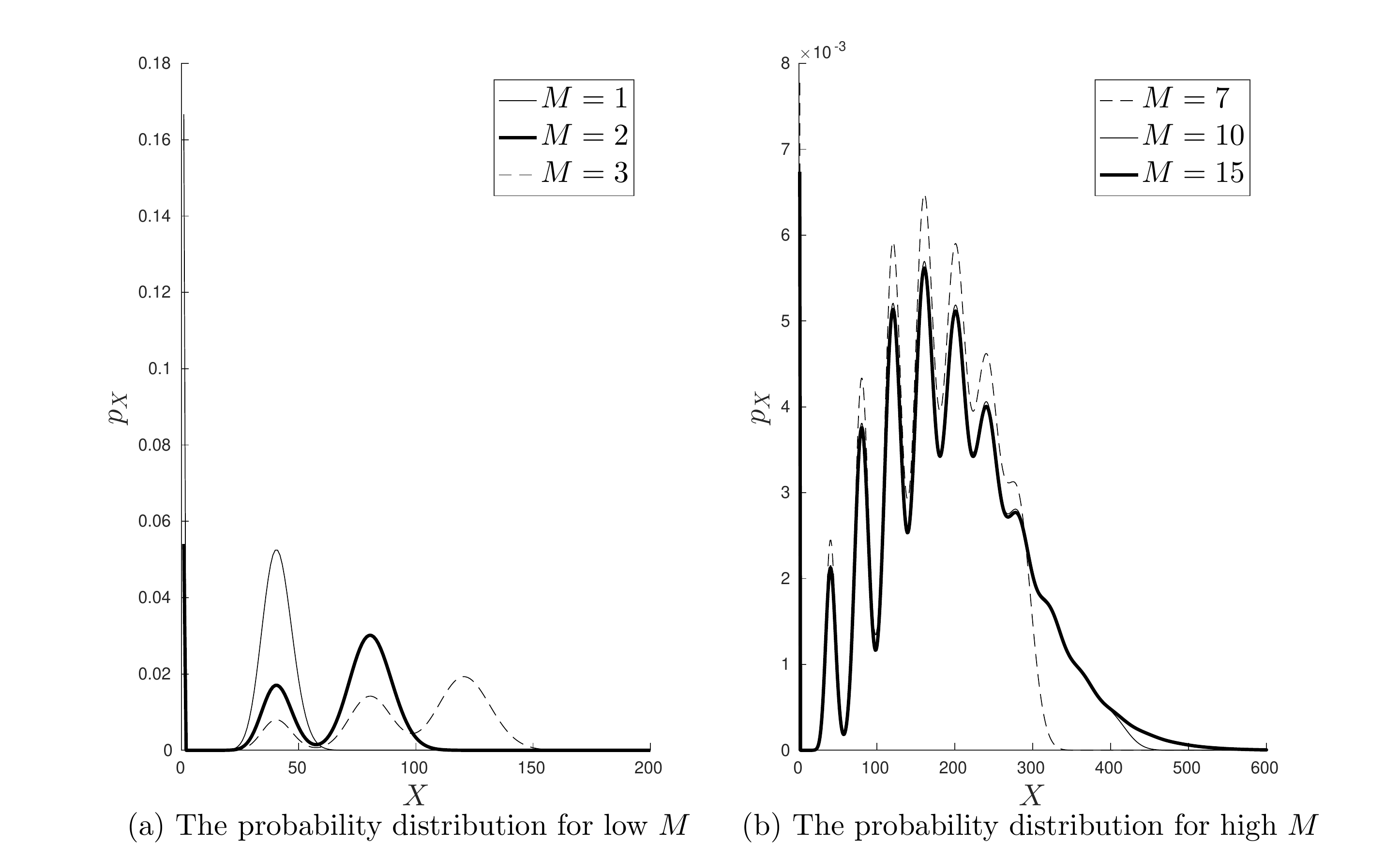}
  \caption{\textbf{The number of modes increase with the number of gene copies.}  (a) The stationary distribution for one, two and three copy numbers. The parameters are chosen to be identical to the one in Figure 1. (b) The stationary distribution for seven, ten and fifteen. Note coefficients of the high modes are very small. }\label{f.higher_M}
\end{figure}

\section{Additional Information for Tables and Figures in the Main Text}

\subsection{Figure 1-b}
A self-repressing gene is given as:
\begin{align}
\nonumber \mathrm X+\mathrm D_0 &\xrightleftharpoons[\varepsilon \alpha_{-}]{\varepsilon \alpha}\mathrm  D_1 \\
  \mathrm D_0 &\mathop\rightarrow\limits^{k_{0}} \mathrm D_0+\mathrm X, \\
\mathrm X &  \mathop\rightarrow\limits^{k_{-}} 0 \nonumber .
\end{align}

The ordinary differential equations for the deterministic model can be written as:
\[ \begin{bmatrix}\dot d_0 \\ \dot d_1 \\ \dot x \end{bmatrix} = \begin{bmatrix} -1 & 1 & 0 & 0 \\ 1 & -1 & 0 & 0 \\ -1 & 1 & 1 & -1 \end{bmatrix} \begin{bmatrix} \alpha d_0 x \\ \alpha_- d_1 \\ k_0 d_0 \\ k_- x \end{bmatrix},  \]
with $d_0(t)+d_1(t)=1$ for all $t$.

Setting the derivatives to zero, and solving for $x$ gives the following charcterisitic equation:
\[\alpha k_- x^2 + \alpha_- k_- x - \alpha_- k_0 =0.\]
Since the third term has a negative sign, the network has a unique positive equilibrium for all parameter values.
The parameters used in the simulation are:
\[ \alpha=\varepsilon/200, \alpha_-=\varepsilon, k_0=40, k_-=1.\]

The deterministic equilibrium is $\hat x \approx 34.1641$.
The curve for the fast promoter kinetic has been computed by applying the recurrence relation in Proposition 7.
The curve for the slow promoter kinetic has been computed by Eq. (32).
The remainder of the curves have been computed using a finite state projection truncated at $x=65$. This amounts to writing the master equation as a finite linear differential equation of the form:
\[ \dot p = \Lambda p, \]
then evaluating the principal eigenvector of $\Lambda$ using the command \texttt{null} in MATLAB, and normalizing the resulting vector.

\subsection{Figure 1-d}
A hybrid repression-activation network is given as:
\begin{align*}\begin{array}{rl}
\nonumber \mathrm Y+ \mathrm D_0^X &\xrightleftharpoons[\alpha_{-1}]{\alpha_1}  \mathrm D_1^X \\
 \mathrm  D_0 &\mathop\rightarrow\limits^{k_{10}}\mathrm  D_0^X+\mathrm X, \\
\mathrm X &   \mathop\rightarrow\limits^{k_{-1}} 0 \\ \end{array} \begin{array}{rl}
\nonumber \mathrm X+ \mathrm D_0^Y &\xrightleftharpoons[\alpha_{-2}]{\alpha_2} \mathrm D_1^Y  \\
\nonumber  \mathrm D_1^Y &\mathop\rightarrow\limits^{k_{21}} \mathrm D_1^Y+\mathrm Y, \\ \mathrm Y & \mathop\rightarrow\limits^{k_{-2}} 0. \end{array}
\end{align*}

The parameters used in the simulation are:
\[ \alpha_1=\varepsilon/70, \alpha_{-1}=\varepsilon, \alpha_2=\varepsilon/20, \alpha_{-2}=\varepsilon, k_{10}=k_{21}=40, k_{-1}=k_{-2}=1.\]

Modelling the network deterministically, there exists a unique positive equilibrium where the equilibrium values of $X,Y$ are given as $ \hat x \approx 25.4727, \hat y\approx39.9216 .$
The stationary distribution has been computed using Corollary 4 and Proposition 5.

\subsection{Table 1}
Let us consider the non-cooperative self-regulating gene. The reaction network is given as:
\begin{align}
\nonumber \mathrm  X+\mathrm D_0 &\xrightleftharpoons[\varepsilon \alpha_{-}]{\varepsilon \alpha} \mathrm D_1 \\
 \mathrm  D_0 &\mathop\rightarrow\limits^{k_{0}} \mathrm D_0+\mathrm X, \\
 \mathrm  D_1 &\mathop\rightarrow\limits^{k_{1}}\mathrm  D_1+\mathrm X,\nonumber \\
\mathrm X &  \mathop\rightarrow\limits^{k_{-}} 0 \nonumber .
\end{align}
The deterministic model gives the following characteristic equation for $x$:
\[ \alpha k_- x^2 +(\alpha_- k_- - \alpha k_1  ) x -  \alpha_- k_0 =0.  \]

Since the third term has a negative sign, the network has a unique positive equilibrium for all parameter values.
For the stochastic model:   With no leakiness, i.e. $k_0=0$, the Markov chain has an absorbing state at $(X,D_1)=(0,0)$. Hence, the network has a single mode at zero.
With leakiness,  we have shown in the main text that the slow promoter kinetics give rise to two modes (Corollary 4). While for fast promoter kinetics, we have mentioned that it gives rise to a uni-modal distribution after Proposition 7.
Let us consider the cooperative self-regulating gene. The reaction network is given as:
\begin{align}
\nonumber \mathrm  X_c+\mathrm D_0 &\xrightleftharpoons[\varepsilon \alpha_{-}]{\varepsilon \alpha} \mathrm D_1 \\
  \mathrm D_0 &\mathop\rightarrow\limits^{k_{0}} \mathrm D_0+\mathrm X, \\
  \mathrm D_1 &\mathop\rightarrow\limits^{k_{1}} \mathrm D_1+X,\nonumber \\
\mathrm X &  \mathop\rightarrow\limits^{k_{-}} 0 \nonumber \\
2\mathrm X & \xrightleftharpoons[\varepsilon \beta_{-}]{\varepsilon \beta}\mathrm  X_c
\end{align}

The deterministic model gives the following characteristic equation for $x$:
\begin{equation}\label{cubic} - \alpha \beta  k_- x^3 + \alpha \beta k_1 x^2 - \alpha_- \beta _- k_- x + \alpha_-\beta_-k_0 =0.\end{equation}

Since the constant term is negative, this means that \eqref{cubic} has at least one positive real solution.
The equation \eqref{cubic} has three solutions if
\[ \alpha_1 k_{1} k_-^2 \beta \alpha_{-1} \beta_{-1}  (k_{1} + 18 k_{0}) > 4 \alpha_1^2 k_{1}^3 \beta^2 k_{0} + 27 \alpha_1 k_-^2 \beta k_{0}^2 \alpha_{-1} \beta_{-1} + 4 k_-^4 \alpha_{-1}^2 \beta_{-1}^2 , \]
and one solution otherwise. Using the Routh-Hurwitz cireterion, all  solutions have positive real parts if and only if $k_1 > k_0$.   This means that the network can have either one or two positive \emph{stable} equilibria only.

For the stochastic model:   with no leakiness, the discussion is identical to the previous case.
With leakiness,  we have shown in the main text that the slow promoter kinetics give rise to two modes independent of the cooperativity index (Corollary 4). While for fast promoter kinetics, a single mode has been verified by solving the master equation numerically for many parameter sets including the ones that are multi-stable for the deterministic system.

\subsection{Figure 3-b}
The reaction network is given in Eq. (34).
The parameters are:
\[ \alpha=\varepsilon, \alpha_-=\varepsilon, k_0=200, k_1=1000, k_-=100, \beta=100, \beta_-=100.\]

For these parameters, the deterministic system has a unique stable positive equilibrium at $\hat x \approx 9.9195$.
The curve for the slow promoters limit has been computed using Eq. (36).
The remaining curves have been computed by a numerical solution of the master equation as articulated in SI-5.1.

\subsection{Figure 5b-e}
 We consider a population of three toggle switches ($N=3$) with $k_x=k_y=150, k_{-x}=k_{-y}=1, \beta_{x}=\beta_{y}=1, \beta_{-x}=\beta_{-y}=1, \alpha_x=\alpha_y=0.3\varepsilon, \alpha_{-x}=\alpha_{-y}=\varepsilon$ and $n=2$.

 The surface plots have been computed using Eq. (25).

 \subsection{Figure 6b-c}
 The parameters for Figure 6b are as follows:
 The dissociation ratio is fixed at 1/2000 for all binding/unbinding reactions and we let the dimerization ratio be 1/90.  The production ratios are $k_{x2}/k_{-x}=k_{y1}/k_{-y}=2700$. We assume that the inhibition and activation actions allow for leaks and hence we let $k_{x0}/k_{-x}=k_{y0}/k_{-y}=1080$, $k_{x3}/k_{-x}=k_{y3}/k_{-y}=675$ and $k_{x2}/k_{-x}=k_{y1}/k_{-y}=20$.

 The parameters for Figure 6c are as follows:
 The dissociation ratios are $\alpha_{x0}/\alpha_{-x0}=10 , \alpha_{x1}/\alpha_{-x1}=1/2700, \alpha_{y0}/\alpha_{-y0}=10 , \alpha_{y1}/\alpha_{-x1}=1/2700$. The maximal production ratio is $k_{x1}/k_{-x}=k_{y1}/k_{-y}=2700$, while $k_{x0}/k_{-x}27,k_{x2}/k_{-y}=0$, $k_{y2}/k_{-y}=0,k_{y0}/k_{-y}=270$.

 We ratio of the production to the leak were chosen as $k_{x1}:k_{x0}\approx 100:1$, and $k_{y1}: k_{y0} \approx 5-10:1$ \cite{nishikawa02},\cite{okuno05}.

\newpage
\bibliographystyle{unsrt}

\end{document}